\providecommand{\tabularnewline}{\\}
\renewcommand{\qed}{{\hfill\itshape Q.E.D.}}
\providecommand{\remarkname}{Remark}
\providecommand{\theoremname}{Theorem}
\providecommand{\corollaryname}{Corollary}
\providecommand{\lemmaname}{Lemma}
\theoremstyle{remark}
\newtheorem{rem}{\remarkname}
\theoremstyle{plain}
\newtheorem{thm}{\theoremname}
\newtheorem{cor}{\corollaryname}
\theoremstyle{definition}
\newtheorem{assumption}{Assumption}
\theoremstyle{plain}
\newtheorem{lem}{\lemmaname}
\numberwithin{equation}{section}
\numberwithin{figure}{section}
\thanks{\noindent $^\dag$ Department of Economics, Johns Hopkins University, and Nuffield College, 
rspady@jhu.edu.}
\thanks{ $\S$ School of Economics, University of Bristol, and Department of Economics, University of Melbourne, 
s.stouli@bristol.ac.uk.}
\thanks{We are grateful to Whitney Newey for his encouragements and useful comments, and to Clément Imbert and Rosa Matzkin for helpful discussions. We also thank seminar participants at Bristol, UC San Diego, 
Oxford, Lehigh, LSE, Johns Hopkins, Laval, Carlos III, Manchester, Bonn, TSE, CREST, Queensland, UPenn, Monash and at multiple conferences, including the 2020 Econometric Society World Congress. We thank Diego Lara de Andrés, Xiaoran Liang, and Thomas Stringham for excellent research assistance. 
Stouli acknowledges support from ESRC through a New Investigator Grant (ES/S012362/1), 
and from UKRI under the UK government’s Horizon Europe funding guarantee (EP/Y004159/1).
}
\date{\today}
\begin{document}
\title{Gaussian Transforms Modeling and the Estimation of Distributional
Regression Functions}
\author{Richard H. Spady$^\dag$ and Sami Stouli$^\S$}
\begin{abstract}
We propose flexible Gaussian representations for conditional cumulative
distribution functions and give a concave likelihood criterion for
their estimation. Optimal representations satisfy the monotonicity
property of conditional cumulative distribution functions, including
in finite samples and under general misspecification. We use these
representations to provide a unified framework for the flexible Maximum
Likelihood estimation of conditional density, cumulative distribution,
and quantile functions at parametric rate. Our formulation yields
substantial simplifications and finite sample improvements over related
methods. An empirical application to the gender wage gap in the United
States illustrates our framework.
\end{abstract}

\maketitle
\textsc{\small{}Keywords:}{\small{} Conditional density estimation,
conditional distributions, conditional quantiles, maximum likelihood,
misspecification, monotonicity, convexity, gender wage gap.}{\small\par}

\section{Introduction}

The flexible modeling and estimation of conditional distributions
are important for the analysis of various econometric and statistical
problems. Conditional probability density functions (PDF) are used
in the program evaluation literature where the generalized propensity
score takes the form of a conditional density (e.g., \citet{Imbens:2000}).
Conditional cumulative distribution functions (CDF) are core building
blocks in the estimation of nonseparable models with endogeneity where
the control variable takes the form of a conditional CDF (e.g., \citet{Imbens Newey 2009},
\citet{CFNSV}). Both conditional PDFs and CDFs are key components
in counterfactual analysis (e.g., \citet{DiNardoetal}, \citet{Chernozhukov Fern Melly 2013}).

For a continuous outcome variable $Y$ and a vector of explanatory
variables $X$, difficulties arise in the formulation of a flexible
model and in the choice of a loss function for the estimation of the
conditional PDF or CDF of $Y$ given $X$. First, a flexible class
of models may include elements that do not satisfy the defining properties
of PDFs (being positive and integration to one) and CDFs (monotonicity
and range in the unit interval) at each value of $X$. This allows
the chosen loss function to select a model that does not satisfy these
properties in finite samples and/or under misspecification.\footnote{These problems have motivated the development of a variety of post-estimation
methods to correct initial estimators (e.g., \citet{HM:2003}, \citet{GHI:2003},
\citet{Chernozhukov Fern Gali 2010}, \citet{HLL:2020}).} Second, maximum likelihood (ML) formulations are often difficult
to implement because nonconcave and/or unbounded likelihoods naturally
arise in the context of flexible modeling, a fact that has motivated
the development of alternatives to ML.\footnote{\citet[Chapter 1.7.4]{Vapnik:1999} motivates the development of alternatives
to ML by the ``narrowness of the ML method'', which is illustrated
by the unbounded likelihood arising for Gaussian mixture models. The
robust approach of \citet{Huber:1981} provides a convex alternative
to the use of Gaussian likelihoods for the concomitant estimation
of location and scale parameters (cf. \citet{Owen:2007} and \citet{SS:2018b}
for a discussion). Section \ref{subsec:Discussion} further elaborates
on this issue.} Third, alternative nonparametric approaches such as kernel regression
suffer from the curse of dimensionality, a limitation that restricts
their use in practice. 

One approach to address these difficulties is to specify a flexible
class of models that satisfy the defining properties of PDFs and CDFs
by construction. Sieve ML approaches specify conditional PDFs as scaled
positive transformations of linear combinations of known functions
of $Y$ and $X$, and use the implied log-likelihood function for
estimation (e.g., \citet{Stone:1994}). Support vector methods specify
conditional PDFs as weighted averages of nonnegative kernel functions
of $Y$ and $X$, with weights restricted to be nonnegative and selected
so that the implied representation is as close as possible to the
empirical data distribution (\citet{Vapnik:1999}). Another approach
is to focus on flexible conditional CDF modeling, while discarding
the monotonicity requirement. Distribution regression (\citet{ForesiPerrachi}, \citet{Chernozhukov Fern Melly 2013})
specifies each level of the CDF of $Y$ given $X$ as a known CDF
transformation of a linear combination of the components of $X$.
The conditional CDF is then estimated at each $Y$ value by a sequence
of binary outcome ML estimators.

In this paper we take a different approach by formulating a flexible
class of Gaussian representations for conditional CDFs, instead of
modeling conditional PDFs or CDFs directly. We first expand the range
of the conditional CDF to the real line by application of a Gaussian
quantile transform, and then specify the resulting object as a linear
combination of known functions of $Y$ and $X$. The coefficients
in this linear combination are characterized by a globally concave
likelihood-based loss function that discards nonmonotone models within
the specified class, and selects the optimal element according to
the Kullback-Leibler Information Criterion (KLIC; \citet{Akaike:1973},
\citet{White:1982}). Implied models for conditional PDFs and CDFs
are flexible and can be estimated efficiently by ML under correct
specification. We provide estimation and inference results allowing
for misspecification, and we derive a dual formulation for our estimator
that we use for implementation.

We make four main contributions to the existing literature. First,
we reformulate the problems of modeling and estimating conditional
PDFs and CDFs in terms of conditional CDF representations with range
the real line. This enables us to specify a flexible class of linear
models for these representations. 
Compared to sieve ML formulations,
conditional PDF models implied by our representations preserve flexibility
while avoiding the need for a scaling factor that in general cannot
be calculated in closed-form. 
Compared to support vector methods that restrict
model coefficients to be nonnegative, our ML approach avoids discarding
potentially more accurate valid approximations (\citet{WGSVVW:1999}).

Second, we give an information-theoretic criterion for the selection
of a globally monotone model within the class of Gaussian representations
in linear form. Under general misspecification, this formulation characterizes
quasi-Gaussian representations that correspond to well-defined KLIC
optimal conditional PDF and CDF approximations to the true data probability
distribution. These approximations satisfy the defining properties
of PDFs and CDFs at each value of $X$, both in finite samples with
probability approaching one and in the population. Compared to the
distribution regression criterion, our approach allows for the global
approximation of conditional CDFs and affords KLIC optimality for
the selected model. 
Compared to sieve ML formulations that specify positive conditional
PDFs, our approach relies on the loss function to rule out conditional
PDFs that are negative or zero with positive probability.

Third, we provide a unified approach to the flexible ML estimation
of conditional PDFs and CDFs at parametric rate. Flexibility is obtained
without the inclusion of infinite-dimensional parameters, in this
way going beyond distribution and quantile regression (\citet{Koenker:Bassett1978}),
as well as parametric location-scale formulations (\citet{He:1997},
\citet[Section 2]{SS:2018a}, \citet{MachadoSantosSilva:2019}). Function-valued
parameters lead to slower than root-$n$ conditional PDF estimation
(\citet{RotheWied:2020}), and location-scale models restrict the
shape of conditional distributions. Our approach alleviates the curse
of dimensionality and hence also provides an alternative to kernel-based
methods for the nonparametric estimation of conditional distributions
(e.g., \citet{LR:2007}). Our approach further applies to conditional
quantile function (CQF) estimation and is thus also related to the
kernel-based estimator of \citet{Matzkin:2003} for nonseparable models.
In contrast, we propose a parametric formulation for their flexible
ML estimation, which also facilitates the imposition of shape constraints
from economic theory.

Fourth, we derive a dual formulation for our ML estimator that demonstrates
considerable computational benefits. Compared to distribution and
quantile regression, the dual formulation provides a convex programming
problem (\citet{BV:2004}) for the one-step estimation of conditional
PDFs and CDFs at each sample points. Compared to dual regression and
its generalization (\citet{SS:2018a}), we find that  
the dual formulation has the important advantage of being a mathematical
programming problem with linear constraints.

Taken together, these contributions define a novel, unified framework
for flexible distributional regression analysis. Our framework yields
substantial simplifications over related methods, while providing
all the theoretical guarantees afforded by ML. In numerical simulations,
we find that these features translate into largely improved finite
sample performance compared to kernel methods, the Matzkin estimator,
distribution and quantile regression. A distributional analysis of
the gender wage gap in the United States illustrates the benefits
of our methods for empirical practice. 
Our methods extend to a variety of settings, including Logistic transform regression models, 
mixed discrete-continuous outcome distributions, and multiple outcomes.

Section \ref{sec:Section2} introduces our modeling framework.
Section \ref{sec:Section3} gives results under misspecification.
Section \ref{sec:Section4} contains estimation and inference results,
and duality theory is derived in Section \ref{sec:Section5}. Section
\ref{sec:Section6} illustrates our methods, Section \ref{sec:Section7}
gives extensions and Section \ref{sec:Section8} concludes. Proofs
of Theorems \ref{thm:Thm1}-\ref{thm:Thm2} and \ref{thm:Thm6} are
given in the Appendix. The Supplemental Material (\citet{SS:2025}) contains proofs
of Theorems \ref{thm:Thm3}-\ref{thm:Thm5}, technical results,
implementation details, and results of numerical simulations.

\section{Gaussian Transforms Modeling\label{sec:Section2}}

Let $Y$ be a continuous outcome variable and $X$ a vector of explanatory
variables. A transformation to Gaussianity of the conditional CDF
$F_{Y|X}(Y|X)$ of $Y$ given $X$ occurs by application of the Gaussian
quantile function $\Phi^{-1}$,
\begin{equation}
e=\Phi^{-1}(F_{Y\mid X}(Y\mid X))\equiv g(Y,X),\label{eq:eyrep}
\end{equation}
where the resulting Gaussian Transform (GT) $e$ is a zero mean and
unit variance Gaussian random variable and is independent from $X$,
by construction. With $y\mapsto F_{Y\mid X}(y|X)$ strictly increasing,
the corresponding map $y\mapsto g(y,X)$ is also strictly increasing,
with well-defined inverse denoted $e\mapsto g^{-1}(e,X)$. 

Important statistical objects such as conditional PDFs, CDFs and CQFs
can be expressed as known functionals of $g(Y,X)$. The conditional
PDF of $Y$ given $X$ is
\[
f_{Y\mid X}(Y\mid X)=\phi(g(Y,X))\{\partial_{y}g(Y,X)\},\quad\partial_{y}g(Y,X)\equiv\frac{\partial g(Y,X)}{\partial y},
\]
where $e\mapsto\phi(e)$ is the Gaussian PDF and $\partial_{y}g(y,x)$
is a partial derivative, and the conditional CDF and CQF of $Y$ given
$X$ are
\[
F_{Y\mid X}(Y\mid X)=\Phi(g(Y,X)),\quad Q_{Y\mid X}(u\mid X)=g^{-1}(\Phi^{-1}(u),X),\quad u\in(0,1),
\]
respectively. The GT $g(Y,X)$ thus constitutes a natural modeling
object in the context of distributional regression models for $f_{Y|X}(Y|X)$,
$F_{Y|X}(Y|X)$, and $Q_{Y|X}(u|X)$. We refer to these objects as
the `Distributional Regression Functions' (DRF).

In this paper we consider the class of conditional CDFs with Gaussian
representation $e=g(Y,X)$ in linear form, where $g(Y,X)$ is specified
as a linear combination of known transformations of $Y$ and $X$.
Expanding the range of conditional CDFs from the unit interval to
the real line allows for the formulation of linear yet flexible representations.
The implied models for DRFs are parsimonious
and able to capture complex features of the entire statistical relationship
between $Y$ and $X$. In particular, these models allow for nonlinearity
and nonseparability of this relationship. 

\subsection{Gaussian representations in linear form}

Let $W(X)$ be a $K\times1$ vector of known functions of $X$ and
$S(Y)$ a $J\times1$ vector of known functions of $Y$, and write
$\otimes$ for their Kronecker product. Assume that $W(X)$ includes
an intercept, i.e., has first component $1$, and that $S(Y)$ has
first two components $(1,Y)'$ and derivative $dS(Y)/dy=s(Y)$, a
vector of functions continuous on $\mathbb{R}$. Given a random vector
$(Y,X')'$ with support $\mathcal{YX}=\mathcal{Y}\times\mathcal{X}$,
where $\mathcal{Y}=\mathbb{R}$ and $\mathcal{Y}$ and $\mathcal{X}$
are the marginal supports of $Y$ and $X$, respectively, our first
assumption defines a GT regression model.

\begin{assumption}For some $b_{0}\in\mathbb{R}^{JK}$, a GT regression
model takes the form
\begin{equation}
e=b_{0}'T(X,Y),\quad\partial_{y}\{b_{0}'T(X,Y)\}=b_{0}'t(X,Y)>0,\quad e\mid X\sim N(0,1),\label{eq:e}
\end{equation}
where $T(X,Y)\equiv W(X)\otimes S(Y)$ and $t(X,Y)\equiv W(X)\otimes s(Y)$.\label{ass:Ass1}\end{assumption}

The GT $g(Y,X)$ in (\ref{eq:eyrep}) is specified as a linear combination
of the known functions $T(X,Y)$. The linear form of $e$ is preserved
by the derivative function $b_{0}'t(X,Y)$ which is simultaneously
specified as a linear combination of $t(X,Y)$. For bounded $W(X)$
and $s(Y)$ with good approximation properties such as splines or wavelets, 
this linear specification is a flexible model for (\ref{eq:eyrep}).\footnote{If some component of $W(X)$ or $s(Y)$ has range the real line over 
$\mathcal{X}$ or $\mathcal{Y}$, respectively, then its coefficient 
in (2.2) must be zero since there is no $b_{0}\neq0$ such that $b_{0}'t(X,Y)>0$ 
in that case. For vector $X$, an example of a flexible structure 
takes $W(X)=W_{1}(X_{1})\varotimes\cdots\varotimes W_{\dim(X)}(X_{\dim(X)})$, 
where $W_{l}(X_{l})$ includes an intercept and a vector of bounded 
approximating functions, $l=1,\ldots,\dim(X)$.}$^{,}$\footnote{
When $b_0'T(X,Y)$ is viewed as an approximation, condition $b_0't(X,Y) > 0$ 
in (\ref{eq:e}) restricts the set of admissible approximating functions, 
without compromising approximation quality under regularity conditions and for $W(X)$ rich enough. 
For $y \mapsto g(y,x)$ uniformly smooth in $x$ and with $S(Y)$ 
chosen from a suitable class of approximating functions, 
there is $\beta(X)$ with $\beta(X)'s(Y)>0$ and uniform approximation error 
of $\beta(X)'S(Y)$ for $g(Y,X)$ of the same order as that of an unconstrained approximation. 
This is because the approximation errors for $y\mapsto g(y,X)$ of 
each approximation type are of the same order for each $x$ 
(e.g., \citet{DeVore:1977} for splines and \citet{AY:1992} for wavelets), 
and hence uniformly over $x$ if $y\mapsto g(y,x)$ is smooth uniformly in $x$.
Moreover, if $W(X)$ is mean-square spanning  
and $S(Y)$ has finite conditional second moment, 
then there is $b$ such that $E[\{ b'T(X,Y) - \beta(X)'S(Y) \}^2] \rightarrow 0$ as $K \rightarrow \infty$. 
Therefore,  for $K$ large enough, $b't(X,Y)>0$, by $\beta(X)'s(Y)>0$, and the orders of constrained and 
unconstrained approximations being the same 
now implies that taking $b'T(X,Y)$ with $b't(X,Y)>0$  in (\ref{eq:e}) preserves approximation quality.
}
When the nonconstant components of $W(X)$ and $s(Y)$ are specified 
as nonnegative spline functions (\citet{CS:1966}, \citet{Ramsay:1988}), 
we refer to the implied representations as `Spline-Spline models'. 
We note that we do not impose $b_{0}>0$.

Model (\ref{eq:e}) corresponds to a well-defined probability distribution
for $Y$ given $X$ with GT in linear form. The implied forms of the
conditional PDF and CDF are
\begin{equation}
f_{Y\mid X}(Y\mid X)=\phi(b_{0}'T(X,Y))\{b_{0}'t(X,Y)\},\quad F_{Y\mid X}(Y\mid X)=\Phi(b_{0}'T(X,Y)).\label{eq:DRF1}
\end{equation}
The resulting log conditional density forms the basis of our approach:
\begin{equation}
\log f_{Y\mid X}(Y\mid X)=-\frac{1}{2}[\log(2\pi)+\{b_{0}'T(X,Y)\}^{2}]+\log(b_{0}'t(X,Y)).\label{eq:log PDF}
\end{equation}
Viewed as an equation in the GT $b_{0}'T(X,Y)$,  (\ref{eq:log PDF})  
defines
the problem of characterizing GTs of the form (\ref{eq:e}), ruling
out implied conditional PDFs that are negative or zero with positive
probability, and hence also nonmonotone conditional CDFs. 

Formulation (\ref{eq:log PDF})  
differs from sieve ML approaches
that specify conditional PDFs as $f_{Y|X}(Y|X)=\varphi(b_{0}'T(X,Y))/\int\varphi(b_{0}'T(X,y))dy$
with $\varphi(\cdot)$ a nonnegative function such as the exponential
or the square functions. The log conditional PDF is
\begin{equation}
\log f_{Y\mid X}(Y\mid X)=\log\varphi(b_{0}'T(X,Y))-\log\int\varphi(b_{0}'T(X,y))dy,\label{eq:sieve ML}
\end{equation}
with scaling factor $\int\varphi(b_{0}'T(X,y))dy$ not available in
closed-form in general, and with implied ML first-order conditions
that are nonlinear in $b_{0}'T(X,Y)$. In contrast, (\ref{eq:log PDF})
avoids the scaling factor and yields first-order conditions linear
in $b_{0}'T(X,Y)$ (cf. (\ref{eq:FOCs}) and (\ref{eq:MMrep}) below). 
Formulation (\ref{eq:log PDF}) also 
differs from support vector methods that define the problem
of characterizing conditional PDFs as solving the equation
\begin{equation}
F_{YX}(y,x)=\int_{-\infty}^{x}\int_{-\infty}^{y}f(s\mid t)dF_{X}(t)ds\label{eq:svm int eq}
\end{equation}
for $f$, where $f$ is specified as a linear combination of nonnegative
kernel functions of $Y$ and $X$ with nonnegative coefficients, which
is overly restrictive  (cf. Section 5 in \citet{WGSVVW:1999}). 
Compared to both (\ref{eq:sieve ML}) and (\ref{eq:svm int eq}),
shape constraints from economic theory are also easier to impose using
(\ref{eq:log PDF}), with GT in closed-form.\footnote{Shape constraints often apply to CDFs or CQFs (e.g., \citet{BKM:2014}, 
\citet{CW:2017}), and they easily translate into restrictions on 
the shape of GTs. For example, under Assumption \ref{ass:Ass1}, a 
nonseparable demand model $Y=h(X,e)$ is nonincreasing in prices 
$X$ if the linear constraints $\partial_{x}t(X,Y)'b_{0}\geq0$ hold 
in (\ref{eq:log PDF}). By $\partial_{x}Q_{Y|X}(u|X)=-\left.\partial_{x}F_{Y|X}(y_{0}|X)/\partial_{y}F_{Y|X}(y_{0}|X)\right|_{y_{0}=Q_{Y|X}(u|X)}$ 
and $\partial_{y}F_{Y|X}(Y|X)>0$, nonincreasing demand is implied 
by $\partial_{x}F_{Y|X}(Y|X)=\phi(b_{0}'T(X,Y))\{\partial_{x}t(X,Y)'b_{0}\}\geq0$, 
and hence by $\partial_{x}t(X,Y)'b_{0}\geq0$.}
\begin{rem}
For $W(X)=1$, models for marginal PDF and CDF of $Y$ arise as a
particular case of (\ref{eq:e}). 
For $J=2$, the Jacobian term $b_{0}'t(X,Y)$ does not depend on $Y$ which restricts $F_{Y\mid X}(Y|X)$
to Gaussianity at all values of $X$.
\begin{rem}
Our modeling framework also applies when $\mathcal{Y}$ is bounded
since $Y$ can always be monotonically transformed to a random variable
with support expanded over the real line (cf. Remark \ref{Rk:Boundedsupp} in Section
\ref{sec:Implementation} of the Supplemental Material).

\end{rem}
\end{rem}

\subsection{Characterization}

For $\Theta=\{b\in\mathbb{R}^{JK}:\Pr[b't(X,Y)>0]=1\}$, we define
the population objective function
\begin{equation}
Q(b)=E\left[-\frac{1}{2}\left(\log(2\pi)+\{b'T(X,Y)\}^{2}\right)+\log\left(b't(X,Y)\right)\right],\quad b\in\Theta.\label{eq:popML}
\end{equation}
This criterion introduces a natural logarithmic barrier function (e.g.,
\citet{BV:2004}) in the form of the log of the Jacobian term $b't(X,Y)$.
This is important because the monotonicity requirement for the conditional
CDF is imposed directly by the objective in the definition of the
effective domain of $Q(b)$, i.e., the region in $\mathbb{R}^{JK}$
where $Q(b)>-\infty$. An equivalent interpretation is that the effective
domain of $Q(b)$ contains the set of parameter values that are admissible
for GT regression models with positive conditional PDF, by virtue
of the presence of both the Gaussian density function and the logarithmic
barrier function in (\ref{eq:popML}).

We characterize the shape and properties of $Q(b)$ under the following
assumption.

\begin{assumption}$E[||T(X,Y)||^{2}]<\infty$, $E[||t(X,Y)||^{2}]<\infty$,
and the smallest eigenvalue of $E[T(X,Y)T(X,Y)']$ is bounded away
from zero.\label{ass: Ass2}\end{assumption}

These conditions restrict the set of dictionaries we allow for, as
well as the probability distribution of $Y$ conditional on $X$.
In particular, because $T(X,Y)$ includes $Y$, Assumption \ref{ass: Ass2}
requires $Y$ to have finite second moment. The moment conditions
in Assumption \ref{ass: Ass2} are also sufficient for the second-derivative
matrix of $Q(b)$,
\begin{equation}
\Gamma(b)\equiv E\left[\gamma(Y,X,b)\right],\quad\gamma(Y,X,b)\equiv-T(X,Y)T(X,Y)'-\frac{t(X,Y)t(X,Y)'}{\{b't(X,Y)\}^{2}},\label{eq:hessian}
\end{equation}
to exist for each $b\in\Theta$. Nonsingularity of $E[T(X,Y)T(X,Y)']$
guarantees that $\Gamma(b)$ is negative definite, and hence that
$Q(b)$ is strictly concave and has a unique maximum.
\begin{thm}
\label{thm:Thm1}If Assumptions \ref{ass:Ass1}-\ref{ass: Ass2} hold
then $Q(b)$ is strictly concave and has a unique maximum in $\Theta$
at $b_{0}$.
\end{thm}
By standard ML theory (e.g., \citet{Newey:McFadden:1994}, p. 2124),
this result implies identification of $b_{0}$, with $b_{0}$ being
the only solution to the first-order conditions
\begin{equation}
E\left[\psi(Y,X,b_{0})\right]=0,\;\psi(Y,X,b)\equiv-T(X,Y)(b'T(X,Y))+\frac{t(X,Y)}{b't(X,Y)},\;b\in\Theta.\label{eq:FOCs}
\end{equation}
When $Y|X\sim N(0,1)$, an interesting connection with the the Stein
equation for standard Gaussian random variables arises (e.g., Lemma
2.1 in \citet{Chen Gold Shao 2010}). In that case, $b_{0}'T(X,Y)=Y$
and $b_{0}'t(X,Y)=1$ satisfy the conditions of model (\ref{eq:e}).
Theorem \ref{thm:Thm1} then implies that $b_{0}=(0,1,0_{JK-2})'$
uniquely solves (\ref{eq:FOCs}):
\begin{align*}
E\left[\psi(Y,X,b_{0})\right]=E\left[-T(X,Y)Y+t(X,Y)\right] & =E[W(X)\otimes\{-S(Y)Y+s(Y)\}]\\
 & =E[W(X)]\otimes E[-S(Y)Y+s(Y)]=0,
\end{align*}
since $E[-S(Y)Y+s(Y)]=0$ has the form of the Stein equation, and
hence holds for any vector of continuously differentiable functions
$S(Y)$ with $E[|s_{j}(Y)|]<\infty$, $j\in\{1,\ldots,J\}$. In contrast,
(\ref{eq:FOCs}) holding with $b_{0}\neq(0,1,0_{JK-2})'$ will indicate
deviations of $Y$ from Gaussianity and independence from $X$. Since
$b_{0}$ satisfies (\ref{eq:e}), conditions (\ref{eq:FOCs}) thus
characterize a transformation of $Y$ to Gaussianity at each $X$ value. 
Hence, Theorem \ref{thm:Thm1} has the following testable implications for model (\ref{eq:e}).
\begin{cor}
\label{cor:Cor1}If there exists $b_{0}$ such that model (\ref{eq:e})
holds then, for any vectors of functions $\overline{W}(X)$ and of
continuously differentiable functions $\overline{S}(e)$ such that
$\overline{T}(X,e)\equiv\overline{W}(X)\otimes\overline{S}(e)$
and $\overline{t}(X,e)\equiv\partial_{e}\overline{T}(X,e)$ satisfy
Assumption \ref{ass: Ass2} with $T=\overline{T}$, $t=\overline{t}$
and $Y=e$, the following hold: (i) $(0,1,0_{JK-2})'$ uniquely solves
\[
\max_{b\in\overline{\Theta}}E[\log(\phi(b'\overline{T}(X,e))\{b'\overline{t}(X,e)\})],\quad\overline{\Theta}\equiv\{b\in\mathbb{R}^{JK}:\Pr[b'\overline{t}(X,e)>0]=1\},
\]
and (ii) the `Stein score' conditions $E[-\overline{T}(X,e)e+\overline{t}(X,e)]=0$
hold. 
\end{cor}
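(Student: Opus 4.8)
The plan is to read the entire statement off Theorem \ref{thm:Thm2} applied to the Gaussian transform $e=g(Y,X)$ in place of the outcome. The starting observation is that if model (\ref{eq:e})-(\ref{eq:e'}) holds for some $b_{0}$, then by construction $e=b_{0}'T(X,Y)$ satisfies $e\mid X\sim N(0,1)$, so $e$ is a standard Gaussian random variable that is independent of $X$. Thus $(e,X)$ is exactly an instance of the `baseline case' described immediately after Theorem \ref{thm:Thm2}, now with $e$ in the role of the continuous outcome and $(\widetilde{W},\widetilde{S})$ in the role of the dictionary.

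For part (i), I would consider the GT regression model of the form (\ref{eq:e})-(\ref{eq:e'}) built from $(\widetilde{W},\widetilde{S})$ with $e$ substituted for $Y$, namely $e=b'\widetilde{T}(X,e)$, $e\mid X\sim N(0,1)$, with derivative $b'\widetilde{t}(X,e)>0$. Under the maintained structural restrictions on the dictionary (a unit first component of $\widetilde{W}$ and first two components $(1,e)'$ of $\widetilde{S}$), the parameter value $b=(0,1,0_{JK-2})'$ yields $b'\widetilde{T}(X,e)=e$ and $b'\widetilde{t}(X,e)=1>0$; hence $(0,1,0_{JK-2})'\in\widetilde{\Theta}$ and this model holds at $(0,1,0_{JK-2})'$. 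Since Assumption \ref{Ass: ass1} is assumed to hold with $T=\widetilde{T}$, $t=\widetilde{t}$ and $Y=e$, Theorem \ref{thm:Thm2} applies verbatim to this model: the objective displayed in the statement is strictly concave on $\widetilde{\Theta}$ and has a unique maximizer, located at the parameter where the model holds, i.e., at $(0,1,0_{JK-2})'$. This is precisely (i).

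For part (ii), I would use the first-order characterization (\ref{eq:FOCs}) implied by Theorem \ref{thm:Thm2}: for the transformed model, $(0,1,0_{JK-2})'$ is the unique zero of $E[\psi(e,X,b)]$ with $\widetilde{T},\widetilde{t}$ in place of $T,t$. Substituting $b=(0,1,0_{JK-2})'$, so that $b'\widetilde{T}(X,e)=e$ and $b'\widetilde{t}(X,e)=1$, reduces $E[\psi(e,X,b)]=0$ to $E[-\widetilde{T}(X,e)e+\widetilde{t}(X,e)]=0$, which is the asserted Stein score condition. Alternatively, this identity can be obtained directly as in the display following Theorem \ref{thm:Thm2}: by independence of $e$ and $X$, $E[-\widetilde{T}(X,e)e+\widetilde{t}(X,e)]=E[\widetilde{W}(X)]\otimes E[-\widetilde{S}(e)e+\widetilde{s}(e)]$, and $E[-\widetilde{S}(e)e+\widetilde{s}(e)]=0$ is the Stein equation for a standard Gaussian, valid because Assumption \ref{Ass: ass1} ensures $E[|\widetilde{s}_{j}(e)|]<\infty$ for each $j$.

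I do not anticipate a substantive obstacle: the only real content is verifying that the hypotheses of Theorem \ref{thm:Thm2} carry over to the transformed problem, which they do once one notes that $e\mid X\sim N(0,1)$ makes $e$ a standard Gaussian independent of $X$. The delicate points are purely bookkeeping — confirming that the tilded dictionary retains the structural features needed to place $(0,1,0_{JK-2})'$ in $\widetilde{\Theta}$ and to realize the model at that value, and checking that Assumption \ref{Ass: ass1} supplies the integrability required for Stein's identity in the direct verification.
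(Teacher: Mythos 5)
Your proposal is correct and follows essentially the same route as the paper: the paper's own proof simply invokes Theorem \ref{thm:Thm2} together with the baseline-case argument in the main text, using that $e=b_{0}'T(X,Y)\mid X\sim N(0,1)$ makes $e$ a standard Gaussian independent of $X$, exactly as you do. Your additional bookkeeping (verifying $(0,1,0_{JK-2})'\in\widetilde{\Theta}$, that the transformed model holds there, and that the first-order conditions reduce to the Stein score identity) is precisely the content the paper leaves implicit.
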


\subsection{Discussion\label{subsec:Discussion}}

The general modeling of $F_{Y|X}(Y|X)$ can be done indirectly by
specifying a representation for $Y$ given $X$,
\begin{equation}
Y=H(X,e),\quad e\mid X\sim F_{e},\label{eq:e form}
\end{equation}
with $H(X,e)$ strictly increasing in $e$, a random variable with
distribution $F_{e}$ and independent of $X$. The specification of
$H$ and $F_{e}$ then determines the form of $F_{Y|X}(Y|X)$:
\begin{equation}
F_{Y\mid X}(y\mid X)=F_{e}(H^{-1}(y,X)),\quad y\in\mathbb{R},\label{eq:eform2}
\end{equation}
where $y\mapsto H^{-1}(y,X)$ denotes the inverse function of $e\mapsto H(X,e)$.
In this approach, for a specified distribution $F_{e}$ the object
of modeling is the function $H(X,e)$. 

In Econometrics, (\ref{eq:e form}) is often characterized as `nonlinear
and nonseparable' in order to draw attention to the potentially complex
$Y$\textendash $X$ structure at constant $e$ and the lack of additive
structure in $e$ (e.g., \citet{Chesher:2003}, \citet{Matzkin:2003}).
These are essential features of $H$ that allow for the shape of the
conditional distribution of $Y$ to vary across values of $X$. An
alternative approach to (\ref{eq:e form})-(\ref{eq:eform2}) that
preserves nonlinearity and nonseparability is to model $F_{Y|X}(Y|X)$
directly as
\begin{equation}
F_{Y\mid X}(y\mid X)=F_{e}(g(y,X)),\quad y\in\mathbb{R},\label{eq:Y form}
\end{equation}
for some strictly increasing function $y\mapsto g(y,X)$. In our approach,
the object of modeling is the quantile transform $g(X,Y)=F_{e}^{-1}(F_{Y|X}(Y|X))$
which has distribution $F_{e}$ and is independent of $X$ by construction,
for some specified quantile function $F_{e}^{-1}$.

The difference between modeling the statistical relationship between $Y$ and $X$
according to (\ref{eq:eform2}) or (\ref{eq:Y form})
is not innocuous. With $f_{e}$ denoting the PDF of $e$, the definition
of the conditional PDF of $Y$ given $X$ implied by the indirect
approach (\ref{eq:eform2}),
\begin{equation}
f_{Y\mid X}(y\mid X)=f_{e}(H^{-1}(y,X))\{\partial_{y}H^{-1}(y,X)\},\quad y\in\mathbb{R},\label{eq:dens_eform}
\end{equation}
involves the inverse function of the modeling object $H$. In general
this inverse function does not have a closed-form expression, except
for some simple cases like the location-scale model $H(X,e)\equiv X'\beta_{1}+(X'\beta_{2})e$
with $X'\beta_{2}>0$. Furthermore, expression (\ref{eq:dens_eform})
gives rise to a nonconcave likelihood for even the simplest specifications
of $H$ and $F_{e}$, including the location and location-scale models
with Gaussian $e$ (\citet{Owen:2007}, \citet{SS:2018b}). In contrast,
a major advantage of representation (\ref{eq:Y form}) is that the
corresponding expression for $f_{Y|X}(Y|X)$ circumvents the inversion
step since
\[
f_{Y\mid X}(Y\mid X)=f_{e}(g(Y,X))\{\partial_{y}g(Y,X)\}.
\]
This formulation allows for the direct specification of flexible models
for $g(Y,X)$ that are characterized by a concave likelihood. Hence,
considerable computational advantages accrue in estimation when $e=g(Y,X)$
can be computed in closed-form, as further demonstrated by the duality
analysis in Section \ref{sec:Section5}. This formulation also leads
to well-defined representations for $F_{Y|X}(Y|X)$ under misspecification.

\section{Quasi-Gaussian Representations under Misspecification\label{sec:Section3}}

We study the properties of quasi-Gaussian representations for $F_{Y|X}(Y|X)$
that are generated by maximization of the objective $Q(b)$ under
general misspecification, i.e., when there is no representation of
the form (\ref{eq:e}) that satisfies either the Gaussianity or the
independence properties, or both. We find that the implied approximations
for the true DRFs are well-defined
and KLIC optimal.

\subsection{Existence and uniqueness}

Under Assumption \ref{ass: Ass2} the objective function $Q(b)$ is continuous and
strictly concave over $\Theta$, and hence admits at most one maximizer. 
Assumption \ref{ass: Ass2} is also sufficient for the level sets of $Q(b)$ to be compact, and hence for
existence of a maximizer. Compactness of the level sets 
is a consequence of the explosive behavior of $Q(b)$ at the boundary of $\Theta$. 
By the quadratic term $-\{b'T(X,Y)\}^{2}$
being negative, as $b$ approaches the boundary of $\Theta$ the log
Jacobian term diverges to $-\infty$, and hence so does $-\{b'T(X,Y)\}^{2}/2+\log\{b't(X,Y)\}$
on a set with positive probability. 
This is sufficient to conclude that the objective function $Q(b)$
diverges to $-\infty$, and hence that there exists at least one maximizer
to $Q(b)$ in $\Theta$, denoted $b^{*}$.

Under misspecification, to the maximizer $b^{*}$ corresponds the
quasi-Gaussian representation $e^{*}=T(X,Y)'b^{*}\equiv g^{*}(Y,X)$,
where $g^{*}(Y,X)$ is an element of the set
\[
\mathcal{E}\equiv\left\{ m:\Pr[m(Y,X)=b'T(X,Y)]=1\right\} 
\]
with $b\in\Theta$. By definition of $\Theta$, $y\mapsto b'T(X,y)$
is strictly increasing for each $b\in\Theta$ with probability one,
and hence each $m\in\mathcal{E}$ has a well-defined inverse function.
We note that nonsingularity of $E[T(X,Y)T(X,Y)']$ implies that $g^{*}(Y,X)$
is  unique in $\mathcal{E}$, i.e., there is no $m=g^{*}$ in $\mathcal{E}$
with $m(y,x)=b'T(x,y)$ everywhere and $b\neq b^{*}$.

\begin{sloppy}Define the range of $y\mapsto\Phi(m(y,x))$ as $\mathcal{U}_{x}(m)\equiv\{u\in(0,1):\Phi(m(y,x))=u\textrm{ for some }y\in\mathbb{R}\}$,
for $m\in\mathcal{E}$ and $x\in\mathcal{X}$. To the quasi-Gaussian
representation $g^{*}(Y,X)$ correspond a conditional PDF approximation,
defined as
\begin{equation}
f^{*}(Y,X)\equiv\phi(g^{*}(Y,X))\{\partial_{y}g^{*}(Y,X)\},\label{eq:f*(Y,X)}
\end{equation}
and conditional CDF and CQF approximations, defined as
\[
F^{*}(Y,X)\equiv\Phi(g^{*}(Y,X)),\quad Q^{*}(u,X)\equiv g^{*-1}(\Phi^{-1}(u),X),\quad u\in\mathcal{U}_{X}(g^{*}),
\]
where $e\mapsto g^{*-1}(e,X)$ denotes the inverse of $y\mapsto g^{*}(y,X)$.
These representations are unique in, respectively, the following spaces
\begin{align*}
\mathcal{D} & \equiv\left\{ f:\Pr[f(Y,X)=\phi(m(Y,X))\{\partial_{y}m(Y,X)\}]=1\right\} \\
\mathcal{F} & \equiv\left\{ F:\Pr[F(Y,X)=\Phi(m(Y,X))]=1\right\} \\
\mathcal{Q} & \equiv\left\{ Q:\Pr[Q(u,X)=m^{-1}(\Phi^{-1}(u),X)\;\textrm{for all}\;u\in\mathcal{U}_{X}(m)]=1\right\} 
\end{align*}
with $m\in\mathcal{E}$, and where $e\mapsto m^{-1}(e,X)$ denotes the
inverse of $y\mapsto m(y,X)$. Therefore, the DRF approximations are well-defined, with positive conditional
PDF and monotone conditional CDF and CQF approximations.\par\end{sloppy}
\begin{thm}
\label{thm:Thm2}If Assumption \ref{ass: Ass2} 
holds then there exists a unique maximum $b^{*}$ to $Q(b)$ in $\Theta$.
Consequently, the quasi-Gaussian representation $g^{*}(Y,X)$ and
the corresponding approximations for the DRFs are unique.
\end{thm}

\subsection{KLIC optimality}

When the elements of $\mathcal{D}$ are proper conditional PDFs, a
further motivation for the use of $Q(b)$ is the KLIC optimality of
the implied DRFs under misspecification (\citet{Akaike:1973}, \citet{White:1982}).

Since each $f\in\mathcal{D}$ satisfies $f>0$ by construction, an
element $f\in\mathcal{D}$ is a proper conditional PDF if it satisfies
$\int_{\mathbb{R}}f(y,X)dy=1$ with probability one. A necessary and
sufficient condition for this to hold is that the boundary conditions
\begin{equation}
\lim_{y\rightarrow-\infty}b'T(X,y)=-\infty,\quad\lim_{y\rightarrow\infty}b'T(X,y)=\infty,\label{eq:Boundary conditions}
\end{equation}
hold with probability one, for all $b\in\Theta$. Given a specified
dictionary such that (\ref{eq:Boundary conditions}) holds, Theorem
\ref{thm:Thm2} implies that the approximation $f^{*}(Y,X)$ in (\ref{eq:f*(Y,X)})
is the unique maximum selected by the population criterion in $\mathcal{D}$,
i.e.,
\[
f^{*}=\arg\max_{f\in\mathcal{D}}E\left[\log f(Y,X)\right],
\]
and hence that $f^{*}(Y,X)$ is the KLIC closest probability distribution
to $f_{Y|X}(Y|X)$. The corresponding $F^{*}$ and $Q^{*}$ are then
the KLIC optimal conditional CDF and CQF approximations for $F_{Y|X}(Y|X)$
and $Q_{Y|X}(u|X)$, respectively.
\begin{thm}
\label{thm:Thm3}If $E[|\log f_{Y|X}(Y|X)|]<\infty$ and (\ref{eq:Boundary conditions})
holds with probability one for all $b\in\Theta$, then $f^{*}$ is
the KLIC closest probability distribution to $f_{Y|X}(Y|X)$ in $\mathcal{D}$,
i.e., 
\[
f^{*}=\arg\min_{f\in\mathcal{D}}E\left[\log\left(\frac{f_{Y\mid X}(Y\mid X)}{f(Y,X)}\right)\right],
\]
where each $f\in\mathcal{D}$ is a proper conditional PDF. Moreover,
$f^{*}$ is related to the KLIC optimal conditional CDF $F^{*}$ in
$\mathcal{F}$ by
\[
F^{*}(y,X)=\int_{-\infty}^{y}f^{*}(t,X)dt,\quad y\in\mathbb{R},
\]
and to the well-defined inverse of $y\mapsto F^{*}(y,X)$, the KLIC
optimal CQF $u\mapsto Q^{*}(X,u)$ in $\mathcal{Q}$ with derivative
\[
\frac{\partial Q^{*}(X,u)}{\partial u}=\frac{1}{f^{*}(Q^{*}(X,u),X)}>0,\quad u\in(0,1),
\]
with probability one.
\end{thm}
Under the boundary conditions (\ref{eq:Boundary conditions}), the
set $\mathcal{F}$ is the space of conditional CDFs with Gaussian
representation in linear form, and the set $\mathcal{Q}$ is the space
of corresponding well-defined CQFs. A necessary and sufficient condition
for (\ref{eq:Boundary conditions}) is obtained, for instance, if
the limits $\lim_{y\rightarrow\pm\infty}|S_{j}(y)|$ are finite, $j\in\{3,\ldots,J\}$.
Under this maintained condition, the varying coefficients representation
of $e$ in (\ref{eq:e}),
\begin{equation}
e=\beta_{1}(X)+\beta_{2}(X)Y+\sum_{j=3}^{J}\beta_{j}(X)S_{j}(Y),\;\beta_{j}(X)=W(X)'b_{0j},\;j\in\{1,\ldots,J\},\label{eq:Varying}
\end{equation}
implies that $\beta_{2}(X)>0$ is necessary for the boundary conditions
(\ref{eq:Boundary conditions}) because otherwise $\lim_{y\rightarrow\infty}\beta(X)'S(y)$
would be finite or $-\infty$, and $\lim_{y\rightarrow-\infty}\beta(X)'S(y)$
would be finite or $\infty$, with $\beta(X)=(\beta_{1}(X),\ldots,\beta_{J}(X))'$. 
The linear term $\beta_{2}(X)Y$ in (\ref{eq:Varying}) implies that $\beta_{2}(X)>0$ 
is also sufficient for (\ref{eq:Boundary conditions}). We note
that $\beta_{2}(X)>0$ is implied by the derivative condition $\beta(X)'s(Y)=\beta_{2}(X)+\sum_{j=3}^{J}\beta_{j}(X)s_{j}(Y)>0$
if the transformations $s_{j}(Y)$, $j\in\{3,\ldots,J\}$, are specified
to be zero outside some compact region of $\mathbb{R}$,\footnote{This and the maintained assumption that $\lim_{y\rightarrow\pm\infty}|S_{j}(y)|<\infty$
are satisfied for instance if, for each $j\in\{3,\ldots,J\}$, the
transformations $S_{j}(Y)$ are defined as $S_{j}(y)\equiv\int_{-\infty}^{y}s_{j}(t)dt$,
for nonnegative spline functions $s_{j}(Y)\neq0$ on a compact subset
of $\mathbb{R}$, as $s_{j}(Y)=0$ outside this region and $S_{j}(Y)$
is then a CDF over the entire real line (\citet{CS:1966}, \citet{Ramsay:1988}).} since the derivative then reduces to $\beta_{2}(X)$ outside this
region. The boundary conditions (\ref{eq:Boundary conditions}) then
effectively hold under a location-scale restriction in the tails of
the distribution of $Y$ given $X$. We also note that (\ref{eq:Boundary conditions})
always holds for $J=2$ since the derivative condition is $\beta_{2}(X)>0$
in that particular case.

\section{Estimation and Inference\label{sec:Section4}}

\subsection{Maximum Likelihood estimation}

We assume that we observe a sample of $n$ independent and identically
distributed realizations $\{(y_{i},x_{i})\}_{i=1}^{n}$ of the random
vector $(Y,X')'$. Using the sample analog of $Q(b)$, we define the
GT regression estimator
\begin{equation}
\widehat{b}\equiv\arg\max_{b\in\Theta}n^{-1}\sum_{i=1}^{n}\left\{ -\frac{1}{2}[\log(2\pi)+\{b'T(x_{i},y_{i})\}^{2}]+\log(b't(x_{i},y_{i}))\right\} .\label{eq:ML}
\end{equation}
We derive the asymptotic properties of $\hat{b}$ under the following
assumptions.

\begin{assumption}(i) $\{(y_{i},x_{i})\}_{i=1}^{n}$ are identically
and independently distributed, and (ii) $E[||T(X,Y)||^{4}]<\infty$.\label{ass:Ass4}\end{assumption}

Assumption \ref{ass:Ass4}(i) can be replaced with the condition that
$\{(y_{i},x_{i})\}_{i=1}^{n}$ is stationary and ergodic (\citet{Newey:McFadden:1994}).
Assumption \ref{ass:Ass4}(ii) is needed for consistent estimation
of the asymptotic variance-covariance matrix of $\widehat{b}$.

\begin{sloppy}Recalling the definitions of $\gamma(Y,X,b)$ and $\Gamma(b)$
in (\ref{eq:hessian}) and $\psi(Y,X,b)$ in (\ref{eq:FOCs}), the
variance-covariance matrix of $\hat{b}$ is $\Gamma^{-1}\Psi\Gamma^{-1}/n$,
where $\Gamma\equiv\Gamma(b^{*})$ and $\Psi\equiv E[\psi(Y,X,b^{*})\psi(Y,X,b^{*})']$.
Estimators of $\Gamma$ and $\Psi$ are defined as $\widehat{\Gamma}=n^{-1}\sum_{i=1}^{n}\gamma(y_{i},x_{i},\hat{b})$
and $\widehat{\Psi}=n^{-1}\sum_{i=1}^{n}\psi(y_{i},x_{i},\hat{b})\psi(y_{i},x_{i},\hat{b})'$,
respectively. An estimator of $\Gamma^{-1}$ is any symmetric generalized
inverse $\widehat{\Gamma}^{-}$ of $\widehat{\Gamma}$. Under Assumptions
\ref{ass: Ass2} and \ref{ass:Ass4}, $\widehat{\Gamma}$ will be
nonsingular with probability approaching one (cf. Lemma \ref{lem:Nonsingular},
Section \ref{sec:Auxiliary-Results} in the Supplemental Material),
and hence $\widehat{\Gamma}^{-}$ will be the standard inverse.\par\end{sloppy}
\begin{thm}
\label{thm:Thm4}If Assumptions \ref{ass: Ass2}-\ref{ass:Ass4} hold,
then (i) there exists $\hat{b}$ in $\Theta$ with probability approaching
one; (ii) $\hat{b}\rightarrow_{p}b^{*}$; and (iii) $n^{\frac{1}{2}}(\hat{b}-b^{*})\rightarrow_{d}N(0,\Gamma^{-1}\Psi\Gamma^{-1})$.
Moreover, $\widehat{\Gamma}^{-}\widehat{\Psi}\widehat{\Gamma}^{-}\rightarrow^{p}\Gamma^{-1}\Psi\Gamma^{-1}$. 
\end{thm}
Theorem \ref{thm:Thm4}(i) shows that correct specification is not
required for existence of a globally monotone estimate $\hat{b}'T(X,Y)$
for large enough samples. Theorem \ref{thm:Thm4}(ii) holds without
compactness of $\Theta$, by concavity of the objective (e.g., \citet{Newey:McFadden:1994},
Section 2.6). Under correct specification, $\Gamma=-\Psi$ by the
information matrix equality and the estimator is efficient, with asymptotic
variance-covariance matrix $-\Gamma^{-1}$ (e.g., \citet{Newey:McFadden:1994}).
A valid method for model selection is adaptive Lasso ML (\citet{Zou:2006},
\citet{LGF:2012}, \citet{HN:2020}). The selected model is a sparse
KLIC optimal approximation for $g(Y,X)$.

\begin{rem}
Lasso penalized ML allows for the dimension of $W(X)$ to increase with sample size. 
From (\ref{eq:Varying}) it is apparent that the objective function has the multiple-index structure $E[m(W(X)'b_{1},\ldots,W(X)'b_{J},Y)]$, 
with strictly convex negative log-likelihood $(z_{1},\ldots,z_{J})\mapsto m(z_{1},\ldots,z_{J},Y)$.
High-dimensional estimation results and penalty selection methods
in \citet{CS:2021} apply to this case.
\end{rem}

\subsection{Estimation of DRFs}

An estimator for $g^{*}(y,x)$ is $\widehat{g}^{*}(y,x)\equiv T(x,y)'\widehat{b}$,
and estimators for DRFs are
\[
\widehat{f}^{*}(y,x)\equiv\phi(\widehat{g}^{*}(y,x))\{\partial_{y}\widehat{g}^{*}(y,x)\},\quad\widehat{F}^{*}(y,x)\equiv\Phi(\widehat{g}^{*}(y,x)),\quad(y,x)\in\mathcal{YX},
\]
and
\[
\widehat{Q}^{*}(x,u)\equiv\{y\in\mathbb{R}:\Phi(\widehat{g}^{*}(y,x))=u,\;\partial_{y}\widehat{g}^{*}(y,x)>0\},\quad x\in\mathcal{X},\quad u\in\mathcal{U}_{x}(g^{*}).
\]
These estimators are known functionals of $\widehat{b}$, and hence
their asymptotic distribution follows by application of the Delta
method.
\begin{thm}
\label{thm:Thm5}Suppose that $\Xi\equiv\Gamma^{-1}\Psi\Gamma^{-1}$
is positive definite. Under Assumptions \ref{ass: Ass2}-\ref{ass:Ass4}
we have: (i) for $(y,x)\in\mathcal{Y}\mathcal{X}$,
\[
n^{\frac{1}{2}}(\widehat{f}^{*}(y,x)-f^{*}(y,x))\rightarrow_{d}N(0,\phi(g^{*}(y,x))^{2}\Delta(x,y)'\Xi\Delta(x,y)),
\]
where $\Delta(x,y)\equiv-g^{*}(y,x)\{\partial_{y}g^{*}(y,x)\}T(x,y)+t(x,y)$,
and
\[
n^{\frac{1}{2}}(\widehat{F}^{*}(y,x)-F^{*}(y,x))\rightarrow_{d}N(0,\phi(g^{*}(y,x))^{2}T(x,y)'\Xi T(x,y));
\]
(ii) for $x\in\mathcal{X}$, $u\in\mathcal{U}_{x}(g^{*})$,
\[
n^{\frac{1}{2}}(\widehat{Q}^{*}(x,u)-Q^{*}(x,u))\rightarrow_{d}N(0,\{\partial_{y}g^{*}(y_{0},x)\}^{-2}T(x,y_{0})'\Xi T(x,y_{0})), 
\]
where $y_{0}=Q^{*}(u,x)$.
\end{thm}
Theorem \ref{thm:Thm4} provides an estimator for the asymptotic variance-covariance
matrix $\Xi$.

\begin{rem}
To implement (\ref{eq:ML}) we expand the 
parameter space $\Theta$ to the larger space $\Theta_{n}=\{b\in\mathbb{R}^{JK}:b't(x_{i},y_{i})>0,\,i\in\{1,\ldots,n\}\}$,
the effective domain of $Q_{n}(b)$. This implies that there is 
$b\in\Theta_{n}$ such that $b't(X,Y)\leq0$ with positive probability.
One can verify that $\widehat{b}\in\Theta$ holds after estimation
by checking the quasi-global monotonicity (QGM) property $\widehat{b}'t(x,y)>0$
on a fine grid of values that covers $\mathcal{Y}\times\mathcal{X}$.\label{Rk:QGM}
\end{rem}

\subsection{Comparison with alternative methods}

Our approach is related to distribution regression estimation of conditional
CDFs, for the (probit) model
\begin{equation}
F_{Y|X}(y\mid X)=\Phi(W(X)'\beta(y)),\quad y\in\mathbb{R},\label{eq:DRmodel}
\end{equation}
where $\beta(y)$ a vector of unknown functions. One sense in which
our approach is flexible is that model (\ref{eq:e}) can approximate
$F_{Y|X}(y|X)$ in (\ref{eq:DRmodel}) arbitrarily well for $S(y)$
rich enough.\footnote{Formally, write $S^{J}(Y)=S(Y)$ and suppose that Assumptions \ref{ass: Ass2} 
holds, $E[||\beta(Y)||^{2}]<\infty$ and that for any $K$ vector 
of functions $b(Y)$ with $E[||b(Y)||^{2}]<\infty$ there are $J\times1$ 
vectors $a_{k}^{J}$, $k\in\{1,\ldots,K\}$, such that $E[\sum_{k=1}^{K}\{b_{k}(Y)-S^{J}(Y)'a_{k}^{J}\}^{2}]\rightarrow0$ 
as $J\rightarrow\infty$. Then, $E[\{F_{Y|X}(Y|X)-\Phi(b'[W(X)\otimes S^{J}(Y)])\}^{2}]\rightarrow0$ 
as $J\rightarrow\infty$, by an argument similar to Theorem 11 in \citet{NeweyStouli:2025}.} If we define $\beta^{*}(y)=(\beta_{1}^{*}(y),\ldots,\beta_{K}^{*}(y))'$
with $\beta_{k}^{*}(y)\equiv S(y)'b_{k}^{*}$ and $b_{k}^{*}=(b_{k1}^{*},\ldots,b_{kJ}^{*})'$,
$k\in\{1,\ldots,K\}$, then $F^{*}(y,X)=\Phi(W(X)'\beta^{*}(y))$
provides increasingly accurate KLIC optimal approximations to (\ref{eq:DRmodel})
as $J$ increases. When $\beta^{*}(y)=\beta(y)$, our formulation
characterizes $\beta(y)$ for each $y$ in $\mathbb{R}$ simultaneously
whereas distribution regression characterizes $\beta(y)$ pointwise.
While both corresponding estimators are $\sqrt{n}$ consistent, the
ML estimator (\ref{eq:ML}) is efficient. When (\ref{eq:DRmodel})
is misspecified and multiple values of $Y$ are of interest, the distribution
regression criterion does not provide approximation guarantees for
$F_{Y|X}(y|X)$, whereas $F^{*}(y,X)$ is KLIC optimal.

Both conditional CDF models in (\ref{eq:DRF1}) and (\ref{eq:DRmodel})
are particular cases of conditional transformation models of the form
$F(g(Y,X))=F(\sum_{l=1}^{L}g_{l}(Y,X))$ (\citet{HKB:2014}), for
some specified CDF $F$. For estimation, taking this additive structure
as a starting point leads to considering the restricted form
\begin{equation}
F(g(Y,X))=F(\sum_{l=1}^{L}g_{l}(Y,X_{l}))=F(\sum_{l=1}^{L}b_{l}'[W_{l}(X_{l})\otimes S(Y)]),X=(X_{1},\ldots,X_{L})'\label{eq:CTM}
\end{equation}
(\citet[Section 5]{HKB:2014}), with monotonicity constraints on each
partial transformation $g_{l}$ (\citet[Section 4.5]{HMB:2018}, \citet[p. 1362]{CKK:2024}).
Here we find that the additive structure imposed by conditional transformation
models is not required for the formulation of fully flexible models
for conditional CDFs. Model (\ref{eq:DRF1}) gives a flexible generalization
of (\ref{eq:CTM}) and does not impose unnecessary monotonicity
restrictions.

Our approach is also related to quantile regression estimation of
CQFs. For the quantile regression model $Q_{Y|X}(u|X)=W(X)'\beta(u)$,
$u\in(0,1)$, the coefficients $\beta(u)$ are estimated at each $u$
by a sequence of linear programming problems. In general, this model
and (\ref{eq:e}) are not nested, but they coincide for $W(X)$ and
$S(Y)$ rich enough. Compared to our approach, the quantile regression
loss function allows for outcomes with non finite second moment, and
enjoys robustness properties when estimating a specific quantile.
When multiple quantiles are of interest, quantile regression does
not provide approximation guarantees for $Q_{Y|X}(u|X)$, whereas
$Q^{*}(X,u)$ is KLIC optimal.

Quantile and distribution regression can result in both finite sample
estimates and population approximations under misspecification that
do not satisfy the monotonicity properties of CQFs and CDFs, respectively.\footnote{This problem has motivated the development of a variety of methods
to avoid or repair intersecting quantile surfaces (e.g., \citet{He:1997},
\citet{DV:2008}, \citet{Chernozhukov Fern Gali 2010}, \citet{YT:2017},
\citet{SS:2018a}).} When this occurs, rearrangement methods provide improvements 
(\citet{Chernozhukov Fern Gali 2010}),
but without optimality guarantees. Here we propose a one-step resolution
to the quantile and probability curves crossing problem through the
information-theoretic selection of globally monotone models within
a specified class. Our criterion performs global model comparisons
within a set of proper conditional PDFs, and hence allows for the
selection of an optimal KLIC approximation. The implied CQF is then
free of crossing and the conditional CDF monotone, both are optimal
in a transparent sense, and they are estimated at parametric rate
together with the conditional PDF.

Compared to nonparametric kernel-based methods, flexible estimation
at parametric rate alleviates the curse of dimensionality when $X$
is a vector. In our simulations we find that it also yields substantial
finite sample improvements for conditional PDF, CDF and nonseparable
model estimation when $X$ is a scalar. The QGM property in Remark \ref{Rk:QGM}
avoids the need for correcting estimates (\citet{GHI:2003}) and multiplicity
of inverses (cf. \citet{Matzkin:2003}, p. 1358). The linear form
of GT estimates makes the QGM property easy to check and has the advantage
that imposing shape constraints from economic theory in estimation
is straightforward. Compared to \citet{Matzkin:2003}, the direct
specification of the distribution of $e$ through the log density
in (\ref{eq:log PDF}) also simplifies the choice and imposition of
a normalization.

Because $Q^{*}(X,u)$ is a proper CQF it can be used for simulation
of data that mimics the true data generating process, using the representation
$\widetilde{Y}=Q^{*}(X,U)$, $U|X\sim U(0,1)$. Thus, our approach
also complements Generative Adversarial Networks (GAN) (\citet{Goodfellow:2014})
used in Econometrics for simulation of complex datasets (\citet{AIMM:2024}).
GANs are flexible predictive methods able to mitigate the curse of
dimensionality. Compared to our approach, GANs do not directly produce
conditional distribution estimates and require solving computationally
challenging nonconvex nonconcave min-max games. Recent proposals substitute
the Wasserstein distance in the objective function and use a gradient-based
penalty for more stable implementation (\citet{AB:2017}). In contrast
our approach uses the KLIC to perform ML estimation of conditional
distributions in closed-form, while global concavity alleviates computational
difficulties.

\section{Duality Theory\label{sec:Section5}}

Considerable computational advantages accrue from our ML criterion
where the GT enters in closed-form. To (\ref{eq:ML}) corresponds
a dual formulation that can be cast into the modern convex programming
framework (\citet{BV:2004}). We derive the dual problem and establish
the properties of the dual solutions.
\begin{thm}
\label{thm:Thm6} If Assumptions \ref{ass: Ass2}-\ref{ass:Ass4}
are satisfied then the following hold.

(i) The dual of (\ref{eq:ML}) is
\begin{align}
\min_{(u,v)\in\mathbb{R}^{n}\times(-\infty,0)^{n}} & -n\left(\frac{1}{2}\log(2\pi)+1\right)+\sum_{i=1}^{n}\left\{ \frac{u_{i}^{2}}{2}-\log\left(-v_{i}\right)\right\} \label{eq:Dual objective}\\
\textrm{subject to}\quad & -\sum_{i=1}^{n}\left\{ T(x_{i},y_{i})u_{i}+t(x_{i},y_{i})v_{i}\right\} =0\label{eq:Dual scores}
\end{align}
the dual GT regression problem, with solution $\widehat{\alpha}=(\widehat{u}',\widehat{v}')'$.

(ii) The dual GT regression program (\ref{eq:Dual objective})-(\ref{eq:Dual scores})
admits the method-of-moments representation
\begin{equation}
\sum_{i=1}^{n}\left\{ -T(x_{i},y_{i})\{b'T(x_{i},y_{i})\}+\frac{t(x_{i},y_{i})}{b't(x_{i},y_{i})}\right\} =0,\label{eq:MMrep}
\end{equation}
the first-order conditions of (\ref{eq:ML}).

(iii) With probability approaching one we have: (a) existence and
uniqueness, i.e., there exists a unique pair $(\widehat{b}',\widehat{\alpha}')'$
that solves (\ref{eq:ML}) and (\ref{eq:Dual objective})-(\ref{eq:Dual scores}),
and
\begin{equation}
\widehat{u}_{i}=\widehat{b}'T(x_{i},y_{i}),\quad\hat{v}_{i}=-\frac{1}{\widehat{b}'t(x_{i},y_{i})},\quad i\in\{1,\ldots,n\};\label{eq:Dual_FOCs}
\end{equation}
(b) strong duality, i.e., the value of (\ref{eq:ML}) equals the value
of (\ref{eq:Dual objective})-(\ref{eq:Dual scores}).
\end{thm}
\begin{sloppy}
The dual formulation in Theorem \ref{thm:Thm6} demonstrates important
computational properties of GT regression. The Hessian matrix of the
dual problem (\ref{eq:Dual objective})-(\ref{eq:Dual scores}) is
\[
\left[\begin{array}{cc}
I_{n} & 0_{n\times n}\\
0_{n\times n} & \textrm{diag}(1/v_{i}^{2})
\end{array}\right],
\]
a positive definite diagonal matrix for all $v\in(-\infty,0)^{n}$,
with $I_{n}$ denoting the $n\times n$ identity matrix and $\textrm{diag}(1/v_{i}^{2})$ 
the $n\times n$ diagonal matrix with elements $(1/v_{1}^{2},\ldots,1/v_{n}^{2})$.
Thus the dual problem is a strictly convex mathematical program with
sparse Hessian matrix and $JK$ linear constraints. We implement this
computationally convenient formulation using the state-of-the-art
convex programming solvers ECOS (\citet{ECOS:2013}) and SCS (\citet{ODCPB2016}).
\par\end{sloppy}

It is useful to compare problem (\ref{eq:Dual objective})-(\ref{eq:Dual scores})
with the dual formulation of quantile regression. The dual problem
for the linear $\tau$ quantile regression of $y_{i}$ on $W(x_{i})$
is
\[
\max_{u}\left\{ y'u:\sum_{i=1}^{n}W(x_{i})u_{i}=(1-\tau)\sum_{i=1}^{n}W(x_{i}),\quad u\in[0,1]^{n}\right\} ,\quad\tau\in(0,1),
\]
with solutions that take value $0$ or $1$, except for $K$ sample
points that define the fitted quantile regression surface and are
assigned $u$ values that are neither $0$ nor $1$ (cf. Chapter 3.5.4
in \citet{Koenker:2005}). When multiple quantiles are of interest,
violation of monotonicity may arise because each quantile surface
must interpolate $K$ sample points, while each quantile regression
is implemented separately. In contrast, (\ref{eq:Dual objective})-(\ref{eq:Dual scores})
assigns $u$ values in $\mathbb{R}^{n}$, avoiding box constraints
on $u$, while imposing monotonicity since $-1/v_{i}>0$ at a solution.
Implied quantile surfaces are then unrestricted beyond the form of
$u_{i}$ in (\ref{eq:Dual_FOCs}), and strong duality ensures KLIC
optimality.

In addition to KLIC optimality and the logarithmic barrier in the
objective, linearity of the constraints is also an important advantage
of (\ref{eq:Dual objective})-(\ref{eq:Dual scores}) compared to
the alternative generalized dual regression characterization of conditional
CDFs and CQFs (\citet{SS:2018a}) for which the mathematical program
is of the form
\begin{equation}
\max\left\{ y'e:\sum_{i=1}^{n}\mathcal{T}(x_{i},e_{i})=0,\quad e\in\mathbb{R}^{n}\right\} ,\label{eq:DualReg}
\end{equation}
where $\mathcal{T}(x_{i},e_{i})$ is a vector of known functions of
$x_{i}$ and $e_{i}$ including $e_{i}$ and $(e_{i}^{2}-1)/2$, so
that $e$ enters nonlinearly into the constraints. (\ref{eq:DualReg})
has first-order conditions
\begin{equation}
y_{i}=\widehat{b}'\{\partial_{e_{i}}\mathcal{T}(x_{i},e_{i})\},\quad i\in\{1,\ldots,n\},\label{eq:GDR_FOCs}
\end{equation}
where $\widehat{b}$ is the Lagrange multiplier vector for the constraints
in (\ref{eq:DualReg}), but where the solution is now determined by
a system of $n$ nonlinear equations instead of having a closed-form
expression as in (\ref{eq:Dual_FOCs}). This further illustrates the
benefits accruing from closed-form modeling of $e=g(Y,X)$, compared
to direct modeling of $y_{i}$ in (\ref{eq:GDR_FOCs}).

\begin{rem}
Compared to the methods above, (\ref{eq:Dual objective})-(\ref{eq:Dual scores})
provides a one-step estimator for both conditional PDFs and CDFs at
each sample point, formed as $\varPhi(\widehat{u}_{i})$ and $\phi(\widehat{u}_{i})(-1/\hat{v}_{i})$,
respectively. Moreover, uniform convergence of the empirical distribution
of $\{\widehat{u}_{i}\}_{i=1}^{n}$ can be used to establish the validity
of bootstrap methods for uniform inference on DRFs (\citet{Chernozhukov Fern Melly 2013}). Uniform convergence
follows from the method-of-moments representation (\ref{eq:MMrep})
and steps similar to the proof of Theorem 6 in \citet{SS:2018a}.
\end{rem}

\section{Distributional Gender Wage Gap Analysis\label{sec:Section6}}

We illustrate our framework with an application to the distributional
gender wage gap in the United States. We use data on wages, hours
and weeks worked, educational attainment, gender, race, age, industry
and occupation from the 2019 American Community Survey (\citet{Rugglesetal:2025}).\footnote{A large representative 
survey that covers 1\% of the U.S. population,
with mandatory participation, and available on the IPUMS-USA website
(https://usa.ipums.org/usa/).} For a sample of white employees working full time (more than $34$
hours a week, at least $50$ weeks a year) in metropolitan areas, we stratified the data
according to industry-occupation pairs, and selected the 41 pairs
for which the support of education is the same across genders, to allow for wage comparisons over 
the whole support.
We follow \citet{BCS:2024} and include individuals 
aged $25$ to $65$, and we discard individuals below the $7.25$ federal minimum hourly
wage. 
We have 199,785 observations in total, with 245 to 32,828 observations per industry-occupation pair.

For each pair, we estimate DRFs implied
by the model
\begin{equation}
e^{*}=g^{*}(Y,X,D)=[\{W(X)\otimes(1,D)'\}\otimes S(Y)]'b^{*},\quad X=(X_{1},X_{2})',\label{eq:e* wages}
\end{equation}
where $Y$ denotes weekly wages, $X_{1}$ years of education, $X_{2}$
experience, and $D$ is a gender dummy variable. This provides 
a flexible framework for estimation of distributional and quantile
treatment effects. Differences in wages across genders occur if some
component of $D[W(X)\otimes S(Y)]$ in (\ref{eq:e* wages}) has nonzero
coefficient. The sign, magnitude and location of gender differences
in the wage  distribution can be analyzed by estimating the quantile
gender wage gap over the support of $X$ and $u\in(0,1)$,
\begin{equation}
\mathtt{GWG}(X,u)=100\times\left[\log Q^{*}(X,D=\mathtt{Male},u)-\log Q^{*}(X,D=\mathtt{Female},u)\right].\label{eq:qGWG}
\end{equation}
Following \citet{BCS:2024} we treat education and experience as continuous,
and for both $W(X)$ and $S(Y)$ we consider a range of spline transformations.\footnote{For model selection 
we implement an adaptive Lasso ML version of our 
estimator, and we select the specification that minimizes the Bayes 
Information Criterion (BIC). Details of our implementation for the empirical application are given 
in Section \ref{sec:Implementation} of the Supplemental material. 
All computational procedures can be implemented in the software R 
(\citet{R:2024}) using open source packages for convex optimization 
such as CVX, and its R implementation CVXR (\citet{CVXR}).} Spline functions satisfy the conditions of our modeling framework
and have been demonstrated to be remarkably effective when applied
to the related problems of log density estimation (\citet{KS:1991})
or monotone regression function estimation (\citet{Ramsay:1988}).
Thus, (\ref{eq:e* wages})-(\ref{eq:qGWG}) is a flexible specification
for the gender wage gap, allowing for both observed and unobserved
heterogeneity.

\begin{figure}[t]
\begin{centering}
\subfloat[Conditional PDF of earnings by $\textrm{Years of Education}\in\{12,16,19,20\}$.\label{fig:PDF}]{\begin{centering}
\includegraphics[width=7.25cm,height=5cm]{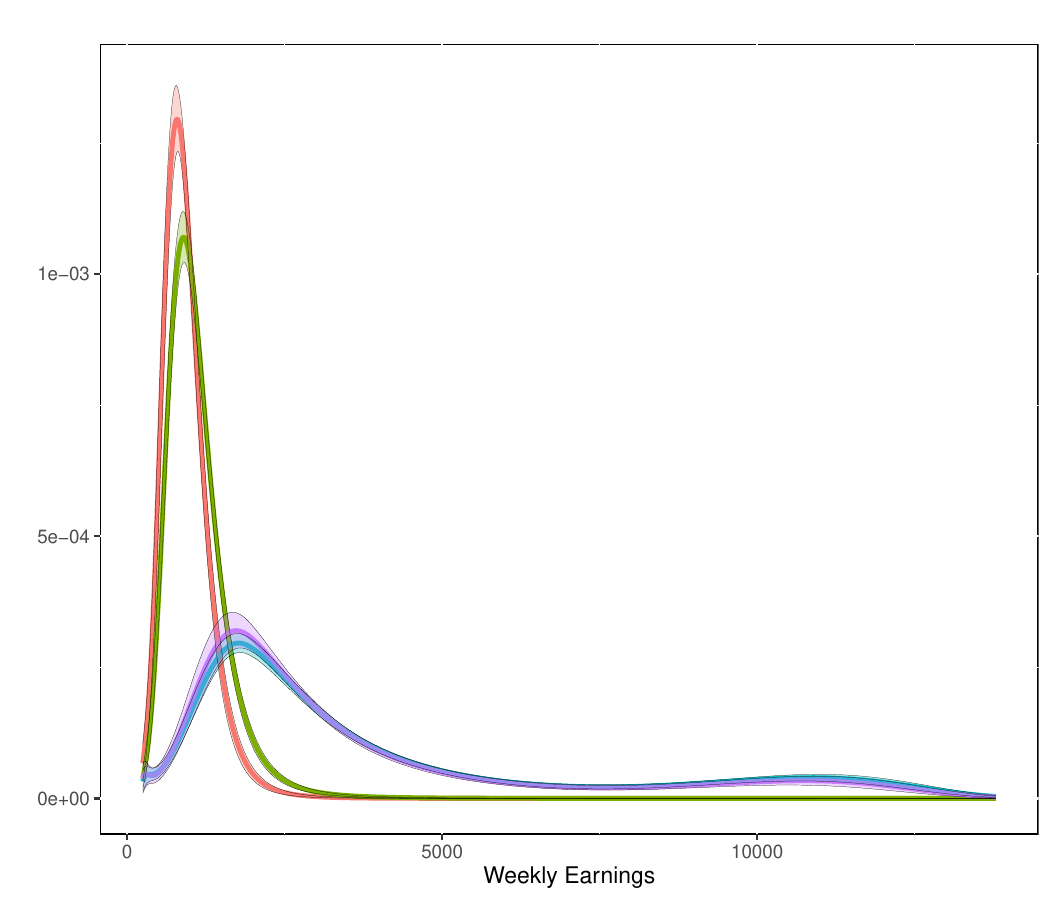}\hfill{}\includegraphics[width=7.25cm,height=5cm]{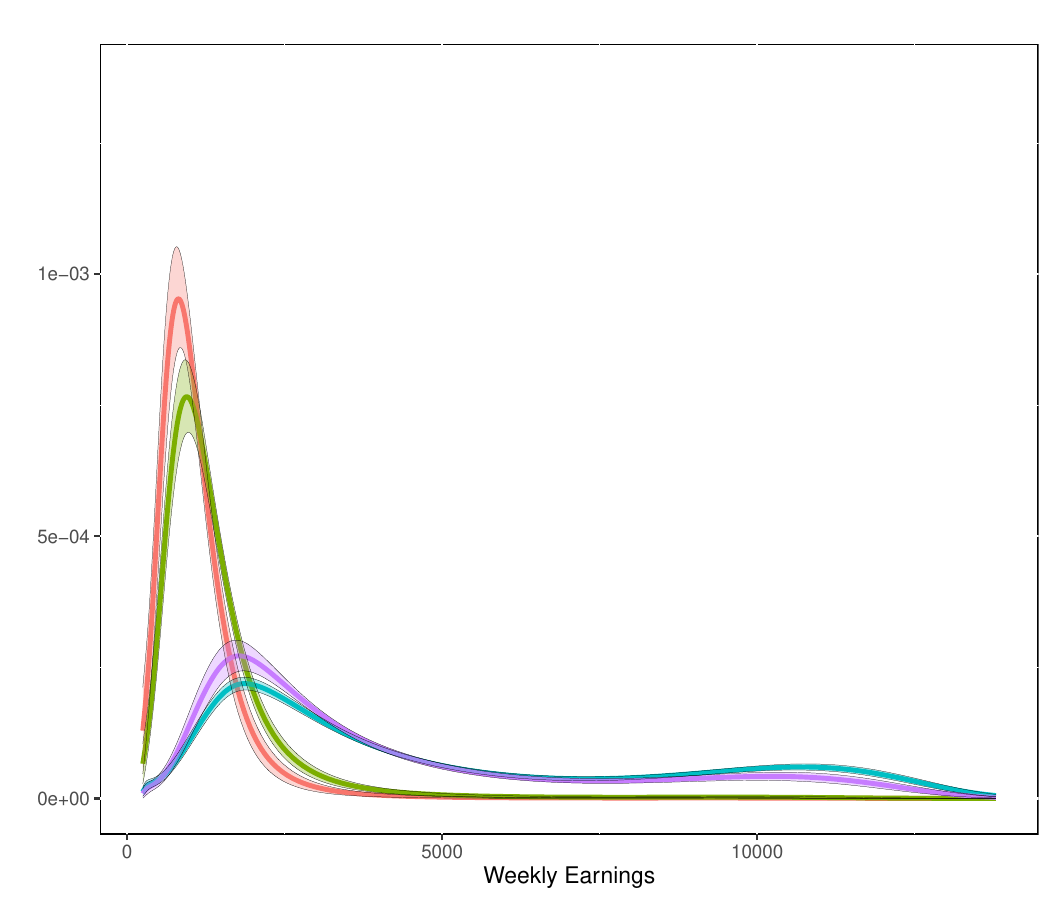}
\par\end{centering}
}
\par\end{centering}
\begin{centering}
\subfloat[Conditional CDF of earnings by $\textrm{Years of Education}\in\{12,16,19,20\}$.\label{fig:CDF}]{\begin{centering}
\includegraphics[width=7.25cm,height=5cm]{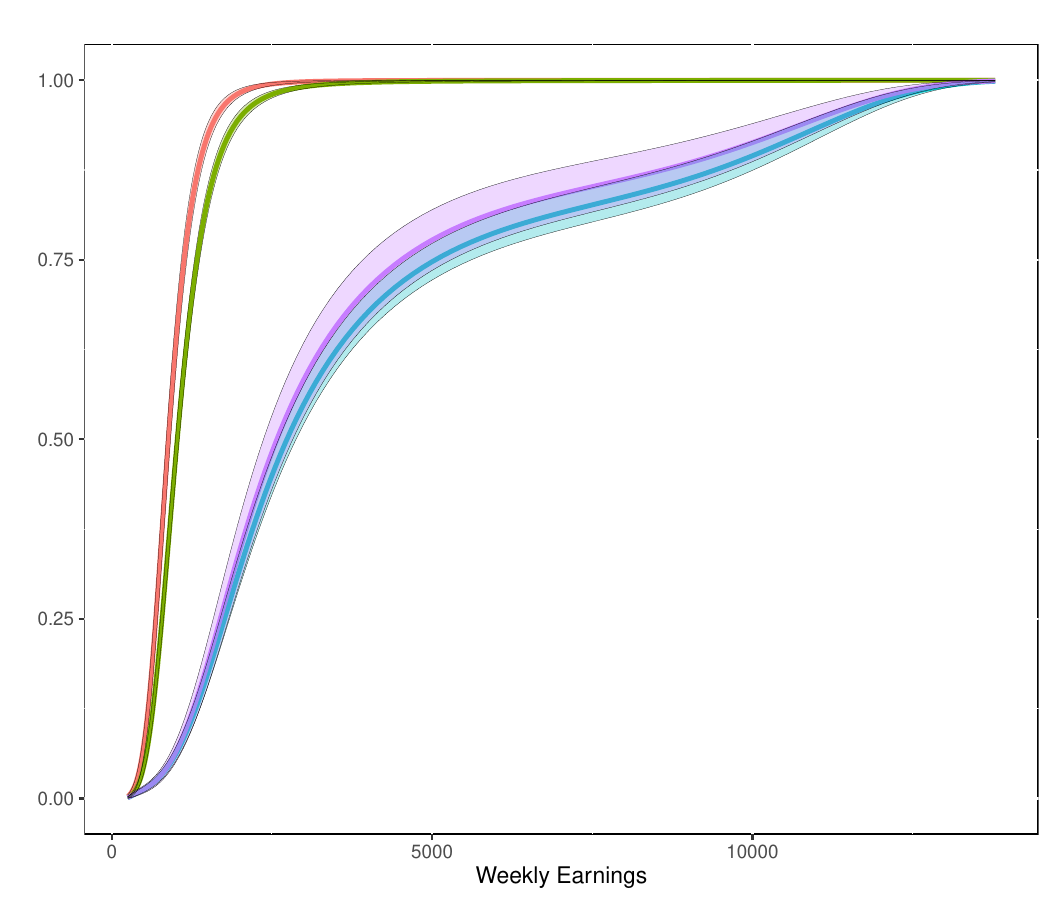}\hfill{}\includegraphics[width=7.25cm,height=5cm]{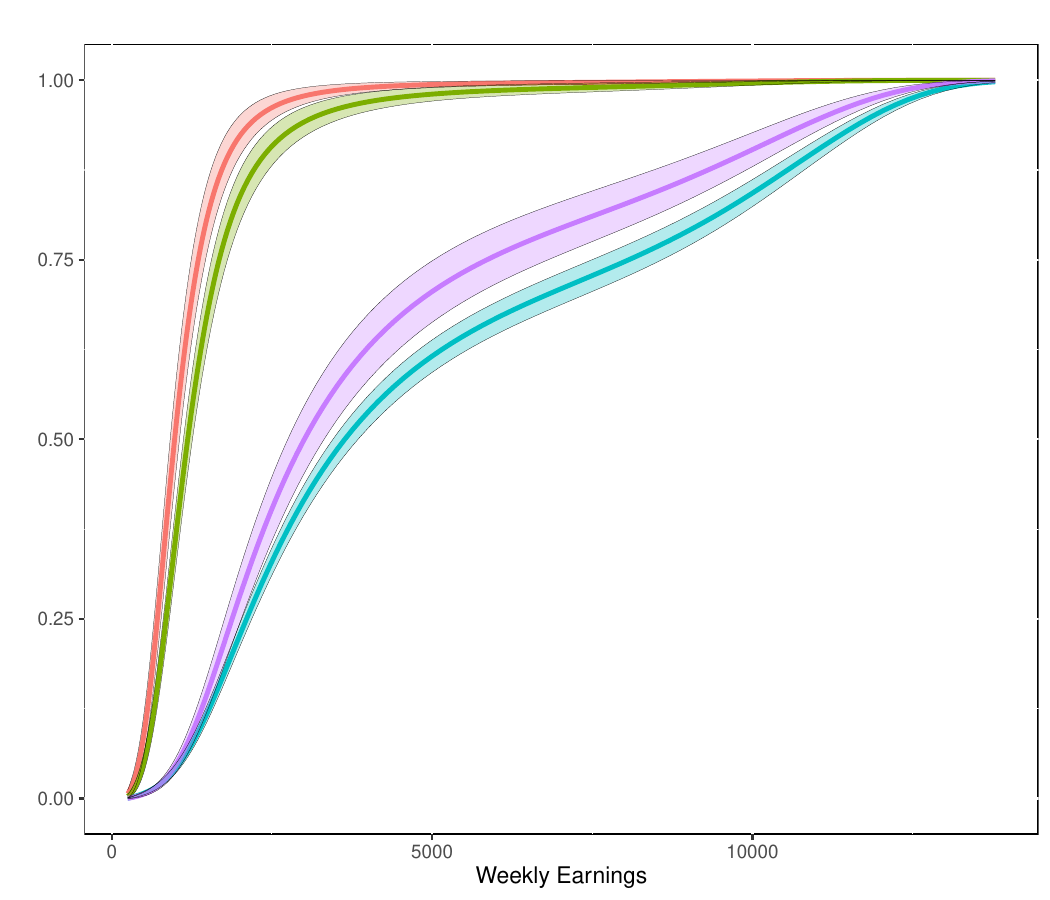}
\par\end{centering}
}
\par\end{centering}
\caption{Conditional PDF and CDF for $\mathtt{Female}$ (left) and $\mathtt{Male}$
(right), with confidence bands.\label{fig:PDFsCDFs}}
\end{figure}
Distributional gender wage gap analysis is challenging because the
shape of the conditional wage distribution varies across $X$ values,
and because of key features of the dataset such as high skewness of
wages, the presence of large outliers, multiple covariates, and the
effectively discrete measurements of education and experience. Applying
high-dimensional mean regression to similar data, \citet{BCS:2024}
find that accounting for heterogeneity in observables (e.g., education,
industry and occupation) is important in understanding the gender gap. 
We complement their analysis by studying distributional
wage differences across genders, and flexibly modeling observed heterogeneity
across education and experience levels. For each industry-occupation
pair, we obtain DRFs over
their entire support, test for their equality across genders, 
estimate $\mathtt{GWG}(X,u)$, and give confidence bands for
all objects.

\begin{figure}[t]
\subfloat[CQF with confidence bands, for $u\in\{0.1,0.25,0.5,0.75,0.9\}$.\label{fig:CQF_CI}]{\begin{centering}
\includegraphics[width=7.25cm,height=5cm]{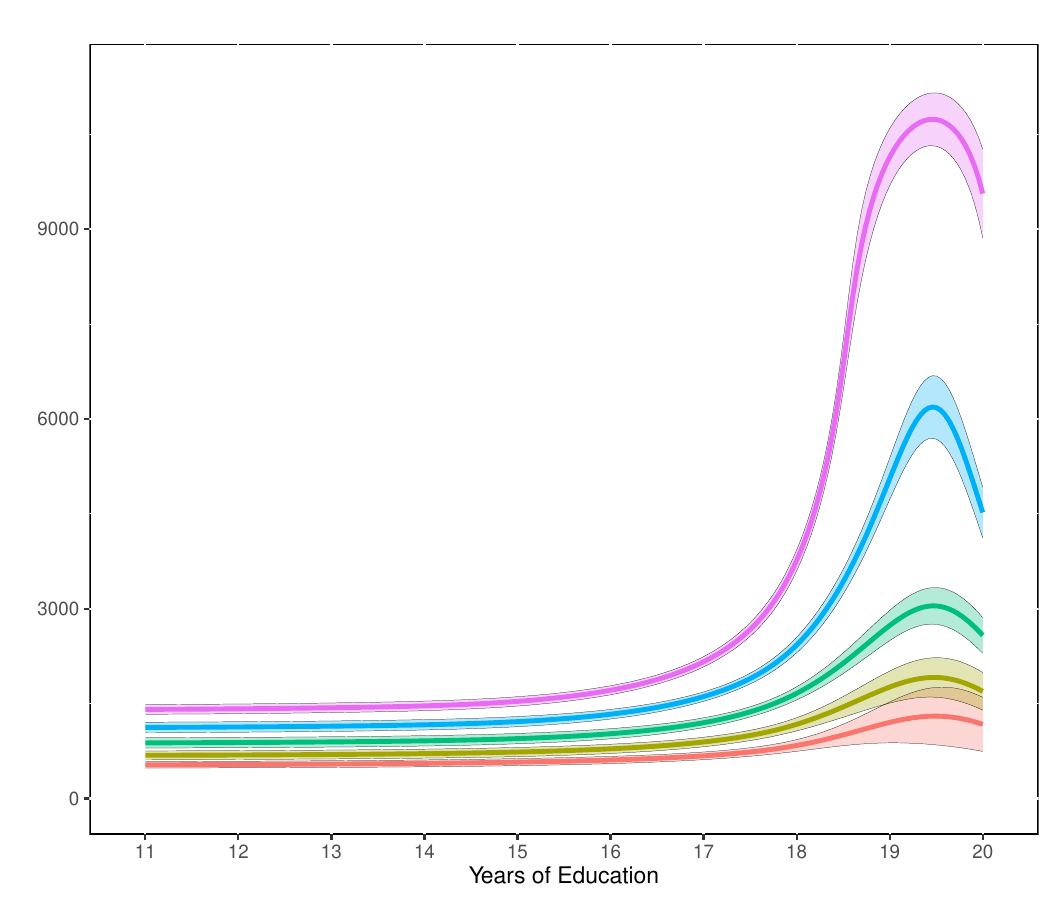}\hfill{}\includegraphics[width=7.25cm,height=5cm]{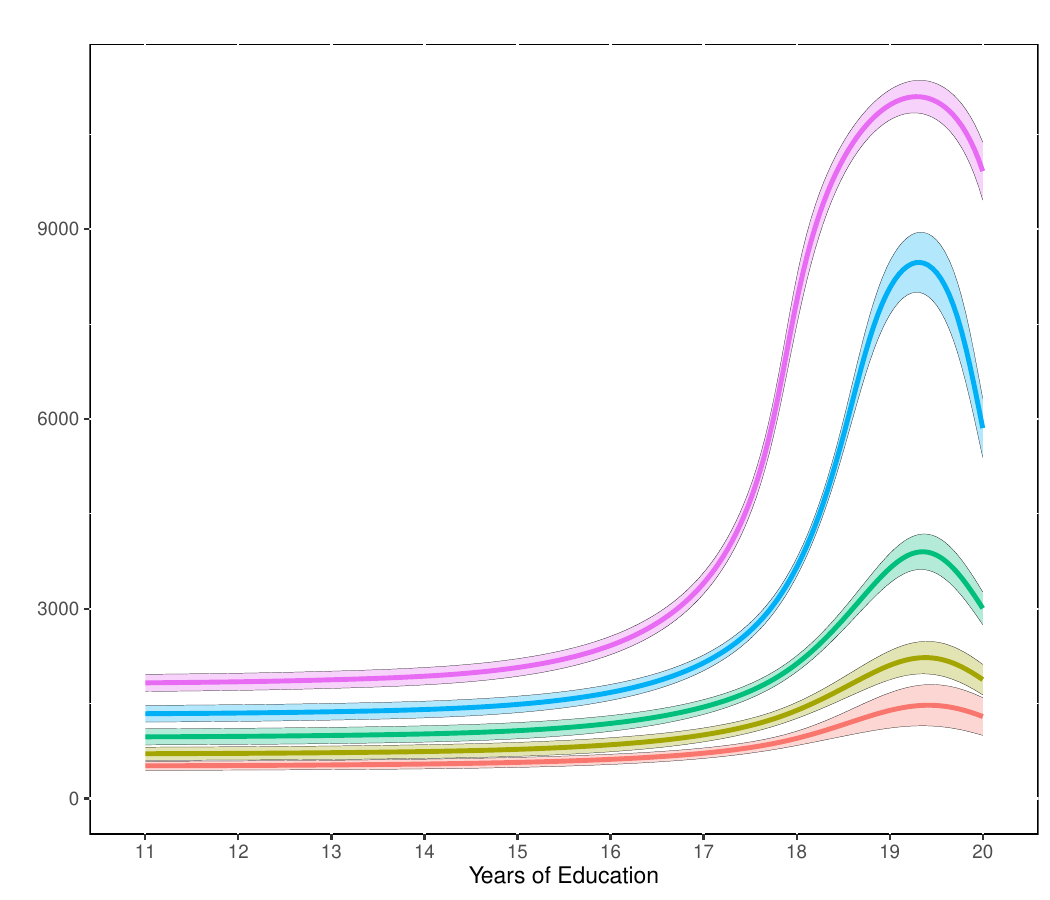}
\par\end{centering}
}

\subfloat[CQF with scatterplots by gender, for $u\in\{0.05,0.10,\ldots,0.95\}$.\label{fig:CQF_Full}]{\includegraphics[width=7.25cm,height=5cm]{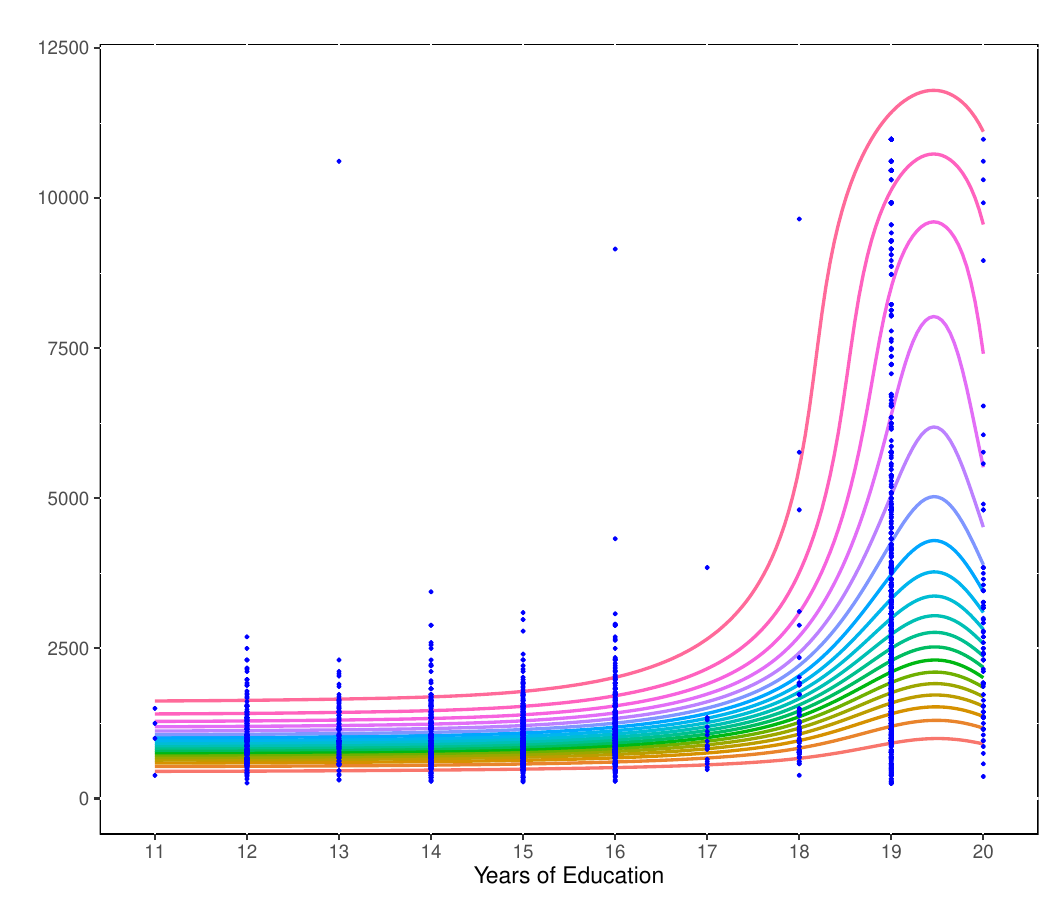}\hfill{}\includegraphics[width=7.25cm,height=5cm]{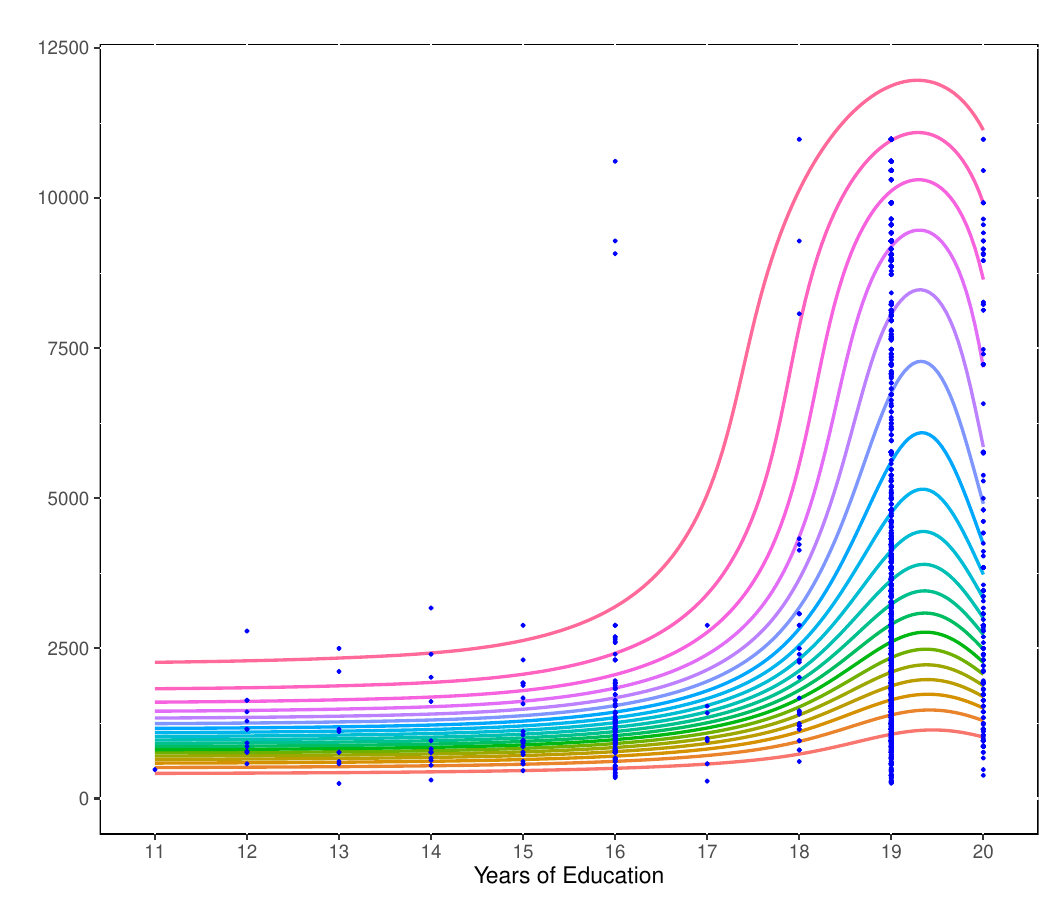}

}\caption{CQF for $\mathtt{Female}$ (left) and $\mathtt{Male}$ (right).\label{fig:CQFs}}
\end{figure}
Figures \ref{fig:PDFsCDFs}-\ref{fig:CQFs} show DRF estimates for the legal occupation in the services industry
and for $X_{2}=\widehat{Q}_{X_{2}}(0.5)$, the sample experience median. 
This example demonstrates that 
parsimonious Spline-Spline models 
can capture a wide variety of complex data features. The high
single peaked PDFs at $X_{1}\in\{12,16\}$ differ markedly from the
long-tailed and bimodal PDFs at $X_{1}\in\{19,20\}$.\footnote{$X_{1}\in\{12,16,19,20\}$ indicates completion of high school, bachelor,
professional and doctoral degrees, respectively.} This reveals two different regimes in the conditional wage distribution:
high concentration/relatively low wages at $X_{1}\in\{12,16\}$, and
high dispersion/relatively high wages at $X_{1}\in\{19,20\}$. The
two modes in the second regime are especially apparent for males in
Figure \ref{fig:PDFsCDFs}(A), and are reflected by the two inflection
points in the corresponding CDFs in Figure \ref{fig:PDFsCDFs}(B),
and by the large gap between the low and high CQFs 
in Figure
\ref{fig:CQFs}(A). CQF estimates capture both linearity and nonmonotonicity
over the $X_{1}$ support, reflecting substantial heteroskedasticity
and changes in mode locations for the wage distribution. 
Lower dispersion of CQFs up to the median reflects distributional asymmetry. 

Visual inspection strongly suggests that the DRFs differ across genders. Conditional PDFs for females are
higher for low wages, and those for males are higher 
for high wages. 
Conditional 
CDFs for males stochastically dominate those for females.
Upper quantile CQFs are higher for males. A Wald test of DRFs equality 
across genders reinforces this
diagnostic.\footnote{We perform a significance test for the $9$ nonzero coefficients of
$D[W(X)\otimes S(Y)]$ in (\ref{eq:e* wages}), with a test statistic
of $316>16.9$, the critical value at the $5\%$ level.}

Figure \ref{fig:QTE} gives a comprehensive picture of the quantile
gender wage gap. We find substantial heterogeneity, with a statistically
significant gap over the whole support of $X_{1}$ for higher quantiles
(median and above). We also find nonlinearity, with close to a constant
gap up to year 16, and then marked divergence across quantiles.

Overall, we find that parsimonious representations are able to capture
complex distributional data features. In the Supplemental Material we further
illustrate the overall finding that $\mathtt{GWG}(X,u)$ varies across
quantiles and exhibit substantial nonlinearity in $X_{1}$, with heterogeneous
patterns across industry-occupation pairs.
 The main features of the
selected Spline-Spline model are well-preserved by models with similar
penalization and BIC. 
In our simulations we also find that BIC
performs well in a variety of designs. Thus, although 
model selection 
in our context is an important topic for future research, 
BIC appears to be reliable for practical purposes.

\begin{figure}[t]
\includegraphics[width=9cm,height=5cm]{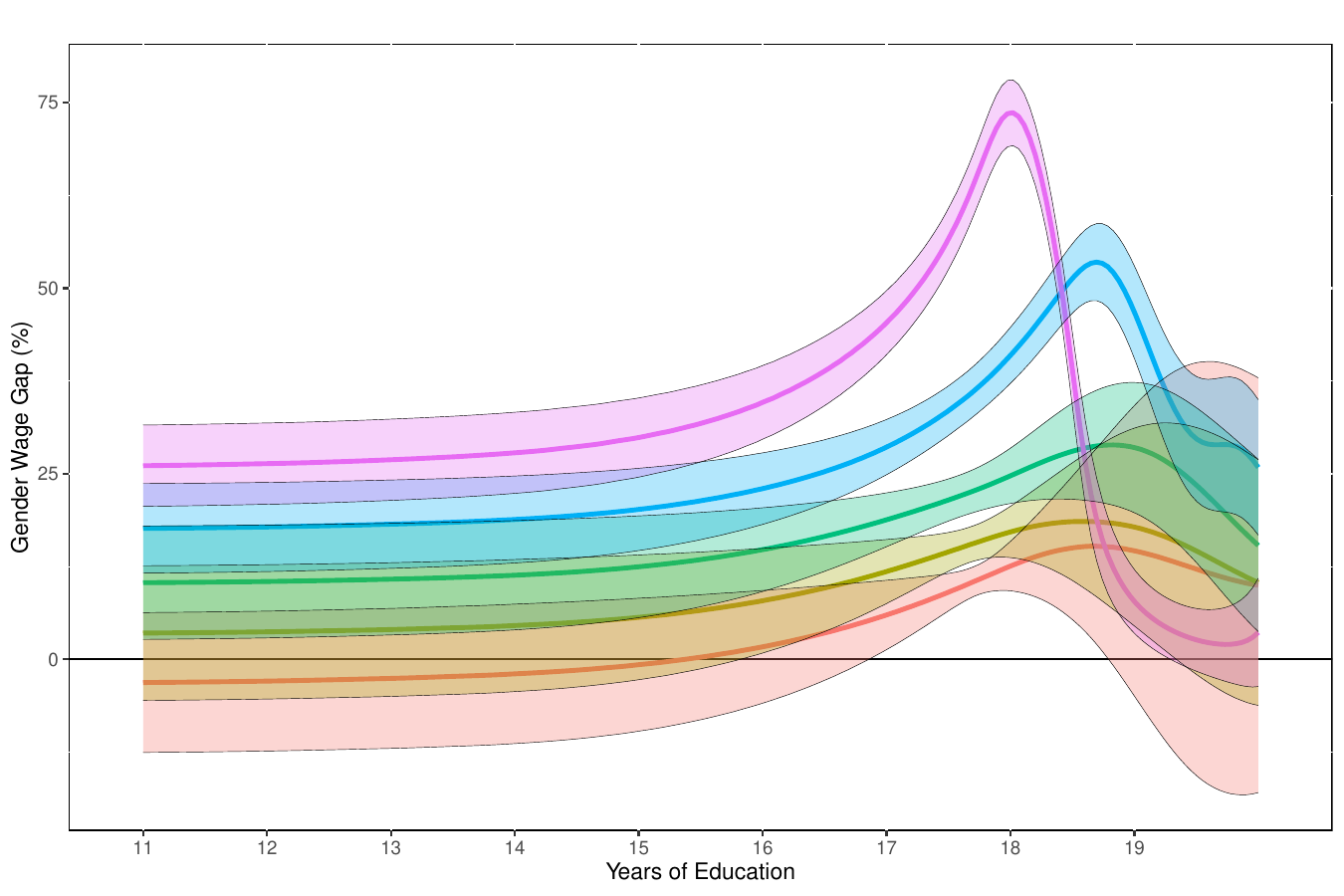}

\caption{Quantile gender wage gap.\label{fig:QTE}}
\end{figure}

\section{Extensions\label{sec:Section7}}

\subsection{Logistic transform regression}

This paper has exposited a theory of Gaussian transformation regression
based on the standard normal density. Other probability densities
could be employed instead, with the standard Logistic and Laplace
densities being appealing choices. We provide a brief analysis of
the Logistic case here.

We write as before (cf. (\ref{eq:e})) $e=b_{0}'T(X,Y)$, $\partial_{y}\{b_{0}'T(X,Y)\}=b_{0}'t(X,Y)>0$,
but replace $e\mid X\sim N(0,1)$ with $e\mid X\sim\Lambda$ where
$\Lambda$ is the standard logistic CDF with log-density $e-2\log(1+\exp(e))$.
Objective function (\ref{eq:popML}) is replaced by 
\[
Q(b)=E\left[b'T(X,Y)-2\log\left(1+\exp(b'T(X,Y))\right)+\log\left(b't(X,Y)\right)\right],\quad b\in\Theta.
\]
The corresponding first-order conditions that replace (\ref{eq:FOCs})
are
\[
E\left[-T(X,Y)\frac{\exp(b'T(X,Y)-1)}{\exp(b'T(X,Y)+1)}+\frac{t(X,Y)}{b't(X,Y)}\right]=0.
\]
The population objective function remains concave as in the Gaussian
case, since the expected Hessian is now given by:
\begin{equation}
-E\left[2\lambda(b'T(X,Y))\left\{ T(X,Y)T(X,Y)'+\frac{t(X,Y)t(X,Y)'}{\{b't(X,Y)\}^{2}}\right\} \right],\label{eq:hessian-1}
\end{equation}
where $\lambda(\cdot)$ is the standard Logistic density. In the Gaussian
version of (\ref{eq:hessian-1}) the term corresponding to $2\lambda(b'T(X,Y))$
is ``1'' because the log Gaussian density is a simple quadratic 
with coefficient $(-1/2)$ so that its second derivative is non-stochastic. 
Our limited experience with estimation based on the Logistic density 
suggests that it has no particular advantage or disadvantage with 
respect to the Gaussian.

\subsection{Mixed discrete-continuous outcomes}

Our formulation extends to the case where $Y$ has both a continuous component with support $\mathcal{Y}_{C}$ and a discrete component with support $\mathcal{Y}_{D}\equiv \{y_{1},\ldots,y_{M}\}$. 
For each $y\in\mathcal{Y}_{D}$, the PDF of $Y$ conditional on $X$ 
can be expressed as $\Pr[Y=y|X]=F_{Y|X}(y|X)-\lim_{z\rightarrow y^{-}}F_{Y|X}(z|X)$ (\citet{Mittelhammer:2013}, p. 67), where 
$\lim_{z\rightarrow y^{-}}$ is the limit as $z$ approaches $y$ from below.

For some specified continuous and strictly increasing CDF $F$ with derivative $f$, define $g(Y,X)\equiv F^{-1}(F_{Y|X}(Y|X))$. Then $f_{Y|X}(Y|X)$ can now be expressed as
\[
f_{Y|X}(Y|X) =\{F(g(Y,X))-\lim_{z\rightarrow Y^{-}}F(g(z,X))\}^{1(Y\in\mathcal{Y}_{D})}\,\{f(g(Y,X))\partial_{y}g(Y,X)\}^{1(Y\in\mathcal{Y}_{C})}.
\]
Let $D(Y)$ be a vector of dummy variables $D_{m}$, $m\in \mathcal{M}\equiv \{2,\ldots, M\}$, taking value one if outcome $y_{m}$ occurs and zero otherwise. For $Y\in \mathcal{Y}_{D}$ and $M \geq 2$, we take 
$g(Y,X)=d_{0}'T_{D}(X,Y)$, where $T_{D}(X,Y)=W(X)\otimes P(Y)$ with $P(Y)=(1,D(Y)')'$. When $M=1$, we set $P(Y)=1$. For $Y\in \mathcal{Y}_{C}$, we take 
$g(Y,X)=c_{0}'T_{C}(X,Y)$, where $T_{C}(X,Y)=W(X)\otimes Q(Y)$ with $Q(Y)=(1,C(Y)')'$, for $C(Y)$ a vector of known functions of $Y$. For $F(g(Y,X))$ to be a valid CDF model, we assume $(d_{0}',c_{0}')'$, and $C(Y)$ are such that $y\mapsto g(y,X)$ is right-continuous, 
so that $y \mapsto F(g(y,X))$ also is.

To illustrate, let $\mathcal{Y}_{C} = (y_{M},\infty)$,  
$C(Y)$ be such that $C(y_{M})=0$, $d_0=(d_{01}',\ldots, d_{0M}')'$, and $\delta_{m}(X)\equiv d_{0m}'W(X)$, $m\in \{1,\mathcal{M}\}$. For $Y \in \mathcal{Y}_{D}$,  $P(Y)=(1,D(Y)')'$, and hence:
\[
g(Y,X)=d_{0}'T_D(X,Y)=d_{01}'W(X) + \sum_{m\in \mathcal{M}}\{d_{0m}'W(X)\}D_{m}=\delta_{1}(X) +
\sum_{m\in  \mathcal{M}}\delta_{m}(X)D_{m}.
\]
Similarly, for $J\equiv \dim(Q)$, write $c_{0}=(c_{01}',\ldots, c_{0J}')'$ and let $\xi_{j}(X)\equiv c_{0j}'W(X)$, $j\in \{1,\mathcal{J}\}$, $\mathcal{J}\equiv \{2,\ldots,J\}$. 
For $Y \in \mathcal{Y}_{C}$, $Q(Y)=(1,C(Y)')'$, and hence: $g(Y,X)=\xi_{1}(X) + \sum_{j\in  \mathcal{J}}\xi_{j}(X)C_{j-1}(Y)$. 
By $\sum_{j\in  \mathcal{J}}\xi_{j}(X)C_{j-1}(y_{M})=0$ and $\sum_{m\in  \mathcal{M}}\delta_{m}(X)D_{m}=\delta_{M}(X)$ when $Y=y_{M}$, setting $\delta_1(X) + \delta_{M}(X) = \xi_{1}(X)$ implies right-continuity of $y\mapsto g(y,X)$ at $y_{M}$. 
If $E[W(X)W(X)']$ is nonsingular, this holds for $d_{01}+d_{0M}=c_{01}$.

\begin{rem}
For $M=1$ and $Q(Y)=(1,Y)'$, an important particular case is the Tobit model (\citet{Tobin:1958}). With censoring at 0, the standard model has $Y=W(X)'\beta+\sigma e$ if $W(X)'\beta+\sigma e>0$, $e|X\sim\Phi$, and $Y=0$ otherwise. For $Y>0$ this is the same as $e=-W(X)'(\beta/\sigma) + (1/\sigma)Y$, which is of the form $c_{0}'T_{C}(X,Y)$ above for  $F=\Phi$ and $c=(c_{01}', c_{02})'\equiv (-\beta'/\sigma, 1/\sigma)'$. This is \citet{Olsen:1978}'s concave proposal commonly used in practice, and our formulation provides a nonlinear and nonseparable generalization.
\begin{rem}
With $Y$ discrete, we have  $\lim_{z\rightarrow y_{1}^{-}}F_{Y|X}(z|X)=0$, $\lim_{z\rightarrow y_{m}^{-}}F_{Y|X}(z|X)=F_{Y|X}(y_{m-1}|X)$,  $m\in \mathcal{M}$, and $F_{Y|X}(y_M|X)=1$.  
Moreover, specifications with $S(Y)=(1,D(Y)')'$ are not restrictive in the $Y$ dimension, and coincide with the semiparametric distribution regression model $F_{Y|X}(Y|X)=F(W(X)'\beta(Y))$. 
In contrast with distribution regression,  monotonicity is obtained directly from 
the implied objective function, with the terms $\log(F(b'T(X,y_{m}))-F(b'T(X,y_{m-1})))$ 
ruling out nonincreasing CDFs. When also $X$ is discrete, with support $\{x_{1},\ldots,x_{L}\}$, a nonparametric formulation is achieved with $W(X)=(1,Z(X)')'$, for $Z(X)$ a vector of dummy variables $Z_{l}$, $l=2,\ldots,L$, taking value one when $X=x_{l}$ occurs and zero otherwise.

\end{rem}

\end{rem}

\subsection{Multiple outcomes}

With multiple outcomes $(Y_{1},\ldots,Y_{M})'\equiv Y$, $M\geq2$,
writing $\mathbb{Y}_{m}\equiv(Y_{1},\ldots,Y_{m})'$, a compact generalization
of (\ref{eq:e}) is the recursive formulation
\begin{align*}
e_{m} & =T_{m}(X,\mathbb{Y}_{m})'b_{0,m},\quad e_{m}\mid X,\mathbb{Y}_{m-1}\sim N(0,1),\quad m\in\{2,\ldots,M\},\\
e_{1} & =T_{1}(X,Y_{1})'b_{0,1},\quad e_{1}\mid X\sim N(0,1),
\end{align*}
where $T_{m}(X,\mathbb{Y}_{m})\equiv T_{m-1}(X,\mathbb{Y}_{m-1})\otimes S_{m}(Y_{m})$
and $T_{1}(X,Y_{1})\equiv W(X)\otimes S_{1}(Y_{1})$, with
\begin{align*}
\partial_{y_{m}}\{T_{m}(X,\mathbb{Y}_{m})'b_{0,m}\} & =t_{m}(X,\mathbb{Y}_{m})'b_{0,m}>0,\quad m\in\{1,\ldots,M\},
\end{align*}
where $t_{m}(X,\mathbb{Y}_{m})\equiv t_{m-1}(X,\mathbb{Y}_{m-1})\otimes s_{m}(Y_{m})$,
$m\in\{2,\ldots,M\}$, and $t_{1}(X,Y_{1})\equiv W(X)\otimes s_{1}(Y_{1})$.
By construction, the $e_{m}$'s are jointly Gaussian and mutually
independent, with variance-covariance the identity matrix. This is a Gaussian version of \citet{Rosenblatt:1952}'s multivariate
probability transformation. The conditional
CDF  is
\[
F_{Y\mid X}(y_{1},\ldots,y_{M}\mid X)=\int_{-\infty}^{y_{1}}\ldots\int_{-\infty}^{y_{M}}f_{Y\mid X}(t_{1},\ldots,t_{M}\mid X)dt_{1}\ldots dt_{M},
\]
where the conditional PDF takes the form
\[
f_{Y\mid X}(\overline{y}_{M}\mid X)=\prod_{m=1}^{M}\phi(T_{m}(X,\overline{y}_{m})'b_{0,m})\{t_{m}(X,\overline{y}_{m})'b_{0,m}\},\quad\overline{y}_{m}\equiv(y_{1},\ldots,y_{m})\in\mathbb{R}^{m}.
\]

\section{Conclusion\label{sec:Section8}}

The formulation of flexible models for the GT $e=g(Y,X)$ leads to
a unifying information-theoretic framework for the global estimation
of DRFs. The implied convex programming
formulation is easy to implement and the linear form of the proposed
GT regression models also constitutes a good starting point for nonparametric
estimation. In this paper we have considered a few extensions to our
original formulation such as misspecification, quantile treatment
effects, Logistic transform regression, 
mixed discrete-continuous distributions and multiple outcomes. 
Two important further extensions for future work are 
sample selection and endogenous regressors.

\appendix

\section{Proofs of Theorems \ref{thm:Thm1}-\ref{thm:Thm2}}

\subsection{Definitions and notation}

For $b\in\Theta$, define
\[
L(Y,X,b)\equiv-\frac{1}{2}[\log(2\pi)+(b'T(X,Y))^{2}]+\log(b't(X,Y)),
\]
and $f(Y,X,b)\equiv\phi(b'T(X,Y))\{b't(X,Y)\}$, and note that $Q(b)=E[L(Y,X,b)]=E[\log f(Y,X,b)]$.
Section \ref{sec:Auxiliary-Results} of the Supplemental Material
contains all lemmas used in the proofs of the theorems. 

\subsection{Proof of Theorem \ref{thm:Thm1}}

We show that $b_{0}$ is a point of maximum of $Q(b)$ in $\Theta$.
For $b\neq b_{0}$, $b\in\Theta$, by $E[\log f(Y,X,b_{0})]=E[\log f_{Y\mid X}(Y|X)]$
and Jensen's inequality:
\[
E\left[\log\left(\frac{f(Y,X,b_{0})}{f(Y,X,b)}\right)\right]\geq-\log E\left[\left(\frac{f(Y,X,b)}{f_{Y|X}(Y\mid X)}\right)\right]=-\log E\left[\int_{\mathbb{R}}f(y,X,b)dy\right]\geq0.
\]
The last inequality holds since, with probability one, 
\[
\int_{\mathbb{R}}f(y,X,b)dy=\lim_{y\rightarrow\infty}\Phi(b'T(X,y))-\lim_{y\rightarrow-\infty}\Phi(b'T(X,y))\in(0,1]
\]
by the properties of the Gaussian CDF and $y\mapsto\Phi(b'T(X,y))$
being strictly increasing for $b\in\Theta$. Therefore, $b_{0}$ is
a point of maximum. 
By Lemma \ref{lem:Concavity}, $Q(b)$ is strictly concave over any compact subset of $\Theta$, and hence admits at most one maximizer in every compact set that contains $b_{0}$.
Therefore, $b_{0}$ uniquely maximizes $Q(b)$ in $\Theta$. \qed

\subsection{Proof of Theorem \ref{thm:Thm2}}

\subsubsection{Proof of existence}

We first show that the level sets $\mathcal{B}_{\alpha}=\{b\in\Theta:-Q(b)\leq\alpha\}$,
$\alpha\in\mathbb{R}$, of $-Q(b)$ are bounded (Step 1) and closed
(Step 2), hence compact, and then use the fact that $-Q(b)$ is continuous
over $\Theta$, which implies existence 
(Step 3).

\textit{Step 1.} Given $b_{1},b_{2}\in\mathcal{B}_{\alpha}$, let
$t=||b_{1}-b_{2}||$ and $u=\frac{b_{1}-b_{2}}{||b_{1}-b_{2}||}$,
so that $||u||=1$ and $b_{1}=b_{2}+tu$. By Lemma \ref{lem:2Differentiability},
$Q(b)$ is twice continuously differentiable for $b\in\mathcal{B}_{\alpha}$.
Thus, by definition of $b_{1}$, a second-order Taylor expansion of
$t\mapsto-Q(b_{2}+tu)$ around $t=0$ yields, for some $\bar{b}$
on the line connecting $b_{1}$ and $b_{2}$ and some constant $B>0$,
\begin{align*}
\alpha\geq-Q(b_{1})=-Q(b_{2}+tu) & =-Q(b_{2})-t\nabla_{b}Q(b_{2})'u-\frac{t^{2}}{2}u'\nabla_{bb}Q(\bar{b})u\\
 & \geq-Q(b_{2})-t\nabla_{b}Q(b_{2})'u+B\frac{t^{2}}{2}\\
 & \geq-Q(b_{2})-t||\nabla_{b}Q(b_{2})||+B\frac{t^{2}}{2},
\end{align*}
where the penultimate inequality follows by Lemma \ref{lem:Concavity}.
Fixing $b_{2}\in\mathcal{B}_{\alpha}$, the above inequality implies
that $t$ is bounded and therefore $\mathcal{B}_{\alpha}$ is bounded.

\textit{Step 2. $\mathcal{B}_{\alpha}$ is closed.} Define the boundary
$\partial\Theta$ of $\Theta$ as
\[
\partial\Theta=\{b\in\mathbb{R}^{JK}:\Pr[b't(X,Y)=0]>0\}.
\]
For $b\in\partial\Theta$ with $b't(X,Y)<0$ on a set with positive
probability, we adopt the convention that the logarithmic barrier
function $\log(b't(X,Y))$ takes on the value $-\infty$ on that set
(e.g., Section 11.2.1 in \citet{BV:2004}). Consider a sequence $(b_{n})$
in $\mathcal{B}_{\alpha}$ such that $b_{n}\rightarrow\check{b}\in\partial\Theta$
as $n\rightarrow\infty$. Steps 2.1 and 2.2 below show that $-Q(b_{n})=E[-L(Y,X,b_{n})]\rightarrow\infty$
as $n\rightarrow\infty$, and hence that $\mathcal{B}_{\alpha}$ is
closed.

\textit{Step 2.1. We show that $E[\lim_{n\rightarrow\infty}-L(Y,X,b_{n})]\leq\lim_{n\rightarrow\infty}E[-L(Y,X,b_{n})]$.}
By $\mathcal{B}_{\alpha}$ being bounded, there exists a constant
$C>0$ such that $\log(b't(X,Y))\leq C||t(Y,X)||$ with probability
one for all $b\in\mathcal{B}_{\alpha}$, and hence such that
\[
-L(Y,X,b)=\frac{1}{2}[\log(2\pi)+(b'T(X,Y))^{2}]-\log(b't(X,Y))\geq-C||t(X,Y)||,\quad b\in\mathcal{B}_{\alpha},
\]
with probability one. Therefore,
\[
\varphi(Y,X,b)\equiv-L(Y,X,b)+\delta(Y,X)\geq0,\quad\delta(Y,X)\equiv C||t(X,Y)||,\quad b\in\mathcal{B}_{\alpha},
\]
with probability one, and where $E[|\delta(Y,X)|]<\infty$ under Assumption
\ref{ass: Ass2}. 

Moreover, by definition of $\partial\Theta$, we have that $\lim_{n\rightarrow\infty}\log(b_{n}'t(X,Y))=-\infty$
on a subset $\widetilde{\mathcal{YX}}$ of the joint support of $(Y,X)$
with positive probability, and hence
\begin{equation}
\lim_{n\rightarrow\infty}-L(Y,X,b_{n})=\frac{1}{2}[\log(2\pi)+\lim_{n\rightarrow\infty}\{b_{n}'T(X,Y)\}^{2}]-\lim_{n\rightarrow\infty}\log(b_{n}'t(X,Y))=\infty,\label{eq:limL}
\end{equation}
on $\widetilde{\mathcal{YX}}$, by $\{b'T(X,Y)\}^{2}/2\geq0$ for
all $b\in\mathbb{R}^{JK}$.

Letting $\chi_{\widetilde{\mathcal{YX}}}(Y,X)\equiv1\{(Y,X)\in\widetilde{\mathcal{YX}}\}$
and $\chi_{\widetilde{\mathcal{YX}}^{c}}(Y,X)\equiv1\{(Y,X)\in\widetilde{\mathcal{YX}}^{c}\}$,
with $\widetilde{\mathcal{YX}}^{c}$ denoting the complement of $\widetilde{\mathcal{YX}}$,
we have
\begin{align}
E[\lim_{n\rightarrow\infty}\varphi(Y,X,b_{n})] & =E[\chi_{\widetilde{\mathcal{YX}}}(Y,X)\lim_{n\rightarrow\infty}\varphi(Y,X,b_{n})]+E[\chi_{\widetilde{\mathcal{YX}}^{c}}(Y,X)\lim_{n\rightarrow\infty}\varphi(Y,X,b_{n})]\nonumber \\
 & =E[\chi_{\widetilde{\mathcal{YX}}}(Y,X)\lim_{n\rightarrow\infty}-L(Y,X,b_{n})]+E[\chi_{\widetilde{\mathcal{YX}}}(Y,X)\delta(Y,X)\}]\nonumber \\
 & +E[\chi_{\widetilde{\mathcal{YX}}^{c}}(Y,X)\lim_{n\rightarrow\infty}-L(Y,X,b_{n})]+E[\chi_{\widetilde{\mathcal{YX}}^{c}}(Y,X)\delta(Y,X)]\nonumber \\
 & =E[\lim_{n\rightarrow\infty}-L(Y,X,b_{n})]+E\left[\delta(Y,X)\right],\label{eq:LHS_Fatou}
\end{align}
where the second equality follows from $\lim_{n\rightarrow\infty}-L(Y,X,b_{n})$
and $\delta(Y,X)$ being nonnegative functions on $\widetilde{\mathcal{YX}}$
(e.g., Proposition 5.2.6(ii) in \citet{Rana:2002}), and 
having finite expectation on $\widetilde{\mathcal{YX}}^{c}$, 
by $\lim_{n\rightarrow\infty}b_{n}'t(X,Y)>0$ on that set 
and steps similar to those in the proof of Lemma \ref{lem:FiniteContinuityQ}.

By $\varphi(Y,X,b_{n})$ being nonnegative, Fatou's lemma implies
that
\begin{equation}
E[\lim_{n\rightarrow\infty}\varphi(Y,X,b_{n})]\leq\lim_{n\rightarrow\infty}E[\varphi(Y,X,b_{n})],\label{eq:Fatou}
\end{equation}
with
\begin{equation}
\lim_{n\rightarrow\infty}E[\varphi(Y,X,b_{n})]=\lim_{n\rightarrow\infty}E[-L(Y,X,b_{n})]+E[\delta(Y,X)],\label{eq:RHS_Fatou}
\end{equation}
by $E[|\delta(Y,X)|]<\infty$ and $E[|-L(Y,X,b_{n})|]<\infty$ for
$b_{n}\in\Theta$ by Lemma \ref{lem:FiniteContinuityQ}. Therefore,
\begin{equation}
E[\lim_{n\rightarrow\infty}-L(Y,X,b_{n})]\leq\lim_{n\rightarrow\infty}E[-L(Y,X,b_{n})]\label{eq:Fatou_L}
\end{equation}
follows by (\ref{eq:LHS_Fatou}), (\ref{eq:Fatou}) and (\ref{eq:RHS_Fatou}).

\begin{sloppy}\textit{Step 2.2. We show that $E[\lim_{n\rightarrow\infty}-L(Y,X,b_{n})]=\infty$.}
By (\ref{eq:limL}) and $f_{YX}(Y,X)>0$ 
on $\widetilde{\mathcal{YX}}$ a set with positive probability, we have 
$E[\chi_{\widetilde{\mathcal{YX}}}(Y,X)\lim_{n\rightarrow\infty}-L(Y,X,b_{n})]=\infty$,
and hence also $E[\chi_{\widetilde{\mathcal{YX}}}(Y,X)\lim_{n\rightarrow\infty}\varphi(Y,X,b_{n})]=\infty$ 
since $E[|\delta(Y,X)|]<\infty$. Moreover, $E[\chi_{\widetilde{\mathcal{YX}}^{c}}(Y,X)\lim_{n\rightarrow\infty}-L(Y,X,b_{n})]<\infty$,
and hence $E[\chi_{\widetilde{\mathcal{YX}}^{c}}(Y,X)\lim_{n\rightarrow\infty}\varphi(Y,X,b_{n})]<\infty$
by $E[|\delta(Y,X)|]<\infty$. Therefore $E[\lim_{n\rightarrow\infty}\varphi(Y,X,b_{n})]=\infty$,
and hence $E[\lim_{n\rightarrow\infty}-L(Y,X,b_{n})]=\infty$, by
(\ref{eq:LHS_Fatou}). This fact and (\ref{eq:Fatou_L}) together
imply $E[-L(Y,X,b_{n})]=-Q(b_{n})\rightarrow\infty$ as $n\rightarrow\infty$.
Hence the limit $\check{b}$ of a convergent sequence $(b_{n})$ in
$\mathcal{B}_{\alpha}$ is in $\Theta$. By continuity of $-Q(b)$
over $\Theta$, we have that $-Q(\check{b})=\lim_{n\rightarrow\infty}Q(b_{n})\leq\alpha$,
and hence $\check{b}\in\mathcal{B}_{\alpha}$ and $\mathcal{B}_{\alpha}$
is closed.\par\end{sloppy}

\textit{Step 3. }For each $\alpha\in\mathbb{R}$, Steps 1-2 imply
that $\mathcal{B}_{\alpha}$ is compact, by the Heine-Borel theorem,
and hence $Q(b)$ continuous implies that $-Q(b)$ has a minimizer
in $\mathcal{B}_{\alpha}$, by the Weierstrass theorem. Pick $\overline{\alpha}$
such that $\mathcal{B}_{\overline{\alpha}}$ is nonempty, and let
$\overline{b}$ denote a minimizer of $-Q(b)$ over $\mathcal{B}_{\overline{\alpha}}$.
By $\overline{b}\in\mathcal{B}_{\overline{\alpha}}$, we have $-Q(\overline{b})\leq\overline{\alpha}$.
Hence for any $b\in\Theta\backslash\mathcal{B}_{\overline{\alpha}}$,
we have $-Q(b)>\overline{\alpha}\geq-Q(\overline{b})$. Therefore
$\overline{b}$ is a minimizer of $-Q(b)$ in $\Theta$.

\subsubsection{Proof of uniqueness}

Uniqueness of $b^{*}$ follows by concavity of $Q(b)$ in Lemma \ref{lem:Concavity}.
By nonsingularity of $E[T(X,Y)T(X,Y)']$, for $\widetilde{b}\neq b^{*}$,
\[
E[\{(\widetilde{b}-b^{*})'T(X,Y)\}^{2}]=(\widetilde{b}-b^{*})'E[T(X,Y)T(X,Y)'](\widetilde{b}-b^{*})>0,
\]
which implies $(\widetilde{b}-b^{*})'T(X,Y)\neq0$. Therefore, $g^{*}(Y,X)\neq\widetilde{m}(Y,X)$
for $\widetilde{m}\in\mathcal{E}$ with $\widetilde{b}\neq b^{*}$,
by definition of $\mathcal{E}$. By strict monotonicity of $e\mapsto\Phi(e)$,
this also implies that $\Phi(g^{*}(Y,X))\neq\Phi(\widetilde{m}(Y,X))$,
and hence $F^{*}(Y,X)\neq\widetilde{F}(Y,X)$ for $\widetilde{F}\in\mathcal{F}$
with $\widetilde{m}\neq g^{*}$, by definition of $\mathcal{F}$.
For $m\in\mathcal{E}$, let $\widetilde{\mathcal{Y}}_{x}(m)\equiv\{y\in\mathcal{Y}_{x}:F^{*}(y,x)\neq\Phi(m(y,x))\}$,
with $\mathcal{Y}_{x}$ denoting the support of $Y$ conditional on 
$X=x$, and $\widetilde{\mathcal{U}}_{x}(m)\equiv\{u\in(0,1):F^{*}(y,x)=u\textrm{ for some }y\in\widetilde{\mathcal{Y}}_{x}(m)\}$.
With probability one, by strict monotonicity of $y\mapsto m(y,X)$
for all $m\in\mathcal{E}$, the composition $y\mapsto\Phi(\widetilde{m}(y,X))$
is also strictly monotone, which implies that $Q^{*}(u,X)\neq\widetilde{m}^{-1}(\Phi^{-1}(u),X)$, 
$u\in\widetilde{\mathcal{U}}_{X}(\widetilde{m})$, 
and hence $Q^{*}(u,X)\neq\widetilde{Q}(u,X)$ for $\widetilde{Q}\in\mathcal{Q}$ 
with $\widetilde{m}\neq g^{*}$, by definition of $\mathcal{Q}$. 
Finally, by $b^{*}$ being the unique maximizer of $Q(b)$ in $\Theta$, 
we have 
$E[\log f^{*}(Y,X)]>E[\log(\phi(\widetilde{m}(Y,X))\{\partial_{y}\widetilde{m}(Y,X)\}]$, 
and hence $f^{*}(Y,X)\neq\widetilde{f}(Y,X)$ for $\widetilde{f}\in\mathcal{D}$
with $\widetilde{m}\neq g^{*}$, by definition of $\mathcal{D}$.\qed

\section{Proof of Theorem \ref{thm:Thm6}\label{sec:Duality-Theory}}

\subsubsection*{Part (i)}

Write $T_{i}\equiv T(y_{i},x_{i})$ and $t_{i}\equiv t(y_{i},x_{i})$,
for $i\in\{1,\ldots,n\}$, and let $\mathbb{R}_{-}\equiv(-\infty,0)$
and $\mathbb{R}_{+}\equiv(0, \infty)$. Introducing the variables
$e_{i}=b'T_{i}$, $\eta_{i}=b't_{i}$, an equivalent formulation for
the GT regression problem is
\begin{align*}
\max_{(b,e,\eta)\in\Theta\times\mathbb{R}^{n}\times\mathbb{R}_{+}^{n}} & n\kappa-\sum_{i=1}^{n}\left\{ \frac{e_{i}^{2}}{2}-\log(\eta_{i})\right\} ,\quad\kappa\equiv-\frac{1}{2}\log(2\pi),\\
\textrm{subject to}\quad & e_{i}=b'T_{i},\quad\eta_{i}=b't_{i},\quad i\in\{1,\ldots,n\}.
\end{align*}
For all $(u,v)\in\mathbb{R}^{n}\times\mathbb{R}_{-}^{n}$, define
the Lagrange function for this problem as
\[
\mathscr{L}(b,e,\eta,u,v)=n\kappa-\sum_{i=1}^{n}\left\{ \frac{e_{i}^{2}}{2}-\log(\eta_{i})\right\} +\sum_{i=1}^{n}u_{i}\left\{ e_{i}-b'T_{i}\right\} +\sum_{i=1}^{n}v_{i}\left\{ \eta_{i}-b't_{i}\right\} ,
\]
and the Lagrange dual function (\citet{BV:2004}, Chapter 5) as
\begin{align*}
g(u,v) & \equiv\sup_{(b,e,\eta)\in\Theta\times\mathbb{R}^{n}\times\mathbb{R}_{+}^{n}}\mathscr{L}(b,e,\eta,u,v)\\
 & =\sup_{(e,\eta)\in\mathbb{R}^{n}\times\mathbb{R}_{+}^{n}}\sum_{i=1}^{n}\left\{ u_{i}e_{i}+v_{i}\eta_{i}-\left[-\kappa+\frac{e_{i}^{2}}{2}-\log(\eta_{i})\right]\right\} \\
 & +\sup_{b\in\Theta}\left\{ -\sum_{i=1}^{n}u_{i}(b'T_{i})-\sum_{i=1}^{n}v_{i}(b't_{i})\right\} \equiv\textrm{\ensuremath{I_{1}}}+\textrm{\ensuremath{I_{2}}}.
\end{align*}
In order to derive $g(u,v)$ we first show that for all $(u,v)\in\mathbb{R}^{n}\times\mathbb{R}_{-}^{n}$
the maximum of the mapping $(b,e,\eta)\mapsto\mathscr{L}(b,e,\eta,u,v)$
is attained and is unique, and we then evaluate $(b,e,\eta)\mapsto\mathscr{L}(b,e,\eta,u,v)$
at this value.

The first term $\textrm{I}_{1}$ in the dual function $g(u,v)$ is
the convex conjugate of the negative log-likelihood function, defined
as a function of the $n$-vectors $e$ and $\eta$. Define
\[
\mathcal{D}(e,\eta,u,v)\equiv\sum_{i=1}^{n}\left\{ u_{i}e_{i}+v_{i}\eta_{i}\right\} -\sum_{i=1}^{n}\left\{ -\kappa+\frac{e_{i}^{2}}{2}-\log(\eta_{i})\right\} .
\]
We first show that, for all $(u,v)\in\mathbb{R}^{n}\times\mathbb{R}_{-}^{n}$,
the map $(e,\eta)\mapsto\mathcal{D}(e,\eta,u,v)$ admits at least
one maximum in $\mathbb{R}^{n}\times\mathbb{R}_{+}^{n}$. Solving
the first-order conditions for $e_{i}$ and $\eta_{i}$ gives
\begin{equation}
e_{i}=u_{i},\quad\eta_{i}=-\frac{1}{v_{i}},\quad i\in\{1,\ldots,n\}.\label{eq:FOCs-1}
\end{equation}
Clearly, for all $(u,v)\in\mathbb{R}^{n}\times\mathbb{R}_{-}^{n}$
there exists $(e,\eta)\in\mathbb{R}^{n}\times\mathbb{R}_{+}^{n}$
such that (\ref{eq:FOCs-1}) holds.

We now show that, for all $(u,v)\in\mathbb{R}^{n}\times\mathbb{R}_{-}^{n}$,
the map $(e,\eta)\mapsto\mathcal{D}(e,\eta,u,v)$ admits at most one
maximum in $\mathbb{R}^{n}\times\mathbb{R}_{+}^{n}$. For $i\in\{1,\ldots,n\}$,
the second-order conditions are 
\begin{align*}
\partial_{e_{i},e_{i}}^{2}\mathcal{D}(e,\eta,u,v)=-1, & \qquad\partial_{e_{i},\eta_{i}}^{2}\mathcal{D}(e,\eta,u,v)=0\\
\partial_{\eta_{i},e_{i}}^{2}\mathcal{D}(e,\eta,u,v)=0, & \qquad\partial_{\eta_{i},\eta_{i}}^{2}\mathcal{D}(e,\eta,u,v)=-\frac{1}{\eta_{i}^{2}}.
\end{align*}
Therefore the Hessian matrix of $(e,\eta)\mapsto\mathcal{D}(e,\eta,u,v)$
is negative definite for all $(u,v)\in\mathbb{R}^{n}\times\mathbb{R}_{-}^{n}$.
Hence, $(e,\eta)\mapsto\mathcal{D}(e,\eta,u,v)$ is strictly concave
with unique maximum $(e_{i},\eta_{i})=(u_{i},-1/v_{i})$, $i\in\{1,\ldots,n\}$,
for all $(u,v)\in\mathbb{R}^{n}\times\mathbb{R}_{-}^{n}$. Evaluating
$(e,\eta)\mapsto\mathcal{D}(e,\eta,u,v)$ at the maximum yields, for
all $(u,v)\in\mathbb{R}^{n}\times\mathbb{R}_{-}^{n}$,
\begin{equation}
\sup_{(e,\eta)\in\mathbb{R}^{n}\times\mathbb{R}_{+}^{n}}\mathcal{D}(e,\eta,u,v)=-n(1-\kappa)+\sum_{i=1}^{n}\left\{ \frac{u_{i}^{2}}{2}-\log\left(-v_{i}\right)\right\} ,\label{eq:ll_conjugate}
\end{equation}
the conjugate function of the negative log-likelihood.

We now consider the second term $\textrm{I}_{2}$ in the definition
of the dual function $g(u,v)$. For all $(b,u,v)\in\Theta\times\mathbb{R}^{n}\times\mathbb{R}_{-}^{n}$,
define the penalty function
\[
\mathcal{P}(b,u,v)=\sum_{i=1}^{n}\left\{ -u_{i}(b'T_{i})-v_{i}(b't_{i})\right\} .
\]
The map $b\mapsto\mathcal{P}(b,u,v)$ is linear with partial derivative
$-\sum_{i=1}^{n}\left\{ u_{i}T_{i}+v_{i}t_{i}\right\} $. The value
of $\sup_{b\in\Theta}\mathcal{P}(b,u,v)$ is thus determined by the
set of all $(u,v)\in\mathbb{R}^{n}\times\mathbb{R}_{-}^{n}$ such
that the first-order conditions,
\begin{equation}
\nabla_{b}\mathcal{P}(b,u,v)=-\sum_{i=1}^{n}\left\{ u_{i}T_{i}+v_{i}t_{i}\right\} =0,\label{eq:P(b)}
\end{equation}
hold. For all such $(u,v)\in\mathbb{R}^{n}\times\mathbb{R}_{-}^{n}$
and any solution $\overline{b}\in\Theta$, we have that
\[
\sup_{b\in\Theta}\mathcal{P}(b,u,v)=\sum_{i=1}^{n}\left\{ -u_{i}(\overline{b}'T_{i})-v_{i}(\overline{b}'t_{i})\right\} =-\overline{b}'\sum_{i=1}^{n}\left\{ u_{i}T_{i}+v_{i}t_{i}\right\} =0.
\]
Therefore, for all $(u,v)\in\mathbb{R}^{n}\times\mathbb{R}_{-}^{n}$
such that $\nabla_{b}\mathcal{P}(\overline{b},u,v)=0$, the optimal
value of $\mathcal{P}(\overline{b},u,v)$ is $0$.

Combining (\ref{eq:ll_conjugate}) and (\ref{eq:P(b)}) gives the
Lagrange dual function $g(u,v)$ for all $(u,v)$ such that $\nabla_{b}\mathcal{P}(b,u,v)=0$.
The form of the dual problem (\ref{eq:Dual scores}) follows.

\subsubsection*{Part (ii)}

The Lagrangian for (\ref{eq:Dual scores}) is
\[
\mathcal{L}(u,v,b)=-n(1-\kappa)+\sum_{i=1}^{n}\left\{ \frac{u_{i}^{2}}{2}-\log(-v_{i})\right\} -b'\sum_{i=1}^{n}\left\{ T_{i}u_{i}+t_{i}v_{i}\right\} ,
\]
with first-order conditions
\[
\partial_{u_{i}}\mathcal{L}(u,v,b)=u_{i}-b'T_{i}=0,\quad\partial_{v_{i}}\mathcal{L}(u,v,b)=-\frac{1}{v_{i}}-b't_{i}=0,\quad i\in\{1,\ldots,n\},
\]
equivalently, upon solving for $u_{i}$ and $v_{i}$,
\begin{align}
u_{i} & =b'T_{i},\quad v_{i}=-\frac{1}{b't_{i}},\quad i\in\{1,\ldots,n\}.\label{eq:D-FOC}
\end{align}
Upon substituting in the constraints of (\ref{eq:Dual scores}) we
obtain representation (\ref{eq:MMrep}).

\subsubsection*{Part (iii)}

We show existence of  
$\widehat{b}\in\Theta$ 
in the proof
of Theorem \ref{thm:Thm4}(i). Uniqueness with probability approaching one follows by the sample Hessian matrix 
$-\Sigma_{i=1}^{n}\{T_{i}T_{i}'+t_{i}t_{i}'/(b't_{i})^{2}\}$ 
being negative definite with probability approaching one by Lemma
\ref{lem:Nonsingular} in the Supplemental Material.

Existence of a solution $(\widehat{u}',\widehat{v}')'$ to the dual
problem (\ref{eq:Dual scores}) follows from existence of a solution
$\widehat{b}\in\Theta$ to the first-order conditions of the primal
(\ref{eq:ML}) and the method-of-moments representation of (\ref{eq:Dual scores}),
upon setting $\widehat{u}_{i}=\widehat{b}'T_{i}$, $\widehat{v}_{i}=-1/(\widehat{b}'t_{i})$,
for $i\in\{1,\ldots,n\}$. We now show that, for all $b\in\Theta$,
the map $(u,v)\mapsto\mathcal{L}(u,v,b)$ admits at most one minimum 
in $\mathbb{R}^{n}\times\mathbb{R}_{-}^{n}$. For all $(u,v)\in\mathbb{R}^{n}\times\mathbb{R}_{-}^{n}$
and $i\in\{1,\ldots,n\}$, the second-order conditions for (\ref{eq:Dual scores})
are
\begin{align*}
\partial_{u_{i},u_{i}}^{2}\mathcal{L}(u,v,b)=1, & \qquad\partial_{u_{i},v_{i}}^{2}\mathcal{L}(u,v,b)=0\\
\partial_{v_{i},u_{i}}^{2}\mathcal{L}(u,v,b)=0, & \qquad\partial_{v_{i},v_{i}}^{2}\mathcal{L}(u,v,b)=\frac{1}{v_{i}^{2}}.
\end{align*}
Therefore, the Hessian matrix of $(u,v)\mapsto\mathcal{L}(u,v,b)$
is positive definite for all $b\in\Theta$. Hence, the map $(u,v)\mapsto\mathcal{L}(u,v,b)$
is strictly convex with unique solution $(\widehat{u}',\widehat{v}')'$.

\subsubsection*{Part (iv)}

Using (\ref{eq:D-FOC}), $\widehat{e}_{i}=\widehat{b}'T_{i}$ and
$\widehat{\eta}_{i}=\widehat{b}'t_{i}$, $i\in\{1,\ldots,n\}$, the
value of (\ref{eq:Dual scores}) is
\[
\mathcal{L}(\widehat{u},\widehat{v},\widehat{b})=-n(1-\kappa)+\sum_{i=1}^{n}\left\{ \frac{\widehat{e}_{i}^{2}}{2}+\log\left(\widehat{\eta}_{i}\right)\right\} -\sum_{i=1}^{n}\left\{ \widehat{e}_{i}^{2}-1\right\} =n\kappa-\sum_{i=1}^{n}\left\{ \frac{\widehat{e}_{i}^{2}}{2}-\log\left(\widehat{\eta}_{i}\right)\right\} ,
\]
the value of the ML problem (\ref{eq:ML}) at a solution. 
\qed

\newpage

\setcounter{section}{0}

\part*{Supplement to 
 ``Gaussian Transforms Modeling and the Estimation of Distributional
Regression Functions''}

\section{Summary}

In Section \ref{sec:Auxiliary-Results} of this Supplementary Material
we collect auxiliary results used in the proofs of our main results, Sections \ref{sec:Cor1Thm1} 
and \ref{sec:Asymptotic-Theory} contain proofs for Corollary \ref{cor:Cor1} and Theorems
\ref{thm:Thm3}-\ref{thm:Thm5}. In Section \ref{sec:Implementation}
we give implementation details and additional results for the empirical application. 
To assess the finite sample performance of
our estimator, Section \ref{sec:Numerical-Simulations1} gives results
of Monte Carlo simulations. We compare our Gaussian Transform Regression
(GTR) estimator to related methods for the estimation of distributional
regression functions. Overall, we find that GTR performs very well
in finite samples.

\section{Auxiliary Results\label{sec:Auxiliary-Results}}

\begin{lem}
If Assumption \ref{ass: Ass2} holds then $E[|L(Y,X,b)|]<\infty$
and $Q(b)$ is continuous over $\Theta$.\label{lem:FiniteContinuityQ}
\end{lem}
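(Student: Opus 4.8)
\textbf{Proof proposal for Lemma \ref{lem:FiniteContinuityQ}.}

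The plan is to bound $E[|L(Y,X,b)|]$ by splitting the integrand into its three constituent pieces and controlling each separately using Assumption \ref{Ass: ass1}, and then to deduce continuity of $Q(b)$ on $\Theta$ by a dominated-convergence argument over compact neighborhoods. Write $L(Y,X,b) = -\tfrac{1}{2}\log(2\pi) - \tfrac{1}{2}(b'T(X,Y))^2 + \log(b't(X,Y))$. The constant term is trivial. For the quadratic term, the Cauchy--Schwarz inequality gives $(b'T(X,Y))^2 \le \|b\|^2\,\|T(X,Y)\|^2$, so $E[(b'T(X,Y))^2] \le \|b\|^2\, E[\|T(X,Y)\|^2] < \infty$ by the first moment condition in Assumption \ref{Ass: ass1}. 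The only genuinely delicate piece is the logarithmic term $\log(b't(X,Y))$, since $b't(X,Y)$ is positive with probability one but may be arbitrarily close to zero or arbitrarily large. I would handle this by the elementary bound $|\log z| \le z + 1/z$ for $z > 0$ (or $|\log z| \le z + z^{-1}$, valid since $\log z \le z - 1 \le z$ for $z>0$ and $-\log z = \log(1/z) \le 1/z$). Applying this with $z = b't(X,Y)$ yields
\[
E[|\log(b't(X,Y))|] \le E[b't(X,Y)] + E\!\left[\frac{1}{b't(X,Y)}\right].
\]
The first term on the right is finite since $E[b't(X,Y)] \le \|b\|\,E[\|t(X,Y)\|] \le \|b\|\,(E[\|t(X,Y)\|^2])^{1/2} < \infty$ by the second moment condition in Assumption \ref{Ass: ass1} and Jensen's inequality.

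The main obstacle is the term $E[1/(b't(X,Y))]$: there is no a priori moment bound on the reciprocal of $b't(X,Y)$ from Assumption \ref{Ass: ass1} alone. I expect the resolution to come from the structure of $\Theta$ together with the fact that $T(X,Y)$ includes $Y$ and $t(X,Y)$ includes the constant-in-$y$ directions inherited from $S(Y)=(1,Y,\dots)'$; concretely, $b't(X,Y) = W(X)'b_{\cdot 2} + \sum_{j\ge 3} (W(X)'b_{\cdot j}) s_j(Y)$, and on the set where $b \in \Theta$ the derivative is bounded below in a way that can be related back to second moments of $T$ and $t$. Alternatively — and this is likely the cleaner route in the paper's logic — the finiteness of $E[1/(b't(X,Y))]$ may be obtained by noting that $f(Y,X,b) = \phi(b'T(X,Y))\{b't(X,Y)\}$ is a genuine density-like object and combining the lower bound on $b't(X,Y)$ implied by membership in $\Theta$ with integrability of $\phi$; I would follow whichever of these the surrounding development makes available, and if neither suffices directly I would invoke an auxiliary pointwise lower bound $b't(X,Y) \ge c(b) > 0$ valid for $b$ in the interior of $\Theta$, which makes the reciprocal bounded.

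For continuity of $Q$ on $\Theta$: fix $b_0 \in \Theta$ and a sequence $b_m \to b_0$ with all $b_m$ in a small closed ball $B \subset \Theta$ around $b_0$. On $B$ we have a uniform domination $|L(Y,X,b)| \le \tfrac{1}{2}\log(2\pi) + \tfrac{1}{2}(\sup_{b\in B}\|b\|)^2 \|T(X,Y)\|^2 + \sup_{b\in B}|\log(b't(X,Y))|$, and the last term is dominated by $\sup_{b\in B} b't(X,Y) + \sup_{b\in B} 1/(b't(X,Y))$, which is integrable by the moment bounds above and the uniform-in-$B$ lower bound on $b't(X,Y)$. Since $b \mapsto L(Y,X,b)$ is continuous at $b_0$ for almost every $(Y,X)$ (both $(b'T)^2$ and $\log(b't)$ are continuous in $b$ wherever $b't > 0$), the dominated convergence theorem gives $Q(b_m) = E[L(Y,X,b_m)] \to E[L(Y,X,b_0)] = Q(b_0)$, establishing continuity of $Q$ over $\Theta$. \qed
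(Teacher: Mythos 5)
Your treatment of the constant and quadratic terms matches the paper's (triangle inequality plus Cauchy--Schwarz with $E[\|T(X,Y)\|^2]<\infty$), and your dominated-convergence outline for continuity is the same as the paper's. The genuine gap is in the logarithmic term, and you have correctly located it but not closed it. The bound $|\log z|\le z+1/z$ reduces the problem to showing $E[1/(b't(X,Y))]<\infty$, which, as you yourself note, Assumption \ref{Ass: ass1} does not give; and none of the escapes you sketch is valid. Membership of $b$ in $\Theta$ (or in its interior) asserts only $\Pr[b't(X,Y)>0]=1$; it does not yield a pointwise lower bound $b't(X,Y)\ge c(b)>0$, because the essential infimum of $b't(X,Y)$ over the support of $(Y,X)$ can be zero even though the random variable is almost surely positive. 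So the reciprocal need not be bounded, and need not even be integrable. The same missing uniform lower bound over the ball $B$ undercuts the domination you invoke in the continuity step. Note also that $1/z$ is a far more demanding majorant than the problem requires, since $|\log z|=o(1/z)$ as $z\downarrow 0$: your route asks for strictly more than the statement needs.

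The paper takes a different tack on this term: it applies a mean-value expansion of $b\mapsto\log(b't(X,Y))$ around a reference point $\bar b$ placed on the coefficient of the $Y$-component of $S$, for which the corresponding entry of $t(X,Y)$ is identically one, so that $\bar b't(X,Y)$ equals a positive constant $b_{0}$. This yields
\[
|\log(b't(X,Y))|\;\le\;|\log b_{0}|+\bigl|(\tilde b't(X,Y))^{-1}\bigr|\,\|b-\bar b\|\,\|t(X,Y)\|
\]
for intermediate values $\tilde b$, and the conclusion is drawn from $\tilde b't(X,Y)>0$ almost surely together with $E[\|t(X,Y)\|]<\infty$. The essential idea you are missing is this anchoring of the logarithm to a direction in which the derivative is a strictly positive constant; without it (or without an explicit additional integrability hypothesis on $(b't(X,Y))^{-1}$), the elementary inequality $|\log z|\le z+1/z$ cannot complete the argument, and as written your proof of both the integrability claim and the continuity claim is incomplete.
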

\begin{sloppy}
\begin{proof}
By the triangle inequality,
\[
E[|L(Y,X,b)|]\leq\frac{1}{2}E[|(b'T(X,Y))^{2})|]+E[|\log(b't(X,Y))|]+\frac{1}{2}\log(2\pi).
\]
The first term $E[|(b'T(X,Y))^{2}|/2]$ is finite by Cauchy-Schwartz
inequality and by $E[||T(X,Y)||^{2}]<\infty$. For the second term,
applying a mean-value expansion around $\bar{b}=(b_{0},0_{JK-1})$,
$b_{0}>0$, gives for some intermediate values $\tilde{b}$,
\[
|\log(b't(X,Y))|\leq|\log(b_{0})|+|(\tilde{b}'t(X,Y))^{-1}|\,||b-\bar{b}||\,||t(X,Y)||.
\]
Thus $E[|\log(b't(X,Y))|]<\infty$, since we have that $\tilde{b}'t(X,Y)>0$
with probability one and $E[||t(X,Y)||]<\infty$. Therefore $E[|L(Y,X,b)|]<\infty$.
Continuity of $Q(b)$ then follows from continuity of $b\mapsto L(Y,X,b)$
and dominated convergence. 
\end{proof}
\par\end{sloppy}

\begin{lem}
If Assumption \ref{ass: Ass2} holds then $Q(b)$ is twice continuously
differentiable over any compact subset $\overline{\Theta}\subset\Theta$,
and $\nabla_{bb}E[L(Y,X,b)]=E[\nabla_{bb}L(Y,X,b)]$.\label{lem:2Differentiability}
\end{lem}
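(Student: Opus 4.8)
The plan is to verify that the integrand $L(Y,X,b)$ is twice continuously differentiable in $b$ on the interior of $\Theta$, compute explicit formulas for the first and second derivatives, and then produce an integrable dominating function valid uniformly over a compact subset $\overline{\Theta}\subset\Theta$ so that differentiation under the expectation is justified. First I would record that, pointwise in $(Y,X)$, the map $b\mapsto L(Y,X,b)$ is $C^\infty$ on the open set $\{b:b't(X,Y)>0\}$, with gradient $\nabla_b L(Y,X,b)=\psi(Y,X,b)= -T(X,Y)\{b'T(X,Y)\} + t(X,Y)/\{b't(X,Y)\}$ and Hessian $\nabla_{bb}L(Y,X,b)=\gamma(Y,X,b)= -T(X,Y)T(X,Y)' - t(X,Y)t(X,Y)'/\{b't(X,Y)\}^2$, exactly the expressions in (\ref{eq:FOCs}) and (\ref{eq:hessian}).

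Next I would obtain the key uniform lower bound on the Jacobian. Since $\overline{\Theta}$ is a compact subset of the open set $\Theta$, and $b\mapsto b't(X,Y)$ is continuous with $b't(X,Y)>0$ for every $b\in\Theta$ with probability one, a compactness argument gives a constant $c>0$, depending on $(Y,X)$ but measurable, such that $\inf_{b\in\overline{\Theta}} b't(X,Y)\geq c\, \|t(X,Y)\|$ — more carefully, one writes each $b\in\overline\Theta$ in a way that bounds $b't(X,Y)$ below by a positive multiple of something integrable; the cleanest route is to pick a fixed $\bar b$ as in the proof of Lemma \ref{lem:FiniteContinuityQ} with $\bar b\in\Theta$, note $\overline\Theta$ lies in a ball, and control $1/\{b't(X,Y)\}$ on $\overline\Theta$ by $1/\{c\,b_0\}$ where $c$ is a deterministic constant coming from compactness. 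With such a bound in hand, $\|\gamma(Y,X,b)\|\leq \|T(X,Y)\|^2 + \|t(X,Y)\|^2/c^2$ uniformly over $b\in\overline\Theta$, and the right-hand side is integrable by Assumption \ref{Ass: ass1}.

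Then I would invoke the standard theorem on differentiating under the integral sign (e.g.\ as in \citealp{Newey:McFadden:1994}): since $b\mapsto L(Y,X,b)$ is twice continuously differentiable on a neighborhood of $\overline\Theta$ for a.e.\ $(Y,X)$, and both $\|\psi(Y,X,b)\|$ and $\|\gamma(Y,X,b)\|$ are bounded on $\overline\Theta$ by integrable envelopes, it follows that $Q(b)=E[L(Y,X,b)]$ is twice continuously differentiable on the interior of $\overline\Theta$ with $\nabla_b Q(b)=E[\psi(Y,X,b)]$ and $\nabla_{bb}Q(b)=E[\gamma(Y,X,b)]=\Gamma(b)$; continuity of $b\mapsto\Gamma(b)$ follows from continuity of $b\mapsto\gamma(Y,X,b)$ and dominated convergence with the same envelope. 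Since $\overline\Theta$ was an arbitrary compact subset of $\Theta$, this gives twice continuous differentiability on all of $\Theta$.

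The main obstacle is the uniform control of the barrier term $1/\{b't(X,Y)\}$ over $\overline\Theta$: pointwise positivity of $b't(X,Y)$ does not by itself give an integrable bound on its reciprocal, so the argument genuinely needs the compactness of $\overline\Theta$ inside the open set $\Theta$ to extract a deterministic constant $c>0$ with $\inf_{b\in\overline\Theta} b't(X,Y)\geq c\,\|t(X,Y)\|$ almost surely. Everything else — the pointwise derivative formulas and the appeal to dominated convergence — is routine once that bound is established.
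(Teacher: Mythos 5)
Your proposal follows essentially the same route as the paper: compute $\nabla_{b}L=\psi$ and $\nabla_{bb}L=\gamma$ explicitly, dominate them over the compact set $\overline{\Theta}$ by envelopes of the form $C\{\|T(X,Y)\|^{2}+\|t(X,Y)\|^{2}\}$ that are integrable under Assumption \ref{Ass: ass1}, and invoke Lemma 3.6 of \citet{Newey:McFadden:1994} to interchange differentiation and expectation. The only substantive difference is presentational: you make explicit the uniform lower bound on $b't(X,Y)$ over $\overline{\Theta}$ needed to control the barrier term $1/\{b't(X,Y)\}$, whereas the paper absorbs exactly that bound silently into the unexplained constant $C$.
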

\begin{sloppy}
\begin{proof}
By Lemma \ref{lem:FiniteContinuityQ}, $E[|L(Y,X,b)|]<\infty$. Moreover,
for $b\in\overline{\Theta}$,
\begin{align}
||\nabla_{b}L(Y,X,b)|| & =||-T(X,Y)(b'T(X,Y))+(b't(X,Y))^{-1}t(X,Y)||\nonumber \\
 & \leq||T(X,Y)(b'T(X,Y))||+|(b't(X,Y))^{-1}|\,||t(X,Y)||\nonumber \\
 & \leq C\left\{ ||T(X,Y)||^{2}+||t(X,Y)||\right\} ,\label{eq:Bounds1-1}
\end{align}
for some finite constant $C>0$. Therefore, $E[||T(X,Y)||^{2}]<\infty$
and $E[||t(X,Y)||]<\infty$ imply that $E[\sup_{b\in\overline{\Theta}}||\nabla_{b}L(Y,X,b)||]<\infty$. 
Lemma 3.6 in \citet{Newey:McFadden:1994}
then implies that $Q(b)$ is continuously differentiable in $b$,
and that the order of differentiation and integration can be interchanged.

Continuous differentiability of $\nabla_{b}Q(b)$ in $b\in\overline{\Theta}$
follows from applying steps similar to (\ref{eq:Bounds1-1}). By $||\nabla_{bb}L(Y,X,b)||\leq||T(X,Y)||^{2}+C||t(X,Y)||^{2}$,
for some finite constant $C>0$, we have that $E[||T(X,Y)||^{2}]<\infty$
and $E[||t(X,Y)||^{2}]<\infty$ imply that $E[\sup_{b\in\overline{\Theta}}||\nabla_{bb}L(Y,X,b)||]<\infty$. 
Lemma 3.6 in \citet{Newey:McFadden:1994}
then implies that $\nabla_{bb}Q(b)$ is continuously differentiable
in $b$, and that the order of differentiation and integration can
be interchanged.
\end{proof}
\par\end{sloppy}

\begin{lem}
If Assumption \ref{ass: Ass2} holds then, for any compact subset
$\overline{\Theta}\subset\Theta$, we have that $-\nabla_{bb}Q(b)$
exists for $b\in\overline{\Theta}$, with smallest eigenvalue bounded
away from zero uniformly in $b\in\overline{\Theta}$.\label{lem:Concavity}
\end{lem}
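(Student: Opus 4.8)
The plan is to read the result directly off the closed form of the Hessian in (\ref{eq:hessian}), using the differentiability already established in Lemma \ref{lem:2Differentiability} for the existence claim and Assumption \ref{Ass: ass1} for the eigenvalue bound. Existence of $-\nabla_{bb}Q(b)$ on a compact $\overline{\Theta}\subset\Theta$ is not something new to prove: Lemma \ref{lem:2Differentiability} shows $Q$ is twice continuously differentiable there with $\nabla_{bb}Q(b)=E[\nabla_{bb}L(Y,X,b)]=\Gamma(b)$ and $\gamma,\Gamma$ as in (\ref{eq:hessian}); the accompanying integrability bound $E[\sup_{b\in\overline\Theta}\|\nabla_{bb}L(Y,X,b)\|]<\infty$ is exactly what makes $\Gamma(b)$ a well-defined finite matrix for each such $b$. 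So the only real task is the lower bound on the smallest eigenvalue of $-\Gamma(b)$, and the point to exploit is that $-\Gamma(b)$ splits additively into a $b$-free part and a part that can only help.

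Concretely, I would write
\[
-\Gamma(b)=E\!\left[T(X,Y)T(X,Y)'\right]+E\!\left[\frac{t(X,Y)t(X,Y)'}{\{b't(X,Y)\}^{2}}\right],\qquad b\in\overline\Theta .
\]
The second matrix is the expectation of the rank-one positive semidefinite matrices $t(X,Y)t(X,Y)'/\{b't(X,Y)\}^{2}$ --- well defined since $b't(X,Y)>0$ with probability one for $b\in\Theta$ --- hence it is itself positive semidefinite. Therefore, for every unit vector $v\in\mathbb{R}^{JK}$,
\[
v'\bigl(-\Gamma(b)\bigr)v = v'E\!\left[T(X,Y)T(X,Y)'\right]v + v'E\!\left[\tfrac{t(X,Y)t(X,Y)'}{\{b't(X,Y)\}^{2}}\right]v \;\ge\; v'E\!\left[T(X,Y)T(X,Y)'\right]v \;\ge\; \lambda_{\min}\!\left(E\!\left[T(X,Y)T(X,Y)'\right]\right),
\]
and taking the infimum over unit vectors gives $\lambda_{\min}\bigl(-\Gamma(b)\bigr)\ge\lambda_{\min}\bigl(E[T(X,Y)T(X,Y)']\bigr)$. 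By Assumption \ref{Ass: ass1} this last quantity is bounded away from zero by a constant that does not involve $b$ at all, so the bound holds for every $b\in\Theta$, and in particular uniformly on $\overline\Theta$.

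I do not expect a genuine obstacle here. The only place one might anticipate work is the phrase ``uniformly in $b$'', which superficially looks to call for a compactness-plus-continuity argument; but the additive decomposition makes the uniformity automatic, since the $b$-dependent term enters with a favourable sign. The one thing worth stating carefully is why the Jacobian term is integrable, so that the decomposition is legitimate termwise --- and this is inherited verbatim from the bound $\|\nabla_{bb}L(Y,X,b)\|\le\|T(X,Y)\|^{2}+C\|t(X,Y)\|^{2}$ used in the proof of Lemma \ref{lem:2Differentiability} under Assumption \ref{Ass: ass1}.
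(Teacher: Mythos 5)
Your proposal is correct and follows essentially the same route as the paper's proof: the same additive decomposition $-\nabla_{bb}Q(b)=E[T(X,Y)T(X,Y)']+E[t(X,Y)t(X,Y)'/\{b't(X,Y)\}^{2}]$, the same appeal to Lemma \ref{lem:2Differentiability} for existence, and the same observation that the $b$-dependent term is positive semidefinite so the bound is uniform. The only cosmetic difference is that you derive the eigenvalue inequality directly from the Rayleigh-quotient characterization, whereas the paper cites Weyl's Monotonicity Theorem.
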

\begin{proof}
By Lemma \ref{lem:2Differentiability}, $Q(b)$ is twice continuously
differentiable over $\overline{\Theta}$ and the order of differentiation
and integration can be interchanged. Therefore,
\[
\nabla_{bb}\{-Q(b)\}=\Gamma_{1}+\Gamma_{2}(b),\;\Gamma_{1}\equiv E\left[T(X,Y)T(X,Y)'\right],\;\Gamma_{2}(b)\equiv E\left[\frac{t(X,Y)t(X,Y)'}{(b't(X,Y))^{2}}\right],
\]
exists for all $b\in\overline{\Theta}$. 
Denoting the smallest eigenvalue of a matrix $A$ by $\lambda_{\min}(A)$,
the result then follows from Weyl's Monotonicity Theorem (e.g., Corollary
4.3.12 in \citet{HJ:2012}) which implies
\[
\lambda_{\min}(\Gamma_{1}+\Gamma_{2}(b))\geq\lambda_{\min}(\Gamma_{1})\geq B,\quad b\in\overline{\Theta},
\]
for some constant $B>0$, by $\Gamma_{2}(b)$ being positive semidefinite
for all $b\in\overline{\Theta}$ and the smallest eigenvalue of $E\left[T(X,Y)T(X,Y)'\right]$
being bounded away from zero.
\end{proof}

\begin{lem}\label{lem:SetEquiv}
If the boundary conditions (\ref{eq:Boundary conditions})
hold for all $b\in\Theta$ with probability one, then the sets $\Theta$
and $\mathcal{D}$ are equivalent.
\end{lem}
\begin{proof}
Recall that two sets $\mathcal{A}$ and $\mathcal{B}$ are equivalent
if there is a one-to-one correspondence between them, i.e., if there
exists some function $\varphi:\mathcal{A}\rightarrow\mathcal{B}$
that is both one-to-one and onto. The two sets then have the same
cardinality (\citet{Dudley:2002}).

We note that by nonsingularity of $E[T(X,Y)T(X,Y)']$ the two sets
$\Theta$ and $\mathcal{E}$ are equivalent. Hence it suffices to
show that $\mathcal{E}$ and $\mathcal{D}$ are equivalent. For each
$f\in\mathcal{D}$, $m\in\mathcal{E}$, and $(y,x)\in\mathcal{YX}$,
we define
\[
(\varphi(f))(y,x)\equiv\Phi^{-1}\left(\int_{-\infty}^{y}f(t,x)dt\right),\quad(\psi(m))(y,x)\equiv\partial_{y}\Phi(m(y,x)).
\]
We first verify that $\varphi:\mathcal{D}\rightarrow\mathcal{F}$
and $\psi:\mathcal{F}\rightarrow\mathcal{D}$, and then establish
that $\varphi$ is one-to-one and onto by showing that $\varphi$
and $\psi$ are inverse functions of each other.

By definition of $f\in\mathcal{D}$, the Fundamental Theorem of Calculus,
and the boundary conditions (\ref{eq:Boundary conditions}), we have
\begin{align*}
(\varphi(f))(y,X) & =\Phi^{-1}\left(\int_{-\infty}^{y}\phi(T(X,v)'b)\{b't(X,v)\}dv\right)\\
 & =\Phi^{-1}\left(\Phi(T(X,y)'b)-\lim_{\alpha\rightarrow-\infty}\Phi(T(X,\alpha)'b)\right)=T(X,y)'b
\end{align*}
for some $b\in\Theta$ and all $y\in\mathcal{Y}$, and hence $T(X,Y)'b\in\mathcal{E}$.
Therefore $\varphi:\mathcal{D}\rightarrow\mathcal{E}$. By definition
of $m\in\mathcal{E}$ we have
\[
(\psi(m))(y,X)=\partial_{y}\Phi(T(X,y)'b)=\phi(T(X,y)'b)\{t(X,y)'b\},\quad y\in\mathcal{Y},
\]
for some $b\in\Theta$, 
and hence $\phi(T(X,Y)'b)\{t(X,Y)'b\}\in\mathcal{D}$.
Therefore $\varphi:\mathcal{E}\rightarrow\mathcal{D}$.

The conclusion then follows from $\psi$ being both the left-inverse
of $\varphi$, since
\[
(\psi(\varphi(f)))(y,X)=\partial_{y}\left\{ \Phi\left(\Phi^{-1}\left(\int_{-\infty}^{y}f(t,X)dt\right)\right)\right\} =\partial_{y}\left\{ \int_{-\infty}^{y}f(t,X)dt\right\} =f(y,X)
\]
for all $y\in\mathcal{Y}$, and the right-inverse of $\varphi$, since
\[
(\varphi(\psi(m)))(y,X)=\Phi^{-1}\left(\int_{-\infty}^{y}\partial_{y}\Phi(m(t,X))dt\right)=m(y,X),\quad y\in\mathcal{Y}.
\]
Therefore, $\psi$ is the inverse function of $\varphi$ and the result
follows.
\end{proof}

\begin{lem} \label{lem:Nonsingular}
If $\{(y_{i},x_{i})\}_{i=1}^{n}$ is i.i.d. and $E[T(X,Y)T(X,Y)']$
is nonsingular then $\sum_{i=1}^{n}T_{i}T_{i}'$ is nonsingular with
probability approaching one, for $T_{i}\equiv T(y_{i},x_{i})$.
\end{lem}

\begin{proof}
We want to show that $\lambda'\{\sum_{i=1}^{n}T_{i}T_{i}'\}\lambda > 0$ for all $\lambda \neq 0$, with probability approaching one. 
Since for any $c\in \mathbb{R}$ we have $(c \lambda)'  \{ \sum_i T_i T_i' \} (c \lambda) = c^2  \lambda \{\sum_i T_i T_i' \} \lambda$, it suffices to consider $\lambda$ on the unit sphere 
$\mathbb{S}^{JK-1}=\{\lambda \in \mathbb{R}^{JK} : \|\lambda\| = 1\}$. Let $\Lambda \subset \mathbb{S}^{JK-1}$ be a countable dense subset. 
By nonsingularity of $E[T(X,Y)T(X,Y)']$,
for each $\lambda \in \Lambda$ we have $\lambda'T(X,Y)\neq0$ on a set $\widetilde{\mathcal{YX}}$ 
that depends on $\lambda$ and 
with $\Pr[\widetilde{\mathcal{YX}}]>0$. Hence for $\{(y_{i},x_{i})\}_{i=1}^{n}$
i.i.d.,
\begin{align}
\Pr[\cap_{i\in\{1,\ldots,n\}}\{(y_{i},x_{i})\notin\widetilde{\mathcal{YX}}\}]&=\prod_{i=1}^{n}\Pr[(y_{i},x_{i})\notin\widetilde{\mathcal{YX}}] \nonumber \\
&=\prod_{i=1}^{n}(1-\Pr[\widetilde{\mathcal{YX}}])=(1-\Pr[\widetilde{\mathcal{YX}}])^{n}\rightarrow0, \label{eq:lamxneq0}
\end{align}
as $n\rightarrow\infty$. 
For each $\lambda \in \Lambda$, define $A_\lambda$ as the event $\{\lambda'T_i \neq 0 \textrm{ for some }i\in\{1,\ldots,n\}\}$, and let $A= \cap_{\lambda\in \Lambda} A_{\lambda}$.  As $n \rightarrow \infty$, we have
\[
\Pr[A^c] = \Pr[ \cup_{\lambda \in \Lambda} A^c_{\lambda} ] \leq \sum_{\lambda \in \Lambda} \Pr[A^c_{\lambda}] \rightarrow 0,
\]
since $\Pr[A_{\lambda}^c] \rightarrow 0$ for each $\lambda \in \Lambda$ by (\ref{eq:lamxneq0}), and hence $\Pr[A] \rightarrow 1$. 
On the event $A$, we have $\lambda'\{\sum_{i=1}^{n}T_{i}T_{i}'\}\lambda > 0$ for all $\lambda \in \Lambda$. By continuity of $\lambda \mapsto \lambda'\{\sum_{i=1}^{n}T_{i}T_{i}'\}\lambda$ and $\Lambda$ being dense in $\mathbb{S}^{JK-1}$, we also have $\lambda'\{\sum_{i=1}^{n}T_{i}T_{i}'\}\lambda > 0$ for all $\lambda \in \mathbb{S}^{JK-1}$. Therefore, $\sum_{i=1}^{n}T_{i}T_{i}'$ is nonsingular with probability approaching one. 
\end{proof}

\section{Proofs of Corollary \ref{cor:Cor1} and Theorem \ref{thm:Thm3}} \label{sec:Cor1Thm1}

\subsection{Proof of Corollary \ref{cor:Cor1}}

The proof follows by application of Theorem \ref{thm:Thm1} and by
the argument in the main text, using that $e=b_{0}'T(Y,X)|X\sim N(0,1)$.
\qed

\subsection{Proof of Theorem \ref{thm:Thm3}}

\begin{sloppy}We have $b^{*}=\arg\max_{b\in\Theta}E[\log(\phi(T(X,Y)'b)\{t(X,Y)'b\})]$, by Theorem \ref{thm:Thm2}.
Thus, $f(Y,X)=\phi(T(X,Y)'b)\{t(X,Y)'b\}\in\mathcal{D}$ for each
$b\in\Theta$, and the fact that $\Theta$ and $\mathcal{D}$ are
equivalent by Lemma \ref{lem:SetEquiv}, together imply that $f^{*}$
is the well-defined point of maximum of $E[\log f(Y,X)]$ in $\mathcal{D}$,
and hence
\begin{equation}
f^{*}=\arg\min_{f\in\mathcal{D}}-E\left[\log f(Y,X)\right]=\arg\min_{f\in\mathcal{D}}E\left[\log\left(\frac{f_{Y|X}(Y\mid X)}{f(Y,X)}\right)\right].\label{eq:KLIC closest}
\end{equation}
Moreover, by the boundary conditions (\ref{eq:Boundary conditions}),
each $f\in\mathcal{D}$ satisfies
\begin{equation}
\int_{\mathbb{R}}f(y,X)dy=\lim_{y\rightarrow\infty}\Phi(b'T(X,y))-\lim_{y\rightarrow-\infty}\Phi(b'T(X,y))=1, \label{eq:Integrate to one}
\end{equation}
with probability one for some $b\in\Theta$. Therefore, (\ref{eq:KLIC closest})
implies that $f^{*}(Y,X)$ is the KLIC closest probability distribution
to $f_{Y|X}(Y|X)$ in $\mathcal{D}$.

By $F^{*}(Y,X)=\Phi(g^{*}(Y,X))$ and $f^{*}(Y,X)=\phi(g^{*}(Y,X))\{\partial_{y}g^{*}(Y,X)\}$,
we have $\partial_{y}F^{*}(Y,X)=f^{*}(Y,X)$. Since $y\mapsto f^{*}(y,X)$
is continuous, we obtain $F^{*}(y,X)=\int_{-\infty}^{y}f^{*}(t,X)dt$
for all $y\in\mathbb{R}$ by the Fundamental Theorem of Calculus,
with $\lim_{y\rightarrow-\infty}F^{*}(y,X)=0$ and $\lim_{y\rightarrow\infty}F^{*}(y,X)=1$
by definition of $F^{*}(y,X)$ and (\ref{eq:Integrate to one}).\par\end{sloppy}

Finally, by $f^{*}(Y,X)>0$ and $y\mapsto F^{*}(y,X)$ strictly increasing,
with probability one, the inverse function of $y\mapsto F^{*}(y,X)$
is well-defined, denoted $u\mapsto Q^{*}(X,u)$, with $\partial_{u}Q^{*}(X,u)=1/f^{*}(Q^{*}(X,u),X)>0$,
$u\in(0,1)$, by $y\mapsto F^{*}(y,X)$ continuously differentiable
and the Inverse Function Theorem.\qed

\section{Asymptotic Theory\label{sec:Asymptotic-Theory}}

\subsection{Proof of Theorem \ref{thm:Thm4}}

\subsubsection*{Parts (i)-(ii)}

We verify the conditions of Theorem 2.7 in \citet{Newey:McFadden:1994}.
By Theorem \ref{thm:Thm2}, $b^{*}\in\Theta$ is the unique minimizer
of $Q(b)$, and their Condition (i) is verified. Condition (ii) is
satisfied by convexity of $\Theta$ and concavity of $Q_{n}(b)$.
Finally, since the sample is i.i.d. by Assumption \ref{ass:Ass4}(i),
pointwise convergence of $Q_{n}(b)$ to $Q(b)$ follows from $Q(b)$
bounded and application of Khinchine's law of large numbers. Hence,
all conditions of \citeauthor{Newey:McFadden:1994}'s Theorem 2.7
are satisfied. Therefore, there exists $\hat{b}\in\Theta$ with probability
approaching one, and $\hat{b}\rightarrow_{p}b^{*}$.\qed

\subsubsection*{Part (ii)}

The asymptotic normality result $n^{1/2}(\hat{b}-b^{*})\overset{}{\rightarrow}_{d}N(0,\Gamma^{-1}\Psi(\Gamma^{-1})')$
follows from verifying the assumptions of Theorem 3.1 in \citet{Newey:McFadden:1994},
for instance. Symmetry and nonsingularity of $\Gamma$ then implies
that $V=\Gamma^{-1}\Psi\Gamma^{-1}$.

By Theorem \ref{thm:Thm3}, $b^{*}$ is in the interior of $\Theta$
and their Condition (i) is satisfied. Condition (ii) holds by inspection.
Condition (iii) holds by $E[\psi(Y,X,b^{*})]=0$, existence of $\Gamma$
and the Lindberg-Levy Central Limit Theorem. For Condition (iv), we
apply Lemma 2.4 in \citet{Newey:McFadden:1994} with $a(Y,X,b)\equiv\nabla_{bb}L(Y,X,b)$.
Let $\overline{\Theta}$ denote a compact subset of $\Theta$ containing
$b^{*}$ in its interior. By the proof of Lemma \ref{lem:2Differentiability},
$E[\sup_{b\in\overline{\Theta}}||\nabla_{bb}L(Y,X,b)||]<\infty$.
In addition, data is i.i.d. by assumption and $\nabla_{bb}L(Y,X,b)$
is continuous for $b\in\overline{\Theta}$ by inspection. Conditions
of Lemma 2.4 in \citet{Newey:McFadden:1994} are verified, and hence
their Condition (iv) in Theorem 3.1 also is. Finally, $\Gamma$ is
nonsingular by Lemma \ref{lem:Concavity} which verifies their Condition
(v). 

To show $\hat{\Gamma}^{-1}\hat{\Psi}\hat{\Gamma}^{-1}\rightarrow_{p}\Gamma^{-1}\Psi\Gamma^{-1}$,
we verify the conditions in the discussion of Theorem 4.4 in \citet[bottom of page 2158]{Newey:McFadden:1994}.
First, $\widehat{b}\rightarrow_{p}b^{*}$ by Theorem \ref{thm:Thm4}(ii).
Second, $\log f(Y,X,b)$ is twice continuously differentiable and
$f(Y,X,b)>0$, $b\in\Theta$. Moreover, $\Gamma$ exists and is nonsingular
by Lemma \ref{lem:Concavity}. Thus Conditions (ii) and (iv) of Theorem
3.3 in \citet{Newey:McFadden:1994} are verified. Third,
\begin{align*}
||\psi(Y,X,b)||^{2} & =||-T(X,Y)(b'T(X,Y))+(b't(X,Y))^{-1}t(X,Y)||^{2}\\
 & \leq2||T(X,Y)(b'T(X,Y))||^{2}+2|(b't(X,Y))^{-1}|^{2}\,||t(X,Y)||^{2}\\
 & \leq C\left\{ ||T(X,Y)||^{4}+||t(X,Y)||^{2}\right\} ,
\end{align*}
so that $E[\sup_{\theta\in\overline{\Theta}}||\psi(Y,X,b)||^{2}]<\infty$,
by Assumption \ref{ass: Ass2} and \ref{ass:Ass4}(ii). Hence, for
a neighborhood $\mathcal{N}$ of $b^{*}$, we have that $E[\sup_{b\in\mathcal{N}}||\psi(Y,X,b)||^{2}]<\infty$.
Moreover, $b\mapsto\psi(Y,X,b)$ is continuous at $b^{*}$ with probability
one.  The result follows.\qed

\subsection{Proof of Theorem \ref{thm:Thm5}}

By Theorem \ref{thm:Thm4}, $n^{1/2}(\widehat{b}-b^{*})\rightarrow_{d}N(0,\Xi)$,
$\Xi=\Gamma^{-1}\Psi\Gamma^{-1}$ positive
definite. For $(y,x)\in\mathcal{YX}$, $b\mapsto\Phi(b'T(x,y))\equiv F(y,x,b)$
and $b\mapsto f(y,x,b)$ are continuously differentiable, with $\nabla_{b}F(y,x,b)=\phi(b'T(x,y))T(x,y)$
and 
\begin{align*}
\nabla_{b}f(y,x,b) & =-\{b'T(x,y)\}\phi(b'T(x,y))\{b't(x,y)\}T(x,y)+\phi(b'T(x,y))t(x,y)\\
 & =\phi(b'T(x,y))\left[-\{b'T(x,y)\}\{b't(x,y)\}T(x,y)+t(x,y)\right],
\end{align*}
respectively, by the properties of the normal PDF. For all $(y,x)\in\mathcal{YX}$
with $f(y,x,b)>0$, $b\in\Theta$, we have that $y\mapsto F(y,x,b)$
is invertible, and its inverse function $u\mapsto F^{-1}(u,x,b)$
is continuously differentiable with derivative $1/f(F^{-1}(u,x,b),x,b)$
for all $x\in\mathcal{X}$ and $u\in\mathcal{U}_{x}(m)$, $m(y,x,b)\equiv b'T(x,y)$,
by the Inverse Function Theorem. Hence, by $F^{-1}(\Phi(b'T(x,y)),x,b)=y$,
we have for $(y,x)\in\mathcal{YX}$,
\[
\nabla_{b}F^{-1}(\Phi(b'T(x,y)),x,b)=\frac{\phi(b'T(x,y))}{f(F^{-1}(u_{0},x,b),x,b)}T(x,y)+\nabla_{b}F^{-1}(u_{0},x,b)=0,
\]
with $u_{0}=\Phi(b'T(x,y))$, and hence, for $x\in\mathcal{X}$ and
$u\in\mathcal{U}_{x}(m)$,
\[
\nabla_{b}F^{-1}(u,x,b)=-\frac{\phi(b'T(x,y_{0}))}{f(y_{0},x,b)}T(x,y_{0})=-\frac{1}{b't(x,y_{0})}T(x,y_{0})
\]
where $y_{0}=F^{-1}(u,x,b)$, which is continuous in $b$ on $\Theta$,
so that $b\mapsto F^{-1}(u,x,b)$ is continuously differentiable on
$\Theta$. Parts (i) and (ii) in the statement of Theorem \ref{thm:Thm5}
then follow by the Delta method (e.g., Lemma 3.9 in \citet{Wooldridge:2010}).\qed

\section{Implementation and Additional Empirical Results\label{sec:Implementation}}

\subsection{Implementation }

For each industry-occupation pair selected in our application, we
estimate four model classes for $g^{*}(Y,X,D)$ and its derivative
function:

$\textsc{I}$. \textsc{Linear-Linear:} set $s(Y)=(0,1)'$, $S(Y)=(1,Y)'$ and
$W(X_{1},X_{2})=(1,X_{1},X_{2})'$.

$\textsc{II}$. \textsc{Linear-$Y$ and Spline-$X$:} set $s(Y)=(0,1)'$, $S(Y)=(1,Y)'$
and $W(X_{1},X_{2})=(1,\widetilde{W}_{1}(X_{1})',\widetilde{W}_{2}(X_{2})')'$,
for $\widetilde{W}_{m}(X_{m})$ a vector of B-spline functions, $m\in\{1,2\}$. 

$\textsc{III}$. \textsc{Spline-$Y$ and Linear-$X$}: set $s(Y)=(0,1,\widetilde{s}(Y)')'$,
for $\widetilde{s}(Y)$ a vector of $J-2$ B-spline functions, and
$S(Y)=(1,Y,\widetilde{S}(Y)')'$ where $\widetilde{S}_{j}(y)=\int_{-\infty}^{y}\widetilde{s}(r)dr$,
$j\in\{1,\ldots,J-2\}$, and $W(X_{1},X_{2})=(1,X_{1},X_{2})'$. 

$\textsc{IV}$. \textsc{Spline-Spline:} set $s(Y)=(0,1,\widetilde{s}(Y)')'$, $S(Y)=(1,Y,\widetilde{S}(Y)')'$
and $W(X_{1},X_{2})=(1,\widetilde{W}_{1}(X_{1})',\widetilde{W}_{2}(X_{2})')'$.

For $\widetilde{W}_{m}(X_{m})$, $m\in \{1,2\}$, in specification classes II and IV, we consider cubic B-splines with $\dim(\widetilde{W}_{m}(X))\in\{5,6,7,8\}$, $m\in\{1,2\}$.
For $\widetilde{s}(Y)$ in classes $\textsc{III}$ and \textsc{IV}, we consider 
both quadratic and cubic B-splines, each with $\dim(\widetilde{s}(Y)) \in\{5,6,7\}$.
Thus, in addition to the model in class $\textsc{I}$, we consider $4$ specifications of class $\textsc{II}$, $2\times 3$ of class $\textsc{III}$, and $2\times 3 \times 4$ of class $\textsc{IV}$, for a total of 35 different model specifications.\footnote{For components $\widetilde{W}_{m}(X_{m})$, $m\in\{1,2\}$, and $\widetilde{s}(Y)$, we generate B-splines using the 
$\texttt{SplineBasis}$ function in the $\texttt{orthogonalsplinebasis}$ R package (\citet{R:2012,R:2015}), where the number of B-splines is the number of specified knots minus the order of the splines. To obtain $\widetilde{S}(Y)$, we integrate each $\widetilde{s}(Y)$ component using the $\texttt{integrate}$ function in that same package.
}

For each specification, we implement an adaptive Lasso ML (\citet{LGF:2012}, \citet{HN:2020})
version of our ML estimator with three steps. First, we run a Lasso
version of our estimator for each of 10 penalty parameter values in
a logarithmically spaced grid from $0.1$ to a threshold penalty level,
with no penalty on the intercept and $Y$ coefficients.\footnote{The threshold is the approximately smallest penalty value such that
all coefficients are zero (following \citet{CS:2021}).} For the selected penalty level, we record estimated coefficients
$\widehat{b}_{FS}$. Second, we implement Lasso ML with penalty weights
$1/|\widehat{b}_{FS}|$ if $\widehat{b}_{FS}\neq0$, or 0 otherwise
(cf. Eq. (4.2) page 16 of the first version \citet{SS:2020}), over
the same grid as in the first step. For each step, we follow the literature
on adaptive Lasso ML (\citet{LGF:2012}, \citet{HN:2020}) and select the
penalty parameter value that minimizes BIC, among penalized estimates
that satisfy QGM on an an equispaced fine grid that covers $\mathcal{Y}\times\mathcal{X}\times\{0,1\}$
(cf. Remark \ref{Rk:QGM} in the main text). Third, we record the
BIC value of the selected penalized model. We select the model with
smallest BIC among all 35 specifications.

For the legal occupation in the services industry considered in the main manuscript, 
the adaptive Lasso procedure selects a model of dimension 154, with 27 parameters estimated 
to be nonzero after penalization. Specifically, the selected dimension for $\widetilde{W}_{1}(X_{1})$ and $\widetilde{W}_{2}(X_{2})$ is $5$ for each. For $\widetilde{s}(Y)$, cubic splines are selected also with dimension $5$. 
Recalling the definitions for $S(Y)$ and $W(X_{1},X_{2})$ in class $\textsc{IV}$, 
and that the model in the empirical application is of the form $[W(X)\otimes (1,D)']\otimes S(Y)$, 
the selected model has dimension $[\dim(W)\times 2]\times \dim(S) = [(1+5+5)\times 2]\times (2+5)=154$.

\begin{rem}
For robustness we repeated the steps above using two nonnested grids
of lengths 5 and 20 instead of 10. We obtain similar distributional
regression estimates.

\begin{rem}
To follow the theory in the main manuscript, our implementation first
monotonically transforms $Y$ to expand its support over $\mathbb{R}$,
by implementing our estimator with $W(X)=1$ and using the rescaled
outcome $e_{0}\equiv\widehat{b}'S(Y)$.\label{Rk:Boundedsupp}

\end{rem}
\end{rem}

\subsection{Other examples}

\begin{figure}[t]
\subfloat[Sales occupation in wholesale trade ($n=6,790$).\label{fig:CQF_CI}]{\begin{centering}
\includegraphics[width=5.05cm,height=5.45cm]{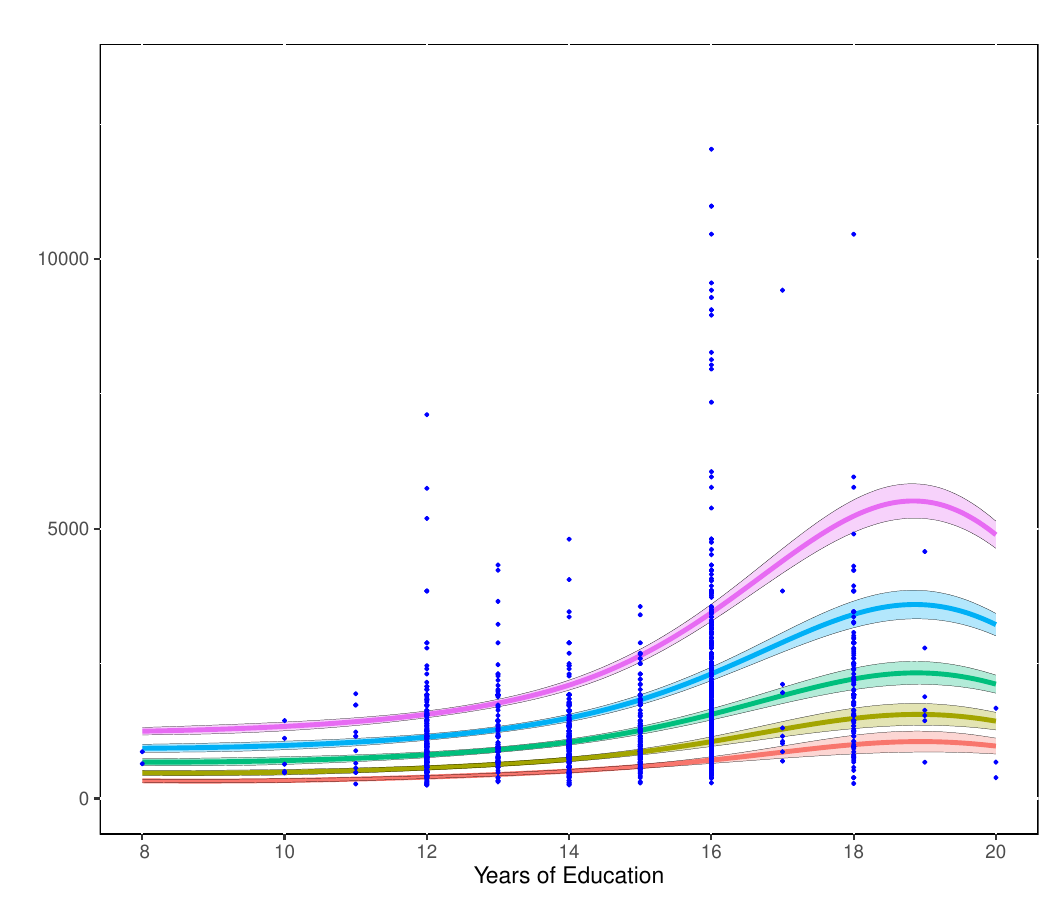}\hfill{}\includegraphics[width=5.05cm,height=5.45cm]{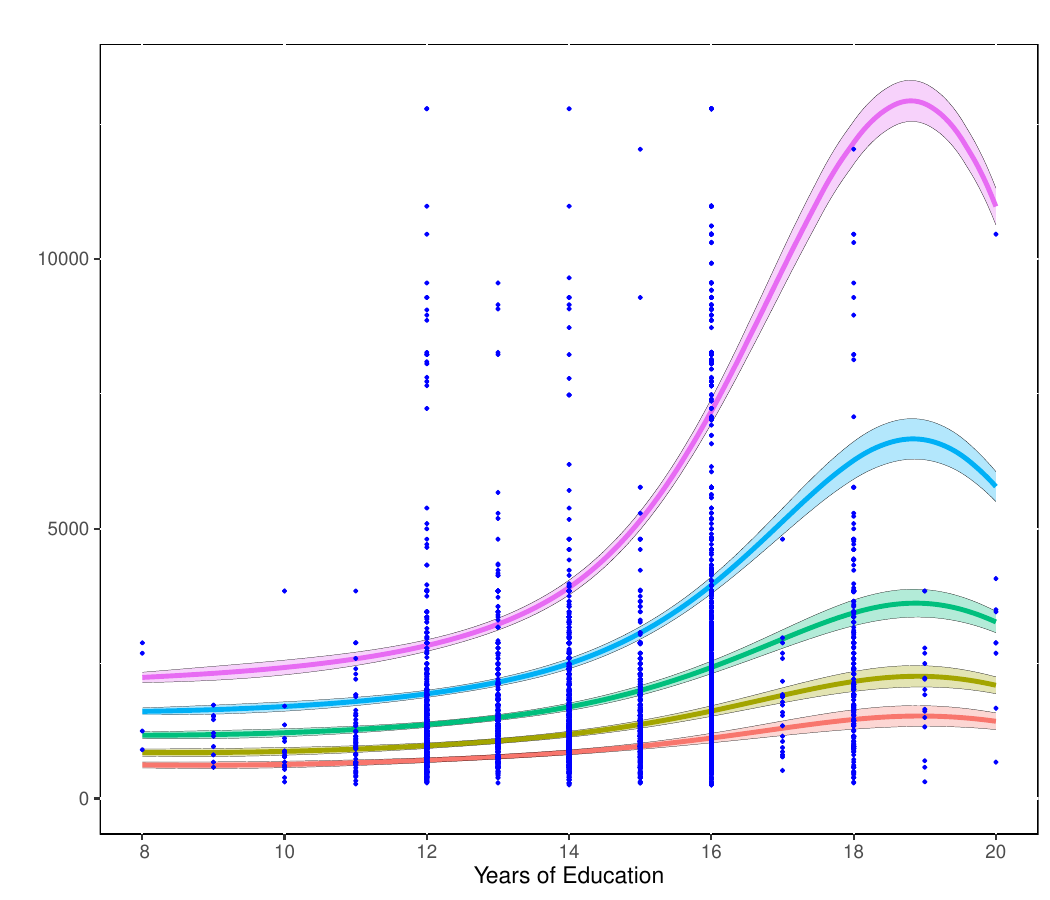}\hfill{}\includegraphics[width=5.05cm,height=5.45cm]{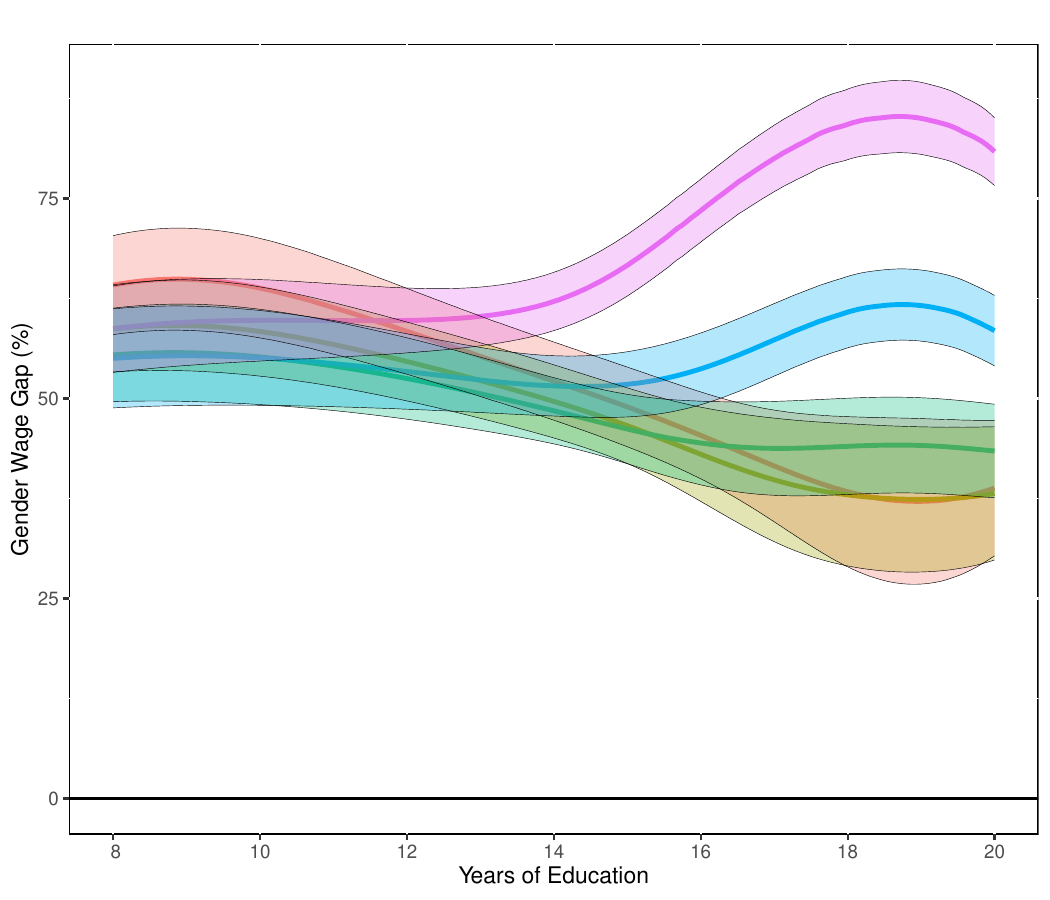}
\par\end{centering}
}

\subfloat[Architecture and engineering occupations in public administration
($n=1,068$).\label{fig:CQF_Full}]{\begin{centering}
\includegraphics[width=5.05cm,height=5.45cm]{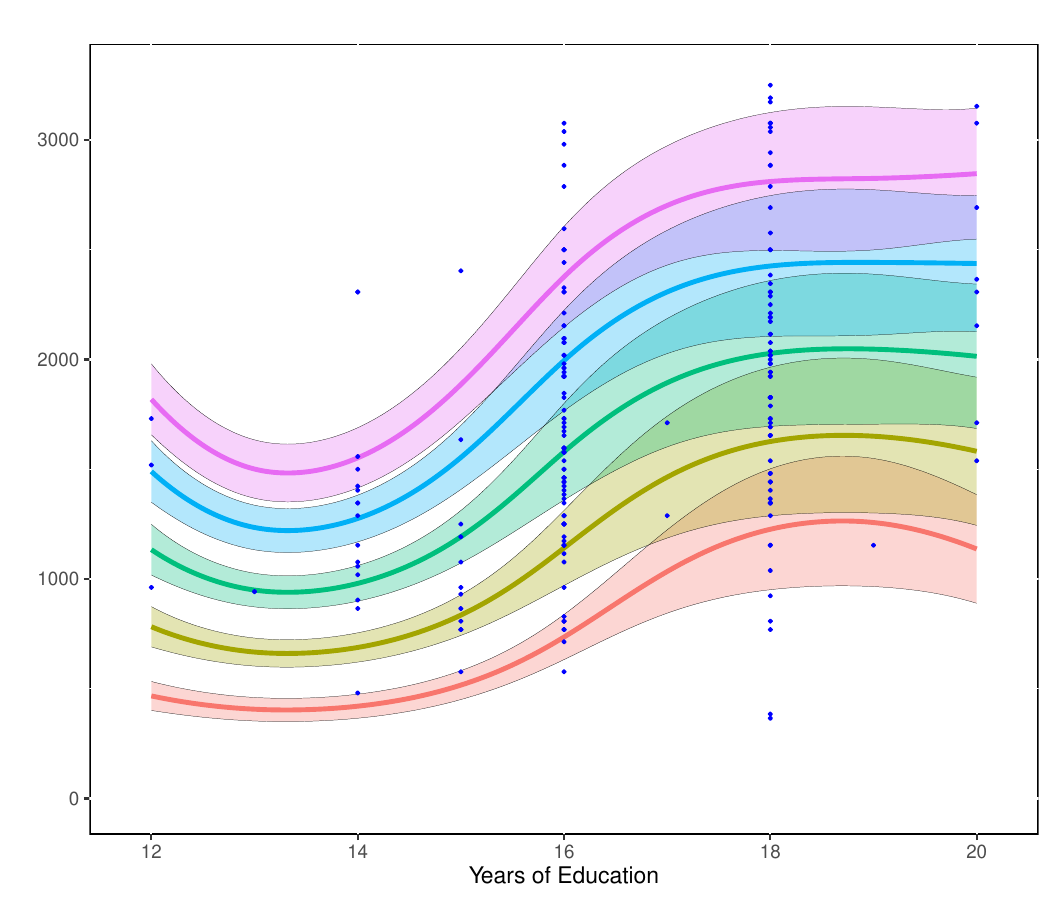}\hfill{}\includegraphics[width=5.05cm,height=5.45cm]{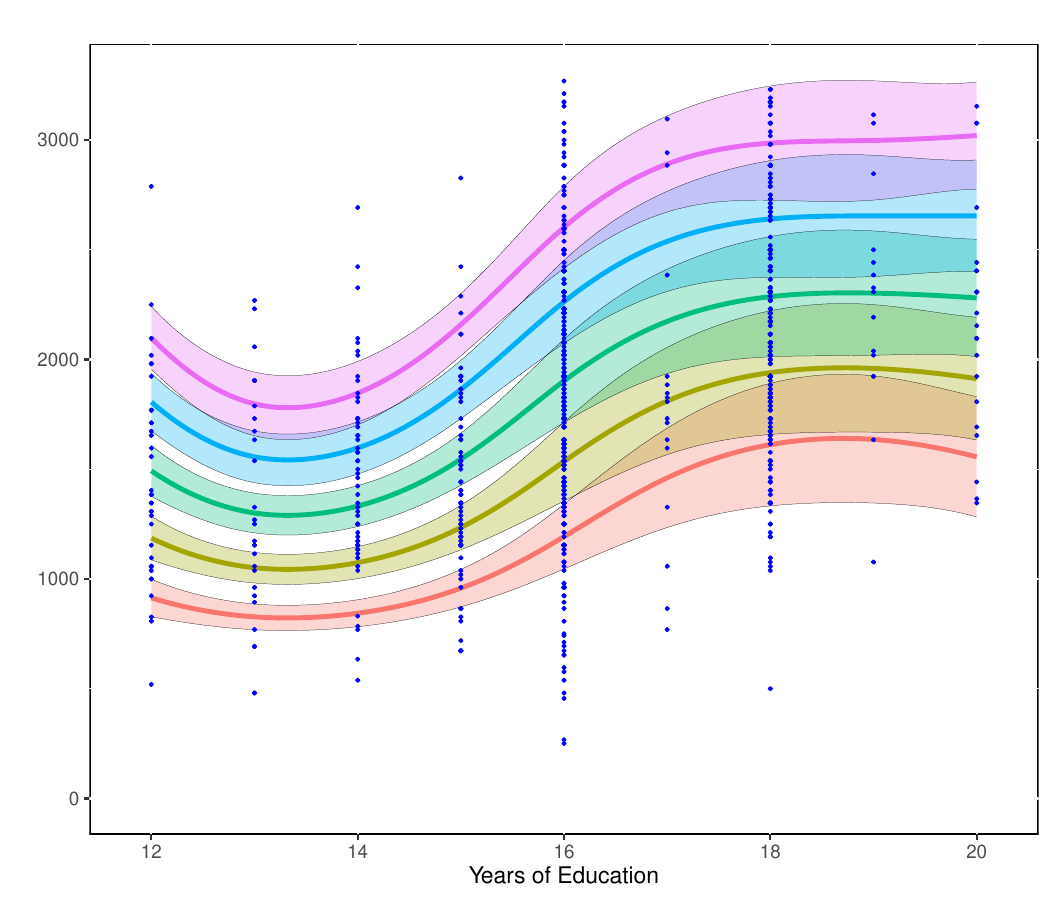}\hfill{}\includegraphics[width=5.05cm,height=5.45cm]{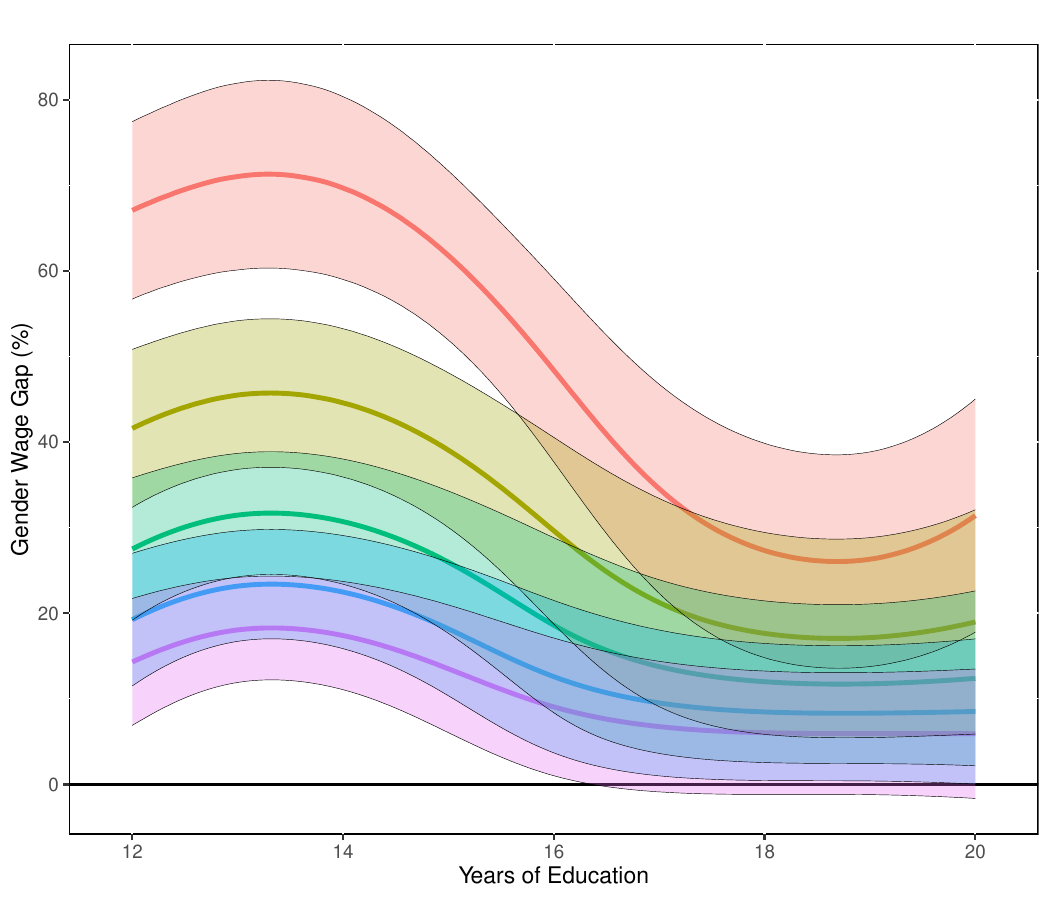}
\par\end{centering}
}\caption{CQF for $\mathtt{Female}$ (left), $\mathtt{Male}$ (center) and $\mathtt{GWG}(X,u)$
(right), for $u\in\{0.1,0.25,0.5,0.75,0.9\}$. Scatterplots by gender.\label{fig:CQFs}}
\end{figure}
We further illustrate the ability of our methods to capture a variety
of shapes and data features with two additional industry-occupation
pairs. For each, Figure \ref{fig:CQFs} shows CQFs and the quantile
gender wage gap function $\mathtt{GWG}(X,u)$. The gap is strongly
significant, with pronounced nonlinearities across education levels
and heterogeneity across quantiles. In Fig. \ref{fig:CQFs}(A), the
gap is homogeneous up to $u=0.5$, and diverges markedly across higher
quantiles, above year 12. In contrast, in Fig. \ref{fig:CQFs}(B)
heterogeneity is more pronounced for lower levels of education.

\section{Numerical Simulations\label{sec:Numerical-Simulations1}}

\subsection{Designs}

To assess the performance of GTR, we use the same implementation as
in the empirical application but without penalization. We borrow and
extend Monte Carlo experiments from \citet{Matzkin:2003} based on
two designs of the form $Y=m(X,\varepsilon)$.

\textsc{\uline{Design A:}} The first design in \citet{Matzkin:2003} is 
the linear location shift model
\begin{equation}
m(X,\varepsilon)=X+\varepsilon,\quad X\sim N(0,1),\quad\varepsilon\mid X\sim F_{\varepsilon},\label{eq:DGP}
\end{equation}
with $F_{\varepsilon}$ specified as (i) $N(0,1)$ as in \citet{Matzkin:2003},
(ii) $t$ with 5 degrees of freedom, (iii) mixture of 2 normals with
means $(-1,1)$, variances $(1,1)$, and weights $(2/3,1/3)$. GTR
is correctly specified for $\varepsilon\sim N(0,1)$, and misspecified
otherwise.
\begin{itemize}
\item For the implied $f_{Y\mid X}$ and $F_{Y\mid X}$, we compare GTR
to kernel estimators. 
\item For $m(X,\varepsilon)$, we compare GTR to (i) the Matzkin estimator,
and (ii) quantile regression (QR).
\end{itemize}
\textsc{\uline{Design B:}} The second design
in \citet{Matzkin:2003} is the nonlinear nonseparable model
\begin{equation}
m(X,\varepsilon)=(3^{4}/4^{4})X^{4}(-\varepsilon)^{-3},\quad X\sim N(6,1),\quad\varepsilon\sim N(-6,1).\label{eq:DGP-2}
\end{equation}
GTR is misspecified for this design.
\begin{itemize}
\item For the implied $f_{Y\mid X}$ and $F_{Y\mid X}$, we compare GTR
to kernel estimators. 
\item For $m(X,\varepsilon)$, we compare GTR to the Matzkin estimator only.
Linear QR is not consistent for this design.
\end{itemize}

\subsection{Implementation}

We consider datasets of sizes $n=100$, $250$ and $500$, and generate
$500$ of each size according to (\ref{eq:DGP}) and (\ref{eq:DGP-2}),
respectively. For each simulated dataset, we estimate the conditional
PDF and CDF at each sample values, $\{f_{Y|X}(y_{i}|x_{i})\}_{i=1}^{n}$
and $\{F_{Y|X}(y_{i}|x_{i})\}_{i=1}^{n}$, by GTR and kernel regression.
For $m(X,\varepsilon)$, we draw $20\times20$ fixed points from a
uniform distribution with support $[-2,2]\times[-2,2]$ for \textsc{Design
A} and with support $[4,8]\times[-8,-4]$ for \textsc{Design }B. This
mimics the steps in \citet{Matzkin:2003}. For each estimator, we
record the required estimate of $\{f_{Y|X}(y_{i}|x_{i})\}_{i=1}^{n}$, 
$\{F_{Y|X}(y_{i}|x_{i})\}_{i=1}^{n}$, and $m(x,e)$ at the $20\times20$
grid points.

For GTR, implementation and model specifications are as described
in Section \ref{sec:Implementation}, without using penalization.
For each simulated dataset we select the best GT model according to
the BIC. QR is implemented using the $\mathtt{quantreg}$ package
(\citet{K:2021}). The Matzkin estimator is implemented as in \citet[p. 1355]{Matzkin:2003},
with
\[
\widehat{m}(x,e)=\widehat{Q}_{Y|X}(\widehat{F}_{Y|X}\left((e/\overline{\varepsilon})\alpha\mid X=(e/\overline{\varepsilon})\overline{x}\right)\mid X=x),
\]
where $\overline{x}=n^{-1}\sum_{i=1}^{n}x_{i}$, and setting $\overline{\varepsilon}=1$
and $\alpha=m(\overline{x},1)$ for \textsc{Design A}, and $\overline{\varepsilon}=-6$
and $\alpha=m(\overline{x},-6)$ for \textsc{Design }B. For all kernel-based
estimators, as well as bandwidth selection by cross-validation for
each simulated dataset, we use the default $\mathtt{np}$ package
implementation (\citet{RH:2021}). 

\subsection{Results}

Tables \ref{tab:PDF_A}-\ref{tab:PDF_CDF_B} report simulation results
regarding the accuracy of conditional PDF and CDF GTR estimates, using
a kernel estimator as a benchmark. We report average estimation errors
across simulations of both estimators, and their ratio in percentage
terms. Estimation errors are measured in $L_{p}$ norms $\left\Vert \cdot\right\Vert _{p}$,
$p=1,2,$ and $\infty$, and are then averaged over the 500 simulations.

For both conditional PDFs and CDFs, GTR yields very large
improvements over kernel estimation for all sample sizes, distributions
and designs. Except for PDFs and $L_{1}$ norm in \textsc{Design B} with $n=100$,
the relative accuracy of GTR estimates is largely superior across
both designs. In particular, decreasing ratios in average errors in
$L_{\infty}$ norm across sample size reflect higher GTR accuracy
in estimation of extreme parts.

Tables \ref{tab:CQF_A}-\ref{tab:CQF_B} report simulation results
regarding the accuracy of $m(x,e)$ estimates by GTR relative to the
Matzkin and quantile regression estimators. We report absolute bias,
variance and mean-square error across simulations of each estimator,
and comparison ratios in percentage terms. We find that GTR largely
improves over all methods, for all sample sizes, distributions and
designs. For \textsc{Design A} and $n=500$\textsc{,} larger GTR absolute
bias relative to quantile regression reflects GTR misspecification.

Overall, these simulation results give a further illustration of very
large finite sample gains from using the methods introduced in this
paper. These simulation results complement those for a location-scale
model in the previous version of the paper (\citet{SS:2020}) that
show large improvements over distribution regression.
\begin{rem}
Tables \ref{tab:CQF_A-1}-\ref{tab:CQF_B-1} report additional results
for estimation of $\widehat{m}(x,e)$ where, for each point and estimated
function, we only used the simulations for which the estimated densities
and multiplications of densities that appear in the denominator of
the Matzkin estimator were above $0.025$. This mimics the implementation
in \citet[pp. 1360-1361]{Matzkin:2003}. The performance of the Matzkin
estimator improves for \textsc{Design B} but overall improvements
obtained by GTR remain large.
\end{rem}

\begin{table}
\begin{centering}
\centering %
\begin{tabular}{lccccccccccccc}
\hline 
 &  & \multicolumn{3}{c}{$n=100$} &  & \multicolumn{3}{c}{$n=250$} &  & \multicolumn{3}{c}{$n=500$} & \tabularnewline
\hline 
 &  &  &  &  &  &  &  &  &  &  &  &  & \tabularnewline
 &  & $L_{1}$ & $L_{2}$ & $L_{\infty}$ &  & $L_{1}$ & $L_{2}$ & $L_{\infty}$ &  & $L_{1}$ & $L_{2}$ & $L_{\infty}$ & \tabularnewline
\cline{3-13} \cline{4-13} \cline{5-13} \cline{6-13} \cline{7-13} \cline{8-13} \cline{9-13} \cline{10-13} \cline{11-13} \cline{12-13} \cline{13-13} 
 &  & \multicolumn{11}{c}{\textsc{(i) Gaussian distribution}} & \tabularnewline
\cline{3-13} \cline{4-13} \cline{5-13} \cline{6-13} \cline{7-13} \cline{8-13} \cline{9-13} \cline{10-13} \cline{11-13} \cline{12-13} \cline{13-13} 
GTR &  & ~31.6 & ~~2.0 & 121.3 &  & ~19.1 & ~~0.7 & ~79.0 &  & ~12.8 & ~~0.3 & ~57.2 & \tabularnewline
Kernel &  & ~58.6 & ~~6.4 & 311.0 &  & ~44.3 & ~~3.9 & 370.9 &  & ~35.8 & ~~2.6 & 421.4 & \tabularnewline
\textbf{Ratio$\times100$} &  & \textbf{~53.9} & \textbf{~31.1} & \textbf{~39.0} &  & \textbf{~43.1} & \textbf{~16.8} & \textbf{~21.3} &  & \textbf{~35.9} & \textbf{~11.2} & \textbf{~13.6} & \tabularnewline
 &  &  &  &  &  &  &  &  &  &  &  &  & \tabularnewline
\cline{3-13} \cline{4-13} \cline{5-13} \cline{6-13} \cline{7-13} \cline{8-13} \cline{9-13} \cline{10-13} \cline{11-13} \cline{12-13} \cline{13-13} 
 &  & \multicolumn{11}{c}{\textsc{(ii) $t$ distribution}} & \tabularnewline
\cline{3-13} \cline{4-13} \cline{5-13} \cline{6-13} \cline{7-13} \cline{8-13} \cline{9-13} \cline{10-13} \cline{11-13} \cline{12-13} \cline{13-13} 
GTR &  & ~37.2 & ~~2.4 & 122.3 &  & ~31.3 & ~~1.6 & 106.1 &  & ~28.4 & ~~1.3 & 106.9 & \tabularnewline
Kernel &  & ~58.5 & ~~5.3 & 215.8 &  & ~48.6 & ~~3.8 & 235.0 &  & ~42.0 & ~~2.8 & 244.0 & \tabularnewline
\textbf{Ratio$\times100$} &  & \textbf{~63.5} & \textbf{~44.7} & \textbf{~56.7} &  & \textbf{~64.5} & \textbf{~41.8} & \textbf{~45.2} &  & \textbf{~67.7} & \textbf{~45.7} & \textbf{~43.8} & \tabularnewline
 &  &  &  &  &  &  &  &  &  &  &  &  & \tabularnewline
\cline{3-13} \cline{4-13} \cline{5-13} \cline{6-13} \cline{7-13} \cline{8-13} \cline{9-13} \cline{10-13} \cline{11-13} \cline{12-13} \cline{13-13} 
 &  & \multicolumn{11}{c}{\textsc{(iii) Gaussian mixture distribution}} & \tabularnewline
\cline{3-13} \cline{4-13} \cline{5-13} \cline{6-13} \cline{7-13} \cline{8-13} \cline{9-13} \cline{10-13} \cline{11-13} \cline{12-13} \cline{13-13} 
GTR &  & ~29.8 & ~~1.5 & ~93.2 &  & ~24.1 & ~~0.9 & ~80.1 &  & ~21.9 & ~~0.7 & ~62.3 & \tabularnewline
Kernel &  & ~39.4 & ~~3.1 & 224.4 &  & ~30.3 & ~~2.0 & 280.6 &  & ~24.5 & ~~1.4 & 325.1 & \tabularnewline
\textbf{Ratio$\times100$} &  & \textbf{~75.7} & \textbf{~46.5} & \textbf{~41.5} &  & \textbf{~79.6} & \textbf{~46.6} & \textbf{~28.5} &  & \textbf{~89.5} & \textbf{~52.8} & \textbf{~19.2} & \tabularnewline
 &  &  &  &  &  &  &  &  &  &  &  &  & \tabularnewline
\hline 
\end{tabular}
\par\end{centering}
\caption{\textsc{Design A:} PDF. 
Average $L^{p}$ estimation errors $\times1000$ for GTR, kernel estimator and their ratio $\times100$, for $p=1,2$, $\infty$.
\label{tab:PDF_A}}
\end{table}

\begin{table}
\begin{centering}
\centering %
\begin{tabular}{lccccccccccccc}
\hline 
 &  & \multicolumn{3}{c}{$n=100$} &  & \multicolumn{3}{c}{$n=250$} &  & \multicolumn{3}{c}{$n=500$} & \tabularnewline
\hline 
 &  &  &  &  &  &  &  &  &  &  &  &  & \tabularnewline
 &  & $L_{1}$ & $L_{2}$ & $L_{\infty}$ &  & $L_{1}$ & $L_{2}$ & $L_{\infty}$ &  & $L_{1}$ & $L_{2}$ & $L_{\infty}$ & \tabularnewline
\cline{3-13} \cline{4-13} \cline{5-13} \cline{6-13} \cline{7-13} \cline{8-13} \cline{9-13} \cline{10-13} \cline{11-13} \cline{12-13} \cline{13-13} 
 &  & \multicolumn{11}{c}{\textsc{(i) Gaussian distribution}} & \tabularnewline
\cline{3-13} \cline{4-13} \cline{5-13} \cline{6-13} \cline{7-13} \cline{8-13} \cline{9-13} \cline{10-13} \cline{11-13} \cline{12-13} \cline{13-13} 
GTR &  & ~37.2 & ~~2.7 & 157.2 &  & ~22.9 & ~~1.0 & 110.1 &  & ~15.9 & ~~0.5 & ~84.8 & \tabularnewline
Kernel &  & ~66.1 & ~~7.8 & 291.0 &  & ~45.4 & ~~4.0 & 289.4 &  & ~34.1 & ~~2.4 & 289.3 & \tabularnewline
\textbf{Ratio$\times100$} &  & \textbf{~56.3} & \textbf{~34.1} & \textbf{~54.0} &  & \textbf{~50.5} & \textbf{~25.4} & \textbf{~38.1} &  & \textbf{~46.5} & \textbf{~20.9} & \textbf{~29.3} & \tabularnewline
 &  &  &  &  &  &  &  &  &  &  &  &  & \tabularnewline
\cline{3-13} \cline{4-13} \cline{5-13} \cline{6-13} \cline{7-13} \cline{8-13} \cline{9-13} \cline{10-13} \cline{11-13} \cline{12-13} \cline{13-13} 
 &  & \multicolumn{11}{c}{\textsc{(ii) $t$ distribution}} & \tabularnewline
\cline{3-13} \cline{4-13} \cline{5-13} \cline{6-13} \cline{7-13} \cline{8-13} \cline{9-13} \cline{10-13} \cline{11-13} \cline{12-13} \cline{13-13} 
GTR &  & ~41.1 & ~~3.2 & 168.1 &  & ~29.6 & ~~1.6 & 136.0 &  & ~23.7 & ~~1.0 & 127.5 & \tabularnewline
Kernel &  & ~63.6 & ~~7.5 & 290.4 &  & ~43.9 & ~~3.8 & 280.1 &  & ~32.9 & ~~2.2 & 280.9 & \tabularnewline
\textbf{Ratio$\times100$} &  & \textbf{~64.7} & \textbf{~42.3} & \textbf{~57.9} &  & \textbf{~67.4} & \textbf{~42.1} & \textbf{~48.6} &  & \textbf{~72.1} & \textbf{~45.6} & \textbf{~45.4} & \tabularnewline
 &  &  &  &  &  &  &  &  &  &  &  &  & \tabularnewline
\cline{3-13} \cline{4-13} \cline{5-13} \cline{6-13} \cline{7-13} \cline{8-13} \cline{9-13} \cline{10-13} \cline{11-13} \cline{12-13} \cline{13-13} 
 &  & \multicolumn{11}{c}{\textsc{(iii) Gaussian mixture distribution}} & \tabularnewline
\cline{3-13} \cline{4-13} \cline{5-13} \cline{6-13} \cline{7-13} \cline{8-13} \cline{9-13} \cline{10-13} \cline{11-13} \cline{12-13} \cline{13-13} 
GTR &  & ~37.4 & ~~2.6 & 142.6 &  & ~26.9 & ~~1.6 & 110.0 &  & ~20.5 & ~~0.7 & ~85.4 & \tabularnewline
Kernel &  & ~60.2 & ~~6.5 & 264.6 &  & ~41.6 & ~~3.3 & 262.4 &  & ~31.8 & ~~2.0 & 258.2 & \tabularnewline
\textbf{Ratio$\times100$} &  & \textbf{~62.0} & \textbf{~39.4} & \textbf{~53.9} &  & \textbf{~64.8} & \textbf{~49.8} & \textbf{~41.9} &  & \textbf{~64.6} & \textbf{~36.7} & \textbf{~33.1} & \tabularnewline
 &  &  &  &  &  &  &  &  &  &  &  &  & \tabularnewline
\hline 
\end{tabular}
\par\end{centering}
\caption{\textsc{Design A}: CDF. 
Average $L^{p}$ estimation errors $\times1000$ for GTR, kernel estimator and their ratio $\times100$, for $p=1,2$, $\infty$. 
\label{tab:CDF_A}}
\end{table}

\begin{table}
\begin{centering}
\centering %
\begin{tabular}{lccccccccccccc}
\hline 
 &  & \multicolumn{3}{c}{$n=100$} &  & \multicolumn{3}{c}{$n=250$} &  & \multicolumn{3}{c}{$n=500$} & \tabularnewline
\hline 
 &  &  &  &  &  &  &  &  &  &  &  &  & \tabularnewline
 &  & $L_{1}$ & $L_{2}$ & $L_{\infty}$ &  & $L_{1}$ & $L_{2}$ & $L_{\infty}$ &  & $L_{1}$ & $L_{2}$ & $L_{\infty}$ & \tabularnewline
\cline{3-13} \cline{4-13} \cline{5-13} \cline{6-13} \cline{7-13} \cline{8-13} \cline{9-13} \cline{10-13} \cline{11-13} \cline{12-13} \cline{13-13} 
 &  & \multicolumn{11}{c}{\textsc{(i) PDF}} & \tabularnewline
\cline{3-13} \cline{4-13} \cline{5-13} \cline{6-13} \cline{7-13} \cline{8-13} \cline{9-13} \cline{10-13} \cline{11-13} \cline{12-13} \cline{13-13} 
GTR &  & ~301.4 & ~423.4 & 3333.2 &  & ~180.1 & ~258.3 & 4340.0 &  & ~161.7 & ~237.2 & 5738.8 & \tabularnewline
Kernel &  & ~242.4 & ~447.0 & 3697.8 &  & ~224.3 & ~367.3 & 5161.5 &  & ~211.8 & ~371.3 & 6963.0 & \tabularnewline
\textbf{Ratio$\times100$} &  & \textbf{~124.3} & \textbf{~~94.7} & \textbf{~~90.1} &  & \textbf{~~80.3} & \textbf{~~70.3} & \textbf{~~84.1} &  & \textbf{~~76.4} & \textbf{~~63.9} & \textbf{~~82.4} & \tabularnewline
 &  &  &  &  &  &  &  &  &  &  &  &  & \tabularnewline
\cline{3-13} \cline{4-13} \cline{5-13} \cline{6-13} \cline{7-13} \cline{8-13} \cline{9-13} \cline{10-13} \cline{11-13} \cline{12-13} \cline{13-13} 
 &  & \multicolumn{11}{c}{\textsc{(ii) CDF}} & \tabularnewline
\cline{3-13} \cline{4-13} \cline{5-13} \cline{6-13} \cline{7-13} \cline{8-13} \cline{9-13} \cline{10-13} \cline{11-13} \cline{12-13} \cline{13-13} 
GTR &  & ~48.2 & ~~4.6 & 226.7 &  & ~44.9 & ~~4.3 & 264.0 &  & ~37.6 & ~~2.9 & 265.3 & \tabularnewline
Kernel &  & ~71.6 & ~~9.3 & 329.3 &  & ~50.4 & ~~4.9 & 322.8 &  & ~37.8 & ~~2.9 & 319.9 & \tabularnewline
\textbf{Ratio$\times100$} &  & \textbf{~67.3} & \textbf{~49.1} & \textbf{~68.9} &  & \textbf{~89.2} & \textbf{~87.3} & \textbf{~81.8} &  & \textbf{~99.4} & \textbf{~98.2} & \textbf{~82.9} & \tabularnewline
 &  &  &  &  &  &  &  &  &  &  &  &  & \tabularnewline
\hline 
\end{tabular}
\par\end{centering}
\caption{\textsc{Design B}: PDF and CDF. 
Average $L^{p}$ estimation errors $\times1000$ for GTR, kernel estimator and their ratio $\times100$, for $p=1,2$, $\infty$. 
\label{tab:PDF_CDF_B}}
\end{table}

\begin{table}
\begin{centering}
\centering %
\begin{tabular}{lcccccccccccccc}
\hline 
 &  &  & \multicolumn{3}{c}{$n=100$} &  & \multicolumn{3}{c}{$n=250$} &  & \multicolumn{3}{c}{$n=500$} & \tabularnewline
\hline 
 &  &  &  &  &  &  &  &  &  &  &  &  &  & \tabularnewline
 &  &  & |Bias| & Var. & MSE &  & |Bias| & Var. & MSE &  & |Bias| & Var. & MSE & \tabularnewline
\cline{4-14} \cline{5-14} \cline{6-14} \cline{7-14} \cline{8-14} \cline{9-14} \cline{10-14} \cline{11-14} \cline{12-14} \cline{13-14} \cline{14-14} 
 &  &  & \multicolumn{11}{c}{\textsc{(i) Gaussian distribution}} & \tabularnewline
\cline{4-14} \cline{5-14} \cline{6-14} \cline{7-14} \cline{8-14} \cline{9-14} \cline{10-14} \cline{11-14} \cline{12-14} \cline{13-14} \cline{14-14} 
GTR &  &  & 143.9 & ~32.5 & ~53.2 &  & ~88.1 & ~12.3 & ~20.1 &  & ~61.3 & ~~6.0 & ~~9.8 & \tabularnewline
Matzkin (M) &  &  & 249.0 & ~94.4 & 156.4 &  & 180.4 & ~47.0 & ~79.6 &  & 137.5 & ~27.5 & ~46.4 & \tabularnewline
\textbf{GTR/M$\times100$} &  &  & \textbf{~57.8} & \textbf{~34.4} & \textbf{~34.0} &  & \textbf{~48.9} & \textbf{~26.2} & \textbf{~25.2} &  & \textbf{~44.6} & \textbf{~21.8} & \textbf{~21.1} & \tabularnewline
QR &  &  & 176.9 & ~49.0 & ~80.3 &  & 109.4 & ~18.7 & ~30.6 &  & ~77.7 & ~~9.4 & ~15.5 & \tabularnewline
\textbf{GTR/QR$\times100$} &  &  & \textbf{~81.3} & \textbf{~66.4} & \textbf{~66.3} &  & \textbf{~80.6} & \textbf{~66.0} & \textbf{~65.6} &  & \textbf{~79.0} & \textbf{~63.5} & \textbf{~63.1} & \tabularnewline
 &  &  &  &  &  &  &  &  &  &  &  &  &  & \tabularnewline
\cline{4-14} \cline{5-14} \cline{6-14} \cline{7-14} \cline{8-14} \cline{9-14} \cline{10-14} \cline{11-14} \cline{12-14} \cline{13-14} \cline{14-14} 
 &  &  & \multicolumn{11}{c}{\textsc{(ii) $t$ distribution}} & \tabularnewline
\cline{4-14} \cline{5-14} \cline{6-14} \cline{7-14} \cline{8-14} \cline{9-14} \cline{10-14} \cline{11-14} \cline{12-14} \cline{13-14} \cline{14-14} 
GTR &  &  & 202.5 & ~60.0 & 101.0 &  & 144.4 & ~24.7 & ~45.6 &  & 116.3 & ~12.3 & ~25.9 & \tabularnewline
Matzkin (M) &  &  & 307.5 & 165.9 & 260.4 &  & 220.8 & ~80.6 & 129.4 &  & 170.6 & ~47.1 & ~76.2 & \tabularnewline
\textbf{GTR/M$\times100$} &  &  & \textbf{~65.8} & \textbf{~36.2} & \textbf{~38.8} &  & \textbf{~65.4} & \textbf{~30.7} & \textbf{~35.3} &  & \textbf{~68.2} & \textbf{~26.2} & \textbf{~33.9} & \tabularnewline
QR &  &  & 225.5 & ~82.9 & 133.8 &  & 144.2 & ~32.9 & ~53.7 &  & 101.3 & ~16.3 & ~26.6 & \tabularnewline
\textbf{GTR/QR$\times100$} &  &  & \textbf{~89.8} & \textbf{~72.4} & \textbf{~75.5} &  & \textbf{100.2} & \textbf{~75.2} & \textbf{~85.0} &  & \textbf{114.9} & \textbf{~75.6} & \textbf{~97.3} & \tabularnewline
 &  &  &  &  &  &  &  &  &  &  &  &  &  & \tabularnewline
\cline{4-14} \cline{5-14} \cline{6-14} \cline{7-14} \cline{8-14} \cline{9-14} \cline{10-14} \cline{11-14} \cline{12-14} \cline{13-14} \cline{14-14} 
 &  &  & \multicolumn{11}{c}{\textsc{(iii) Gaussian mixture distribution}} & \tabularnewline
\cline{4-14} \cline{5-14} \cline{6-14} \cline{7-14} \cline{8-14} \cline{9-14} \cline{10-14} \cline{11-14} \cline{12-14} \cline{13-14} \cline{14-14} 
GTR &  &  & 198.0 & ~52.8 & ~92.0 &  & 139.5 & ~23.5 & ~43.0 &  & 109.2 & ~10.9 & ~22.8 & \tabularnewline
Matzkin (M) &  &  & 271.4 & 111.5 & 185.2 &  & 192.8 & ~55.3 & ~92.4 &  & 149.3 & ~32.8 & ~55.0 & \tabularnewline
\textbf{GTR/M$\times100$} &  &  & \textbf{~72.9} & \textbf{~47.3} & \textbf{~49.7} &  & \textbf{~72.3} & \textbf{~42.6} & \textbf{~46.5} &  & \textbf{~73.1} & \textbf{~33.3} & \textbf{~41.5} & \tabularnewline
QR &  &  & 223.5 & ~78.7 & 128.7 &  & 143.1 & ~32.3 & ~52.8 &  & 102.8 & ~16.4 & ~26.9 & \tabularnewline
\textbf{GTR/QR$\times100$} &  &  & \textbf{~88.6} & \textbf{~67.1} & \textbf{~71.5} &  & \textbf{~97.5} & \textbf{~72.8} & \textbf{~81.4} &  & \textbf{106.2} & \textbf{~66.6} & \textbf{~84.7} & \tabularnewline
 &  &  &  &  &  &  &  &  &  &  &  &  &  & \tabularnewline
\hline 
\end{tabular}
\par\end{centering}
\caption{\textsc{Design A}: $\widehat{m}(x,e)$. 
Average $L^{p}$ estimation errors $\times1000$ for GTR, Matzkin estimator, QR and related ratios $\times100$. 
\label{tab:CQF_A}}
\end{table}

\begin{table}
\begin{centering}
\centering %
\begin{tabular}{lccccccccccccc}
\hline 
 &  & \multicolumn{3}{c}{$n=100$} &  & \multicolumn{3}{c}{$n=250$} &  & \multicolumn{3}{c}{$n=500$} & \tabularnewline
\hline 
 &  &  &  &  &  &  &  &  &  &  &  &  & \tabularnewline
 &  & |Bias| & Var. & MSE &  & |Bias| & Var. & MSE &  & |Bias| & Var. & MSE & \tabularnewline
\cline{3-13} \cline{4-13} \cline{5-13} \cline{6-13} \cline{7-13} \cline{8-13} \cline{9-13} \cline{10-13} \cline{11-13} \cline{12-13} \cline{13-13} 
GTR &  & 310.5 & 120.7 & 217.1 &  & 274.9 & ~75.0 & 150.6 &  & 231.1 & ~38.5 & ~91.9 & \tabularnewline
Matzkin &  & 571.1 & 575.5 & 901.6 &  & 433.8 & 316.8 & 504.9 &  & 340.6 & 210.3 & 326.3 & \tabularnewline
\textbf{Ratio$\times100$} &  & \textbf{~54.4} & \textbf{~21.0} & \textbf{~24.1} &  & \textbf{~63.4} & \textbf{~23.7} & \textbf{~29.8} &  & \textbf{~67.9} & \textbf{~18.3} & \textbf{~28.2} & \tabularnewline
 &  &  &  &  &  &  &  &  &  &  &  &  & \tabularnewline
\hline 
\end{tabular}
\par\end{centering}
\caption{\textsc{Design B}: $\widehat{m}(x,e)$. 
Absolute bias, variance and MSE ($\times1000$) for GTR, Matzkin estimator and their ratio $\times100$. 
\label{tab:CQF_B}}
\end{table}

\begin{table}
\begin{centering}
\centering %
\begin{tabular}{lcccccccccccccc}
\hline 
 &  &  & \multicolumn{3}{c}{$n=100$} &  & \multicolumn{3}{c}{$n=250$} &  & \multicolumn{3}{c}{$n=500$} & \tabularnewline
\hline 
 &  &  &  &  &  &  &  &  &  &  &  &  &  & \tabularnewline
 &  &  & |Bias| & Var. & MSE &  & |Bias| & Var. & MSE &  & |Bias| & Var. & MSE & \tabularnewline
\cline{4-14} \cline{5-14} \cline{6-14} \cline{7-14} \cline{8-14} \cline{9-14} \cline{10-14} \cline{11-14} \cline{12-14} \cline{13-14} \cline{14-14} 
 &  &  & \multicolumn{11}{c}{\textsc{(i) Gaussian distribution}} & \tabularnewline
\cline{4-14} \cline{5-14} \cline{6-14} \cline{7-14} \cline{8-14} \cline{9-14} \cline{10-14} \cline{11-14} \cline{12-14} \cline{13-14} \cline{14-14} 
GTR &  &  & 131.7 & ~31.3 & ~48.6 &  & ~80.4 & ~12.1 & ~18.6 &  & ~55.6 & ~~5.7 & ~~8.8 & \tabularnewline
Matzkin (M) &  &  & 221.7 & ~81.9 & 131.1 &  & 159.2 & ~39.8 & ~65.1 &  & 120.6 & ~22.6 & ~37.1 & \tabularnewline
\textbf{GTR/M$\times100$} &  &  & \textbf{~59.0} & \textbf{~38.0} & \textbf{~37.0} &  & \textbf{~50.0} & \textbf{~30.0} & \textbf{~28.0} &  & \textbf{~46.0} & \textbf{~25.0} & \textbf{~24.0} & \tabularnewline
 &  &  &  &  &  &  &  &  &  &  &  &  &  & \tabularnewline
\cline{4-14} \cline{5-14} \cline{6-14} \cline{7-14} \cline{8-14} \cline{9-14} \cline{10-14} \cline{11-14} \cline{12-14} \cline{13-14} \cline{14-14} 
 &  &  & \multicolumn{11}{c}{\textsc{(ii) $t$ distribution}} & \tabularnewline
\cline{4-14} \cline{5-14} \cline{6-14} \cline{7-14} \cline{8-14} \cline{9-14} \cline{10-14} \cline{11-14} \cline{12-14} \cline{13-14} \cline{14-14} 
GTR &  &  & 173.1 & ~58.6 & ~88.6 &  & 123.3 & ~22.8 & ~38.0 &  & ~97.9 & ~11.0 & ~20.6 & \tabularnewline
Matzkin (M) &  &  & 243.1 & ~97.0 & 156.1 &  & 169.1 & ~43.8 & ~72.4 &  & 127.8 & ~24.5 & ~40.9 & \tabularnewline
\textbf{GTR/M$\times100$} &  &  & \textbf{~71.0} & \textbf{~60.0} & \textbf{~57.0} &  & \textbf{~73.0} & \textbf{~52.0} & \textbf{~53.0} &  & \textbf{~77.0} & \textbf{~45.0} & \textbf{~50.0} & \tabularnewline
 &  &  &  &  &  &  &  &  &  &  &  &  &  & \tabularnewline
\cline{4-14} \cline{5-14} \cline{6-14} \cline{7-14} \cline{8-14} \cline{9-14} \cline{10-14} \cline{11-14} \cline{12-14} \cline{13-14} \cline{14-14} 
 &  &  & \multicolumn{11}{c}{\textsc{(iii) Gaussian mixture distribution}} & \tabularnewline
\cline{4-14} \cline{5-14} \cline{6-14} \cline{7-14} \cline{8-14} \cline{9-14} \cline{10-14} \cline{11-14} \cline{12-14} \cline{13-14} \cline{14-14} 
GTR &  &  & 178.8 & ~50.7 & ~82.7 &  & 126.5 & ~22.2 & ~38.2 &  & ~99.5 & ~10.0 & ~19.9 & \tabularnewline
Matzkin (M) &  &  & 233.3 & ~97.8 & 152.3 &  & 167.1 & ~46.0 & ~74.0 &  & 129.5 & ~28.4 & ~45.1 & \tabularnewline
\textbf{GTR/M$\times100$} &  &  & \textbf{~77.0} & \textbf{~52.0} & \textbf{~54.0} &  & \textbf{~76.0} & \textbf{~48.0} & \textbf{~52.0} &  & \textbf{~77.0} & \textbf{~35.0} & \textbf{~44.0} & \tabularnewline
 &  &  &  &  &  &  &  &  &  &  &  &  &  & \tabularnewline
\hline 
\end{tabular}
\par\end{centering}
\caption{\textsc{Design A} (trimming): $\widehat{m}(x,e)$. 
Absolute bias, variance and MSE ($\times1000$) for GTR, Matzkin estimator and QR, and related ratios $\times100$.
\label{tab:CQF_A-1}}
\end{table}

\begin{table}
\begin{centering}
\centering %
\begin{tabular}{lccccccccccccc}
\hline 
 &  & \multicolumn{3}{c}{$n=100$} &  & \multicolumn{3}{c}{$n=250$} &  & \multicolumn{3}{c}{$n=500$} & \tabularnewline
\hline 
 &  &  &  &  &  &  &  &  &  &  &  &  & \tabularnewline
 &  & |Bias| & Var. & MSE &  & |Bias| & Var. & MSE &  & |Bias| & Var. & MSE & \tabularnewline
\cline{3-13} \cline{4-13} \cline{5-13} \cline{6-13} \cline{7-13} \cline{8-13} \cline{9-13} \cline{10-13} \cline{11-13} \cline{12-13} \cline{13-13} 
GTR &  & 180.6 & ~82.0 & 114.6 &  & 157.0 & ~42.7 & ~67.3 &  & 131.2 & ~20.0 & ~37.2 & \tabularnewline
Matzkin &  & 318.6 & 157.1 & 258.6 &  & 216.7 & ~65.4 & 112.3 &  & 160.8 & ~35.5 & ~61.3 & \tabularnewline
\textbf{Ratio$\times100$} &  & \textbf{~57.0} & \textbf{~52.0} & \textbf{~44.0} &  & \textbf{~72.0} & \textbf{~65.0} & \textbf{~60.0} &  & \textbf{~82.0} & \textbf{~56.0} & \textbf{~61.0} & \tabularnewline
 &  &  &  &  &  &  &  &  &  &  &  &  & \tabularnewline
\hline 
\end{tabular}
\par\end{centering}
\caption{\textsc{Design B} (trimming): $\widehat{m}(x,e)$. 
Absolute bias, variance and MSE ($\times1000$) for GTR and Matzkin estimator, and their ratio $\times100$. 
\label{tab:CQF_B-1}}
\end{table}

\clearpage{}

\end{document}